\tikzstyle{block1} = [rectangle, draw, thick,fill=blue!10, text width=4.5em, text centered, rounded corners, minimum height=2em, minimum width = 5cm]
\tikzstyle{block2} = [rectangle, draw, thick,fill=blue!10, text width=2em, text centered, rounded corners, minimum height=2em]
\tikzstyle{line} = [draw, -latex']
\newcommand{\arvind}[1]{\todo[color=blue!10,size=\tiny]{#1}}
\newcommand{\norm}[1]{\left\lVert#1\right\rVert}
\newcommand{\mcA}{{\mathcal{A}}}
\newcommand{\mcK}{{\mathcal{K}}}
\newcommand{\mcD}{{\mathcal{D}}}
\newcommand{\mcP}{{\mathcal{P}}}
\newcommand{\ikrep}{{(k)}}
\newcommand{\Ff}{\mathcal{F}}
\newcommand{\Gg}{\mathcal{G}}
\newcommand{\PP}{\mathbb{P}}
\newcommand{\EE}{\mathbb{E}}
\newcommand{\II}{\mathbb{I}}
\newcommand{\N}{\mathbb{N}}
\newcommand{\ik}{{i}}
\newcommand{\mfT}{\mathfrak{T}}
\newcommand{\mfN}{{\mathfrak{N}}}
\newcommand{\tT}{{t\in\mfT}}
\newcommand{\tgamma}{{\tilde{\gamma}}}
\newcommand{\Id}{{\mathbb{1}}}
\newcommand{\R}{{\mathbb{R}}}
\renewcommand{\arvind}[1]{{}}
\newtheorem{assumption}{Assumption}
\title{A Mean-Field Game Approach to Equilibrium Pricing in Solar Renewable Energy Certificate Markets\thanks{SJ would like to acknowledge the support of the Natural Sciences and Engineering Research Council of Canada
(NSERC), [funding reference numbers RGPIN-2018-05705 and RGPAS-2018-522715]}}
\author{Arvind Shrivats\thanks{Operations Research \& Financial Engineering, Princeton University, Princeton, NJ, United States (\email{shrivats@princeton.edu}})
\and Dena Firoozi\thanks{Department of Decision Sciences, HEC Montréal, Montreal, QC, Canada (\email{dena.firoozi@hec.ca}}), 
\and Sebastian Jaimungal\thanks{Department of Statistical Sciences, University of Toronto, Toronto, ON, Canada (\email{sebastian.jaimungal@utoronto.ca}})}
\begin{document}

\maketitle
\begin{abstract}
Solar Renewable Energy Certificate (SREC) markets are a market-based system that incentivizes solar energy generation. A regulatory body imposes a lower bound on the amount of energy each regulated firm must generate via solar means, providing them with a tradeable certificate for each MWh generated. 
Firms seek to navigate the market optimally by modulating their SREC generation and trading rates. As such, the SREC market can be viewed as a stochastic game, where agents interact through the SREC price. We study this stochastic game by solving the mean-field game (MFG) limit with sub-populations of heterogeneous agents. Market participants optimize costs accounting for trading frictions, cost of generation, non-linear non-compliance costs, and generation uncertainty. Moreover, we endogenize SREC price through market clearing. 
We characterize firms' optimal controls as the solution of McKean-Vlasov (MV) FBSDEs and determine the equilibrium SREC price. We establish the existence and uniqueness of a solution to this MV-FBSDE, and prove that the MFG strategies form an $\epsilon$-Nash equilibrium for the finite player game. Finally, we develop a numerical scheme for solving the MV-FBSDEs and conduct a simulation study.
\end{abstract}

\begin{keywords}
Commodity Markets, SREC, Cap and Trade, Mean-Field Games, Variational Analysis, McKean-Vlasov, Market Design  
\end{keywords}

\begin{AMS}
  37A50, 49J50, 49N70, 34H05, 49N90, 93C95
\end{AMS}

\section{Introduction} \label{sec:introduction}

Climate change has emerged as the preeminent global issue of the 21st century, resulting (belatedly) in the increased adoption of policy solutions to promote clean energy generation, or to disincentivize the production of energy via means that produce harmful emissions. A carbon tax is the most popular and well known policy solution which attempts to do the latter.

Among the set of policy implementations aimed at combating climate change are the class of so called market-based solutions. The most well-known of these policies are carbon cap and trade (C\&T) markets (also colloquially referred to as emissions markets). An alternative approach is known as `renewable energy certificate' (REC) markets. In these markets, a regulator imposes a lower bound on the proportion of energy generated from renewable sources that each regulated load serving entity (LSE) sells to the electrical grid, over a given time period, called a compliance period. Any generator (who may be a regulated LSE, but is not necessarily) is granted a certificate for each MWh of energy it generates from said renewable sources. Each regulated LSE must submit certificates in the amount of the lower bound prescribed to it by the regulator, while subject to a monetary penalty for failing to submit the required certificates. This penalty is typically per unit of non-compliance. That is, falling $m$ certificates short of the requirement results in a penalty $m$ times more severe than falling one certificate short of the requirement. In practice, these markets consist of many consecutive, non-overlapping compliance periods. 

The certificates that generators are granted are tradable. As such, a regulated LSE need not necessarily generate their certificates themselves. Instead, they may purchase the required certificates for compliance on the marketplace. Regulated firms therefore must choose between producing electricity from renewable means themselves, purchasing the certificates on the market, or a mix of both. Naturally, firms interact with one another through their behaviours in the REC marketplace and thus each regulated LSE has an impact on every other regulated LSE. 

REC markets may also be targeted to a specific type of renewable energy, in order to encourage growth of that particular energy type. The most common of the targeted REC markets are Solar REC (SREC) markets, which have been implemented in many areas in the northeastern United States and Europe (most notably, Sweden and Norway). We primarily focus on SREC markets in this work. In particular, we aim to understand how regulated LSEs should behave in SREC markets, with respect to their mix of SREC generation and trading activity such that they navigate the SREC market at minimum cost. For clarity, we remark here that we often refer to the regulated LSEs of an SREC market as regulated agents or firms. In order to model this in a complete manner, we must consider the potential interactions between agents through their marketplace actions, as well as the possibly heterogeneous nature of the regulated LSEs. 

The existing literature on SREC markets focuses heavily on the price formation of certificates. \cite{coulon_khazaei_powell_2015} proposes a stochastic model for economy-wide SREC generation, calibrating it to the New Jersey SREC market, and ultimately solves for the equilibrium SREC price. They further investigate the role of regulatory parameters and discuss potential takeaways for the efficient design of SREC markets. Other works, most notably \cite{amundsen2006price} and \cite{hustveit2017tradable}, discuss the volatility inherent in REC prices and analyze the price dynamics of these systems and the reasons for volatility. In \cite{khazaei2017adapt}, the authors go one step further, proposing an alternative SREC market design in order to stabilize SREC prices. \cite{shrivats2019behaving} flips the typical focus of works in SREC markets, using an exogenous price process as an input to model optimal behaviour on the part of a regulated agent in an SREC market, conducting numerical simulation studies to understand and characterize the nature of the firm's optimal controls.

There is considerably more literature on carbon C\&T markets, which REC markets resemble. In \cite{seifert_uhrig-homburg_wagner_2008}, the authors represent firm behaviour as the solution to an optimal control problem from the perspective of a central planner overseeing the emissions market in a single-period C\&T market and aiming to optimize total expected societal cost. The authors then characterize and solve for the carbon allowance price process. This is further extended to a multi-period C\&T market in \cite{hitzemann2018equilibrium}, which leads to the equilibrium carbon allowance price being expressed as a strip of European binary options written on economy-wide emissions. Agents' optimal strategies and properties of allowance prices are also studied by \cite{carmona2010market} and \cite{carmona_fehr_hinz_2009} via functional analysis arguments, within a single compliance period setup. Both works  make significant contributions through detailed quantitative analyses of potential shortcomings of these markets and their alternatives (in \cite{carmona2010market}) and of the certificate price and its properties (in \cite{carmona_fehr_hinz_2009}). In each of these works, the authors argue for the equivalence between the solution to the optimization problem for the central planner and an equilibrium solution whereby each individual agent optimizes their profit. These works do not, however, incorporate trading frictions into their model. Additionally, they focus on the implications of the solution to the optimization problem posed on the certificate price, as opposed to the nature of the optimal behaviour of the regulated agents themselves, which is the primary focus of this work. There are also notable works on structural models for financial instruments in emissions markets, such as \cite{howison_schwarz_2012} and \cite{carmona_coulon_schwarz_2012}.

In this work, we aim to understand how a regulated LSE should navigate an SREC market. As such, we formulate a single-period SREC market as an $N$-player stochastic game, with each player representing a regulated LSE who controls their planned SREC generation and SREC trading behaviour, aiming to navigate the SREC system at minimum cost. SREC generation is subject to some level of randomness, to represent the inherent variation in weather that impacts ones ability to generate solar energy. Moreover, we do not model SREC prices exogenously, rather, we  derive prices endogenously through enforcing market clearing.
Rather than solving the $N$-player stochastic game directly, we turn towards a mean-field game (MFG) approach. This approach aims to approximate the optimal behaviours of large populations of interacting agents in a stochastic game, making it a natural fit to apply to SREC markets. 

We use methods from variational analysis (similar to \cite{FCJ-Convex2018}) to establish existence and uniqueness of a solution to an arbitrary firm's optimization problem, given a particular mean-field distribution. We then fully characterize the optimal controls of a regulated agent and the corresponding mean-field distribution through a McKean-Vlasov FBSDE, for which we prove the existence and uniqueness of solutions. We further determine the equilibrium SREC price through the clearing condition -- which introduces a number of complexities to the MFG -- and prove that the MFG strategies have the $\epsilon$-Nash property for the finite player game. Finally, we develop a numerical scheme for solving the resulting MV-FBSDE and carry out numerical analyses  to better understand the nature of the solutions and their associated implications on certificate prices. 

MFGs themselves are very well-studied. They were originally developed in the works \cite{huang2007large,huang2006large}, and \cite{Lasry2006a, Lasry2006b, lasry2007mean}. Many extensions and generalizations of mean-field games and their applications exist. Among them include the probabilistic approach to MFGs and MFGs with common noise and master equation, detailed extensively in \cite{carmona2018probabilistic, carmona2013probabilistic, MasterEq2019, BensoussanBook2013}. 

In addition to these seminal works introducing the idea of MFGs and approaches to solving them, the numerical analysis of MFGs has also been well-studied. In particular, \cite{angiuli2019cemracs} and \cite{chassagneux2019numerical} study numerical approaches to the solutions of McKean-Vlasov forward-backward stochastic differential equations (MV-FBSDEs), which often characterize the solutions to MFGs. \cite{aziz2016mean} proposes a computational methodology to solve a set of forward-backward PDEs related to a MFG formulation of cellular communication networks.


MFGs have found numerous applications in engineering (\cite{aziz2016mean, KIZILKALE2019,Tembine2017}), economics (\cite{Gomes2014,Gomes2016}), and  in particular mathematical finance including optimal execution problems and portfolio trading (\cite{CarmonaLacker2015, Mojtaba2015, ThesisDena2019, FirooziISDG2017, FirooziPakniyatCainesCDC2017, Cardaliaguet2018,Lehalle2019, casgrain2018mean, Horst2018, David-Yuri2020}), systemic risk (\cite{CarmonaSysRisk2015,CarmonaSysRisk2018,JaimungalSysRisk2017}), commodities markets (\cite{aid2017coordination, Mouzouni2019,Sircar2017, brown2017oil, ludkovski2017mean}), and even cryptocurrencies (\cite{li2019mean}) -- just to name a few important contributions.

There has also been recent development in the field of MFGs with price formation. In particular, the contemporaneous works of \cite{fujii2020mean} and \cite{gomes2020mean} have added to this area. The former studies a generic asset pricing problem, endogenously pricing the asset by enforcing a trading balance  among the agents in the market. This includes the analysis of MFGs with common noise, and MFGs with sub-groups of agents. The authors characterize the solution to this problem as the solution to a MV-FBSDE. The latter presents an MFG price formation model where the supply for the commodity is a given deterministic function, and the clearing condition between supply and demand results in the interpretation of price as a Lagrange multiplier. The stochastic case is being further expanded upon in a follow-up (\cite{gomes2020mean2}) for linear-quadratic games, with the commodity price arising as the solution to a stochastic differential equation whose coefficients depend on the solution of a system of ordinary
differential equations.  We discuss the positioning of our work relative to these contemporaries below.

We first discuss our work relative to the more general literature on MFGs and emissions markets. Here, there are several key differences between our work and the extant literature. To the authors' knowledge, this is the first work applying MFG theory to SREC or C\&T markets in order to understand the dynamics of the certificate price process, accounting for trading frictions and generation costs, as well as the optimal behaviours of agents regulated by these markets. We believe insights into the features of these markets are of interest for regulatory bodies and regulated firms alike. As previously mentioned, this work separates itself from much of the analogous work on C\&T markets through its focus on regulated firms' optimal behaviour in addition to the certificate price process; the former is not a focus of much of the prior work in this field. As well, to the authors' knowledge, the extant literature does not incorporate trading frictions into their modelling of the costs firms face. We, however, explicitly account for trading frictions, reflecting the fact that trading is not costless in these markets. We also endogenize the certificate price process through a market clearing condition, allowing us to model price with no explicit assumptions on its drivers. Finally, using the MFG methodology  allows us to capture the interactions between agents arising through their trading and generation behaviour, and solve each agent's individual optimization problem directly, rather than solving a global problem from the perspective of a central planner aiming to minimize overall economy-wide costs. 

From the theoretical point of view, while the main reference we use is (\cite{carmona2013probabilistic}), we note that the problem we consider does not directly fit into their framework. This is due to a market clearing condition we enforce, as well as a cross dependence between the mean field distribution and the controls that we permit in our running costs (in particular the assumption in \cite[Prop. 3.7.]{carmona2013probabilistic}) does not hold for our problem.). 
Given the recency of \cite{fujii2020mean} and \cite{gomes2020mean}, as well as the similar subject matter, it is worth explicitly differentiating our contributions. While these works also involve MFGs with clearing conditions to endogenize price, our focus is on the specific application of MFG theory to SREC markets, as opposed to a generic asset, as in \cite{fujii2020mean}, or a commodity, as in \cite{gomes2020mean}. This results in notable differences in the models between these works and the contents of the paper. \cite{gomes2020mean} focuses largely on linear-quadratic problems, while, due to the nature of non-compliance penalties in SREC markets and the nonlinear dependence of the SREC price on the mean field distribution, our problem is decidedly not LQG. In their model, the supply of the commodity being priced is deterministic. In this work, however, the supply of the asset being priced is the controlled process itself, which is stochastic. While \cite{fujii2020mean} includes price formation in MFGs, as we do, we also provide a proof that fixed-point problem for the measure-flow solves the MFG, as well as existence and uniqueness proofs,  prove the $\epsilon$-Nash property for the finite-player game, and provide a numerical scheme for implementing the resulting MV-FBSDE. In addition, we carry out a detailed simulation study to understand the key drivers of regulated firm behaviour in the SREC market.
The remainder of the paper is structured as follows. In Section \ref{sec:model}, we formalize the SREC market and present our model for the regulated firms. Section \ref{sec:optimization_problem} formally poses the optimization problem that each agent faces in the $N$-player stochastic game. We introduce the MFG limit of this problem and derive the form of the optimal controls given a particular mean-field distribution in Section \ref{sec:mfg_problem}, using techniques from variational analysis. Further, we  characterize the MFG solution as a function of the solution to an MV-FBSDE, and prove existence and uniqueness of the solutions. In Section \ref{sec:epsilon-nash}, we relate the solution of the MFG to the solution of the $N$-player stochastic game and prove the $\epsilon$-Nash property. Finally, in Section \ref{sec:experiments}, we perform numerical simulations and discuss the key takeaways of the structure of the optimal controls across agents, as well as the implications these controls have on the certificate price and economy-wide compliance probabilities. \Cref{appendix:proofs} and \Cref{appendix:algorithm} contain, respectively,  the proofs of propositions and the details of the numerical implementation for solving the MV-FBSDE.

\section{The Model} \label{sec:model}

\subsection{SREC Market Rules and Assumptions}

We assume the following rules for the SREC market, which are exogenously specified and fixed (and identical to the rules assumed in \cite{shrivats2019behaving}). In an $n$-period framework, a firm must submit $(R_1, ..., R_n)$ SRECs at the end of the compliance periods $[0, T_1], ..., [T_{n-1}, T_n]$, respectively. For every SREC below $R_i$ at $T_i$ that a firm submits, they must pay $P_i$.

Firms receive SRECs through the generation of electricity through solar means. One SREC corresponds to one MWh of electricity produced via solar energy. A firm may also purchase or sell SRECs on the market. Throughout this work, we assume firms may bank leftover SRECs (that are unused for compliance) indefinitely. This is a simplification of reality -- in most cases, there are limits on how many times an SREC can be banked. However, this simplification allows us to greatly reduce the dimensionality of our state space, allowing the problem to be computationally tractable. After $T_n$, all firms forfeit any remaining SRECs. $T_n$ can be thought of as `the end of the world' -- there are no costs associated with any time after this. In the New Jersey SREC market, the largest and most developed in North America, the current penalty for non-compliance is \$$258$ per unit of lacking SREC, with the current SREC price slightly below that, at roughly \$$230$ per SREC, as of February 13, 2020, from \cite{SRECTrade}. For other practical SREC market details, we refer the interested reader to \cite{coulon_khazaei_powell_2015} and \cite{shrivats2019behaving}.

In this work, we focus heavily on a single-period SREC market framework. In this setting, the rules above apply with $n = 1$. Naturally, there is no concept of banking unused SRECs in a single-period framework. For notational convenience, we remove the subscripts in the quantities described above when discussing the single-period SREC market. That is, the regulated firm is required to submit $R$ 
SRECs at time $T$, representing their required production for the compliance period $[0, T]$. A penalty $P$ is imposed for each missing SREC at time $T$. There are assumed to be no costs after time $T$.

\subsection{Model Setup and Motivation}

In this subsection, we formulate the SREC market detailed above as a finite-player stochastic game, where agents are simultaneously striving to achieve maximum profit while interacting with one another through their SREC trading activities. We define the SREC price endogenously; that is, the SREC price is a function of the supply and demand for SRECs across the regulated firms. We allow for agent heterogeneity by defining a finite number of sub-populations of agents with unique behavioural parameters.

We work on the filtered probability space $(\Omega, \mathcal{F}, (\Ff_t)_{t \in \mfT}, \PP)$. All processes 
are assumed to be $\Ff$-adapted
unless otherwise stated. The filtration is defined further later in this section. 
We assume there are a finite number ($N$)  of firms and index them by $i \in \mfN := \{1, ..., N\}$. We further specify that each firm belongs to a sub-population (or class) 
and
index  the classes by $k \in \mcK := \{1, ..., K\}$, with $K \leq N$. All agents within a sub-population are assumed to be homogeneous and interchangeable, but are heterogeneous across classes. We also define
\begin{equation}
\mfN_k := \big\{ i \in \mathfrak{N}\; |\; \text{ agent $i$ belongs to sub-population $k$}\big\},
\end{equation}
and $N_k := |\mfN_k\rvert$ to be the number of firms in class $k$ for all $k \in \mcK$, with $\sum_{k \in \mcK} N_k = N$.
\begin{assumption}\label{ass:proportion}
The proportion of the total population of agents belonging to each class $k$ converges to a constant as the number of firms ($N$) increases. That is,
\begin{equation}
\lim_{N \rightarrow \infty} \tfrac{N_k}{N} = \pi_k \in (0, 1) \quad  a.s., \; \forall k\in\mcK \label{eq:proportion}
\end{equation} 
\end{assumption}

We allow the requirement $R$ to vary by sub-population, and refer to it henceforth by $R^k$, $k \in \mcK$. Agent $i$ (belonging to arbitrary sub-population $k$) can control their planned generation rate (SRECs/year) $(g_t^{i})_{\tT}$ (where $\mfT:=[0,T]$) at any given time  and their trading rate (SRECs/year) $(\Gamma_t^{i})_{\tT}$ at any given time. The processes $g^{i}$ and  $\Gamma^{i}$ constitute the control actions of firm $i$. Trading rates may be positive or negative, reflecting that firms can either buy or sell SRECs at the prevailing market rate for SRECs, respectively. Generation is assumed to be non-negative. We denote the collection of generation rates and trading rates by $\bm{g_t} \coloneqq (g_t^1, \cdots, ,g_t^N)$ and $\bm{\Gamma_t} \coloneqq (\Gamma_t^1, \cdots, \Gamma_t^N)$, respectively.

In an arbitrary time period $[t_1, t_2]$, the firm aims to generate $G^{i}_{t_1,t_2}:=\int_{t_1}^{t_2} g_t^i \,dt$, but in fact generates $G^{i, (r)}_{t_1,t_2} := \int_{t_1}^{t_2} g_t^i dt + \int_{t_1}^{t_2} \sigma_t^k \,dW_t^i$, where $\sigma_t^k$ is a deterministic function of time. The diffusive term $\sigma_t^k dW_t^{i}$ may be interpreted as the generation rate uncertainty at $t$ due to, e.g., stochasticity in the intensity of sunlight. We assume that a firm has a deterministic baseline  generation level $h_t^k$ (SRECs/year) and generating at a rate different from the baseline incurs a penalty.
For example, a firm may choose to rent solar panels from another generator in order to increase their SREC generation, which would carry some cost. Methods similar to \cite{coulon_khazaei_powell_2015} may be used to estimate $h_t^k$ and $\sigma_t^k$. We assume that $0 \leq h_t^k < \infty$ for all $t$. 

The firms controlled inventory (with controls $g_t^i, \Gamma_t^i$) is denoted by $X^{i, [N]}=(X_t^{i, [N]})_{t\in\mfT}$, and satisfies the SDE
\begin{equation}
dX_t^{i, [N]} = (g_t^i + \Gamma_t^i)\, dt + \sigma_t^k dW_t^i, \hspace{5mm}\text{ $\forall i \in \mfN_k$} \label{eq:state_dynamics}
\end{equation}
where $W = \{W^i = (W_t^i)_{t \in \mathfrak{T}}, i \in \mathfrak{N}\}$ is a set of N independent standard Brownian motions,
and $W^i$ is progressively measurable with respect to
the filtration $\mathcal{F}^W \coloneqq (\mathcal{F}_t^W)_{t\in \mfT} \subset \mathcal{F}$ generated by $W$. Each firm has an initial inventory of SRECs, given by the collection of random variables $\{\xi^i\}_{i = 1}^N$. Naturally, these choices imply that SRECs are fractional and infinitely divisible in our model.
Moreover the empirical distribution $\mu^{[N]}_t$ of $X = \{X^{i, [N]} = (X_t^{i, [N]})_{t \in \mathfrak{T}}, i\in \mathfrak{N}\}$ is defined as
\begin{equation}
    \mu^{[N]}_t(dx) = \frac{1}{N} \sum_{i=1}^N \delta_{X_t^{i, [N]}}(dx), 
\end{equation}
where $\delta_y(\cdot)$ denotes the Dirac measure with unit point mass at $y$. 
\begin{assumption} \label{IntialStateAss}
The initial states $\{\xi^i\}_{i = 1}^N$ 
are identically distributed, mutually independent, and independent of $\mathcal{F}^W$. Moreover, $\sup_{i} \mathbb{E}[\Vert \xi^i\Vert^2] \leq c < \infty $, $i \in\mfN$, with $c$ independent of $N$. 
\end{assumption} 


We denote the equilibrium SREC price process by  $S^{\mu^{[N]}}=(S_t^{\mu^{[N]}})_{t\in\mfT}$. With this notation, we emphasize the dependence of the SREC price on the controlled states of all agents; indeed, the SREC market (and accordingly, the equilibrium SREC price) is the mechanism by which agents interact with one another. We do not make any explicit assumptions about the dynamics of $S_t^{\mu^{[N]}}$.
Instead, we derive its form endogenously in \Cref{sec:SREC_price} through the clearing condition, i.e. that trading must be zero-sum among all agents. 

We next denote by $\Gg^i= (\Gg_t^i)_{t \in \mathfrak{T}}$ the filtration that an individual adapts their strategy to, and is defined as  
\begin{equation}\label{individualFilt}
    \Gg_t^i := \sigma\left((X_u^{i, [N]})_{u \in [0, t]}\right)  \vee \sigma\left(\left(S_u^{\mu^{[N]}}\right)_{u \in [0, t]}\right),%
\end{equation}
which is the sigma algebra generated by the $i$-th firm's SREC inventory path and the SREC price path. Note we assume that all firms have knowledge of the the initial distribution (but not the actual value) of other firms' SRECs; that is, firms have knowledge of $\mu_0^{(k)}$, $\forall \;k\in\mcK$. 
The full filtration $\Ff=(\Ff_t)_{t\ge0}$ is defined as $\Ff_t=\bigvee_{i\in\mfN} \Gg_t^i$. 
We define the set of square integrable controls
\begin{equation}
    \mathbb{H}_t^2 := \left\{\left. (g, \Gamma) : \Omega \times \mathfrak{T} \rightarrow \R^2 \; \right|\; \EE\left[{\textstyle\int_0^T} [(g_t)^2+(\Gamma_t)^2]\, dt \right] < \infty\right\}.
\end{equation}

\begin{assumption} \label{ass: MinorContrAction} The set of admissible controls for firm $i\in\mfN$  is
\begin{align}
    \mcA^i := \left\{ (g, \Gamma) \in \mathbb{H}_t^2\,\, \text{s.t.} \,\, g_t \geq 0\, \text{ for all } \, t \in \mathfrak{T}\,\, \text{and} \,\, (g, \Gamma) \text { $\Gg^i$-adapted}\,\, \right\}. \label{firmsAdmissibleCntrl}
\end{align}
\end{assumption}
This is a closed and convex set. As an individual firm cannot observe another firms' SREC inventories, the restriction in the set above is to decentralized $\Gg^i$-adapted strategies. We finally denote the set of the admissible  strategies for all other agents by
\[
\mcA^{-i} := \bigtimes_{j \in \mathfrak{N},\, j \neq i} \mcA^j, \qquad \forall\,i\in\mfN.
\]

\section{Constructing the Agents Optimization Problem} \label{sec:optimization_problem}

Each agent chooses their controls with the goal of minimizing the cost (equivalently, maximizing the profit) they incur through the SREC market over the time span $[0, T]$. 
Specifically, the $i$-th firm (belonging to sub-population $k$, i.e., $i\in\mfN_k$) aims to minimize the cost functional $J^{i}: \mcA^i \rightarrow \R$ where
\begin{equation}
J^{i} (\bm{g}, \bm{\Gamma})= \EE\left[ \int_0^T \left(C^k(g_u^i, h_u^k) + H^k(\Gamma_u^i) + S_u^{\mu^{[N]}} \,\Gamma_u^i \right) du + P(R^k - X_T^{i, [N]})_+\right].
\label{eq:agentCostFn}
\end{equation} 


The agent's objective comprises of four distinct terms. The first corresponds to costs associated with planned SREC generation. Specifically, the agent incurs the cost $C^k(g, h)$ per unit time for its level of planned SREC generation. We choose 
\[
C^k(g,h) := \tfrac{1}{2}\zeta^k (g - h)^2.
\]
 A related problem is studied in \cite{aid2017coordination} in the context of expanding solar capacity where costs are quadratic. This is both differentiable and convex, both of which are desirable properties for our analysis. Our cost is best interpreted as a firm renting solar capacity to increase their planned SREC generation rate, as opposed to an investment cost where they would increase their long-run SREC generation capacity. 

In practice, when firms choose to expand their solar energy generation, they often do so by undertaking projects to build or acquire solar energy generation facilities, which they then operate. This increases their ability to generate solar energy (and thus SRECs) on an ongoing basis. In our model, a firms planned generation choice does not have a long-term impact on their baseline generation rate, as that extension is left for future work.


The second term corresponds to a trading speed penalty. The firm incurs a trading penalty of $H^k(\Gamma)$, per unit time. This induces a constraint on their trading speed. Specifically, we choose 
\[
H^k(\Gamma) = \tfrac{1}{2}\gamma^k (\Gamma)^2, \qquad \gamma^k > 0.
\]
This cost introduces  a key difference between our model and the extant literature in the C\&T work. As mentioned in Section \ref{sec:introduction}, prior work in this field do not incorporate trading frictions into their model (see \cite{hitzemann2018equilibrium}, \cite{seifert_uhrig-homburg_wagner_2008}, \cite{carmona_fehr_hinz_2009}, and \cite{carmona2010market}). 

The third term corresponds to the cost (revenue) generated when purchasing (selling) an SREC on the market, with the firm paying (receiving) the equilibrium SREC price $S_t^{\mu^{[N]}}$. 

The fourth and final term corresponds to the non-compliance penalty the firm faces if they fail to submit the required number of SRECs at the end of the compliance period. We allow the requirement to vary based on sub-population. In doing so, we can also incorporate the participants in SREC markets who do not have an RPS obligation at all, but do possess the means to generate SRECs, which they may sell on the market. We observe that the non-compliance penalty is not differentiable at $X_T^{i,[N]} = R^k$.  To avoid technical issues, we introduce a `regularized' version of the non-compliance penalty.
We require this regularized version of the non-compliance penalty to be convex, once differentiable, and twice differentiable a.e.. Requiring the former maintains the convexity of the optimization problem, and requiring the latter ensures our functional is G\^ateaux differentiable everywhere. There are many suitable regularizations; one such example is
\begin{equation}
F_\delta(x) = \begin{cases}
0, &\text{$x < -\delta$}, 
\\
\tfrac{(x+\delta)^2}{4\delta}, &\text{$|x| \leq \delta$}, 
\\
x, &\text{$x  > \delta$},
\\
\end{cases} \label{eq:regularizedNCC}
\end{equation}
where $\delta > 0$. It is obvious that $F_\delta(R^k - X_T^{i, [N]})$ and $(R^k - X_T^{i, [N]})_+$ agree everywhere except $|x| < \delta$. As $\delta \rightarrow 0$, the length of the domain on which $F_\delta(R^k - X_T^{i, [N]})$ disagrees with $(R^k - X_T^{i, [N]})_+$ tends to $0$. This choice is clearly convex, once differentiable everywhere, and almost everywhere twice differentiable. 

With the specific form of the various cost, the performance criterion \eqref{eq:agentCostFn}  becomes
\begin{equation}
J^{i} (\bm{g}, \bm{\Gamma}) = \EE\left[ \int_0^T \left(\tfrac{\zeta^k}{2} (g_u^i - h_u^k)^2 + \tfrac{\gamma^k}{2} (\Gamma_u^i)^2 + S_u^{\mu^{[N]}} \Gamma_u^i \right) du + P\, F_{\delta} (R^k - X_T^{i, [N]})\right], \label{eq:agentCostFnRegularized}
\end{equation} 
for agents $i\in\mfN_k$. 

To summarize the key notation introduced in the current and previous section for the readers benefit, we present Table \ref{tbl:Notation} below.
\begin{table}[bt]
\centering
\begin{threeparttable}
 \begin{tabular}{lll}
 \hline
 Quantity & Description & Units \\
   \hline
  $g_t^{i}$ & Planned generation rate for $i$-th agent & SRECs / year \\
 $\Gamma_t^{i}$ & Trading rate for $i$-th agent & SRECs / year\\
 $\zeta^k$ & Generation cost parameter for class $k$ firms & \$ / SRECs$^2$ \\
 $\gamma^k$ & Trading speed penalty parameter for class $k$ firms & \$ / SRECs$^2$ \\
 $\sigma_t^k$ & Volatility function for class $k$ firms & SRECs / $\sqrt{\text{year}}$ \\
 $h_t^k$ & Baseline generation rate for class $k$ firms & SRECs / year \\
 $S_t^{\mu^{[N]}}$ & SREC market price at time $t$& \$ / SREC \\
 $X_t^{i}$ & SREC inventory of $i$-th agent at time $t$ & SRECs \\
 $T$ & Terminal time of period & Years \\
 $P$ & Penalty per unit of non-compliance & \$ / lacking SREC \\
 $R^k$ & SREC Requirement for class $k$ firms & SRECs \\
\hline
\end{tabular}
\end{threeparttable}
\caption{Notation}
\label{tbl:Notation}
\end{table}

Agents within a sub-population have the same cost parameters, and consequently,  those agents act, in equilibrium, in a similar manner. Each individual agent's strategy, however, is adapted to their own inventory and as such, agents' strategies are not identical, even within the same sub-population.

%

As previously stated, all agents seek to minimize their own costs, and we look for the optimal control for all agents simultaneously. Specifically, we seek the Nash equilibrium -- a collection of controls $\{(g^{i, \star}, \Gamma^{i,\star}) \in \mcA^i: i \in \mathfrak{N}\}$ such that

\begin{equation}
(g^{i,\star}, \Gamma^{i,\star}) = \underset{(g^i, \Gamma^i) \in \mcA^i}{\arg \inf} \, J^{i}(g^i, \Gamma^i, \bm{g}^{-i, \star}, \bm{\Gamma}^{-i, \star}), \hspace{5mm} \forall i \in \mfN_k, \, \forall k \in \mcK, \label{eq:problem_statement}
\end{equation}
where  $(g^i, \Gamma^i, \bm{g}^{-i, \star}, \bm{\Gamma}^{-i, \star})$ denotes the set of strategies $(\bm{g}^\star, \bm{\Gamma}^\star)$ with $(g^{i, \star}, \Gamma^{i, \star})$  replaced by $(g^i, \Gamma^i)$. This means that an individual agent cannot benefit by unilaterally deviating from the Nash strategy. 

Additionally, we seek  $S_t^{\mu^{[N]}}$ such that
\begin{equation}
    \frac{1}{N}{\sum_{i \in \mathfrak{N}} \Gamma_t^{i, \star}} = \frac{1}{N}{\sum_{k\in\mcK} \sum_{i\in\mfN_k} \Gamma_t^{i, \star}} = 0, \label{eq:market_clearing_condition}
\end{equation} $t-a.e, \, t \in \mathfrak{T}, \, \PP$ a.s. In the finite-player setting, this condition states that trading is zero-sum among regulated firms in the SREC market, and is also known as the market clearing condition. This condition results in an equilibrium price $S_t^{\mu^{[N]}}$ endogenously defined through the optimal controls and market clearing.

As the cost functional for each agent is impacted by the actions of all other agents through the price process, and each agent has private information, solving \eqref{eq:problem_statement} for the finite player game is challenging. As such, in the next section, following the mean-field game methodology, we treat the problem in its infinite-player limit (as $N \rightarrow \infty$)  to render it tractable. The goal is to obtain a set of strategies that forms a Nash equilibrium in the infinite-player limit, and then apply that strategy to the finite-player case to achieve an approximate equilibrium, assuming the number of players is sufficiently large.

 
\section{The Mean-Field Game Limit} \label{sec:mfg_problem}

\subsection{Formulation of the Mean Field Game}
The stochastic game specified in the prior sections is difficult to solve directly. Instead, we focus our efforts on solving the stochastic game in the limit as $N \to \infty$ following the mean-field game (MFG) methodology. We assume the proportion of agents in each sub-population remains constant, and that all agents are minor agents (that is, in the limit, each agent is insignificant relative to the rest of the market). In the infinite-population limit, the finite player game becomes a MFG where agents interact through the population distribution of states, rather than directly. This makes the MFG simpler to solve than the finite player game, and the formulation of the MFG is the focus of this subsection. Proceeding this, we  derive the optimal controls and the equilibrium SREC price.

\textcolor{black}{In the infinite player setting, our notation deserves a brief discussion. We maintain our notation of indexing by firms, even in the infinite limit case. This allows us to be precise about what we are referring to, with no loss of clarity, as we will explicitly point out the sub-population each agent belongs to. Where appropriate, however, we use a superscript ${}^\ikrep$ rather than a superscript ${}^i$ to denote quantities relating to a representative (unspecified index) agent belonging to sub-population $k$. Parameters such as $\zeta, \gamma, h$ retain their superscript $k$ notation without parentheses, as they are real numbers and not a representative process.}



Prior to defining the firm's cost functional in the infinite-player limit, we make an additional definition.
\begin{definition} [Mean Field Distribution of States] \label{def:mean_field_distribution} 
In the infinite-player setting, we denote the mean-field distribution of SREC inventory for agents in sub-population $ k \in \mcK$ by $\mu_t^{(k)}$. Specifically, we introduce the flow of measures
\begin{equation}
    \mu^{(k)} = (\mu_t^{(k)})_{t \in\mfT},
    \qquad \mu_t^{(k)} \in \mcP(\R), \;\forall t\in\mfT
\end{equation}
for all $k \in \mcK$,  where $\mathcal{P}(\R)$ represents the space of probability measures on $\R$, such that $\mu_t^{(k)}(A)$ is the probability that a representative agent from 
sub-population $k$ has an SREC inventory belonging to the set $A \in \mathcal{B}(\R)$, at time $t$. 
Furthermore, we define
\begin{equation}
    \bm{\mu} = (\{\mu_t^{(k)}\}_{k \in \mcK})_{t \in \mfT}
\end{equation}
to be the flow of the collection of all mean-field measures.

\end{definition}


In \eqref{eq:agentCostFnRegularized}, only the equilibrium SREC price is dependent on the actions of other agents in the game. In the infinite-player limit, we replace $S_t^{\mu^{[N]}}$ with a process $S_t^{\bm{\mu}}$ which, in line with the mean-field limit, we assume is not impacted by any individual agent's behaviour, but is impacted by the mean-field distribution. We show in Section \ref{sec:SREC_price} that the equilibrium SREC price indeed has this independence. 

In the mean field setting, we define the clearing condition as the infinite-player limit of \eqref{eq:market_clearing_condition}
\begin{equation}
    \lim_{N \rightarrow \infty} \sum_{i \in \mfN} \frac{\Gamma_t^i}{N} = 0, \label{eq:infinite_player_clearing_condition}
\end{equation}
$ t-a.e.,\, \PP-a.s.$, for all $t \in \mfT$. This is analogous to the average trading rate (across agents) vanishing at all times in the limit.
\textcolor{black}{We could equivalently define this clearing condition in terms intrinsic to the mean field setting, rather than asymptotically. In doing so, the clearing condition is}
\begin{equation}
    \color{black} 
    \sum_{k \in \mcK} \pi_k \EE\big[\Gamma_t^{(k)}\big] = 0 \label{eq:infinite_player_clearing_condition_reviewer_suggested}
\end{equation}
\textcolor{black}{$ t-a.e.,\, \PP-a.s.$, for all $t \in \mfT$, where $\Gamma^{(k)}$ represents the trading rate of a generic (representative) agent in sub-population $k$.}

The dependence on the mean-field arises implicitly through $S_t^{\bm{\mu}}$. We postpone the characterization of this dependence until Section \ref{sec:SREC_price}, after we  the problem in the mean-field setting, and obtained the functional form of the optimal controls. We do this as $S_t^{\bm{\mu}}$ is ultimately derived from the market clearing condition of the average (optimal) trading being zero 
across all agents. Until such time, it suffices to consider $S_t^{\bm{\mu}}$ as a generic stochastic process that ultimately depends on the mean-field distribution in \cref{def:mean_field_distribution}. 
The SREC price $S_t^{\bm{\mu}}$ must be bounded above by $P$, as a firm would not pay above $P$ to obtain an SREC to avoid a penalty worth $P$.

We now denote the $i$-th agents' SREC inventory in the infinite-population limit by $X^\ik = (X_t^\ik)_{t \in \mfT}$. Subsequently, we denote firm-$i$'s cost functional, $i\in\mfN_k$, in the infinite-population limit  by $\overline{J}^{\ik}$ and equals

\begin{align}
\overline{J}^\ik (g^\ik, \Gamma^\ik; \bm{\mu}) = \EE\biggl[ & \int_0^T 
\left\{\begin{pmatrix}g_u^\ik & \Gamma_u^\ik \end{pmatrix} Q^k \begin{pmatrix}
g_u^\ik \\ \Gamma_u^\ik \end{pmatrix}+ \begin{pmatrix}g_u^\ik & \Gamma_u^\ik \end{pmatrix} \begin{pmatrix}
- \zeta^k h_u^k 
\\
S_u^{\bm{\mu}}
\end{pmatrix} + \tfrac{\zeta^k}{2} (h_u^k)^2
\right\}
du 
+ P\, F_{\delta}(R^k - X_T^\ik) \biggr], \label{eq:agentCostFnRegularizedMF}
\end{align}
where $Q^k = \tfrac{1}{2}\begin{pmatrix} \zeta^k & 0 \\ 0 & \gamma^k \end{pmatrix}$ is positive definite, and the individual agent adapts their strategy to the filtration $\Gg^\ik= (\Gg_t^\ik)_{t \in \mathfrak{T}}$ (with a slight abuse of notation, as this is the same notation used for the finite player game), where 
\begin{equation}\label{individualFilt_infpop}
    \Gg_t^\ik := \sigma\left((X_u^\ik)_{u \in [0, t]}\right) \vee \sigma\left(\left(S_u^{\bm{\mu}}\right)_{u \in [0, t]}\right)\,.
\end{equation}
This is the sigma algebra generated by the \textcolor{black}{individual} firm's SREC inventory and \textcolor{black}{market clearing} MFG SREC price. We maintain the assumption that all firms have knowledge of the the initial distribution (but not the value) of other firms' SRECs; that is, firm's have knowledge of $\mu_0^{(k)}$. The admissible set is now denoted by $\mcA^\ik$, retaining its definition in \eqref{firmsAdmissibleCntrl} but with the filtration $\Gg$ above.

As before, the state dynamics are
\begin{equation}
dX_t^\ik = (g_t^\ik + \Gamma_t^\ik)\, dt + \sigma_t^k \,dW_t^\ik, 
\label{eq:state_dynamicsMF}
\end{equation}
where $W^\ik = \{W^\ik = (W_t^\ik)_{t \in \mathfrak{T}}, i\in \mfN\}$ is a set of independent standard Brownian motions. Each individual firm has a random initial inventory of SRECs, given by the random variable $\xi^\ik.$ 

Similar to the finite-player game, we aim to find the set of strategies that form a Nash equilibrium \textcolor{black}{in the infinite player context}. That is, we search for a collection of controls across all agents $\{(g^{\ik,\star}, \Gamma^{\ik,\star}) \in \mcA^\ik \}_{i\in\mfN}$ such that 
\begin{equation}
(g^{\ik,\star}, \Gamma^{\ik,\star}) = \underset{(g, \Gamma) \in \mcA^\ik}{\arg \inf} \, \overline{J}^\ik(g,\Gamma; \bm{\mu}), \hspace{5mm} \forall i \in \N\,.  \label{eq:problem_statementMF}
\end{equation}
To solve \eqref{eq:problem_statementMF} and obtain the optimal controls, we use techniques from variational analysis (see \cite{FCJ-Convex2018} for a detailed variational analysis for a general class of LQG systems and LQG mean field game systems with major and minor agents). As with the finite-player case, we also seek $S_t^{\bm{\mu}}$ such that the clearing condition \eqref{eq:infinite_player_clearing_condition} holds. In the following subsection, we detail the approach that we take in solving \eqref{eq:problem_statementMF}. \textcolor{black}{In the MFG framework,  $\bm{\mu}$ is both an \textit{input} and an \textit{output} to the problem, as it determines the SREC price and the optimal controls  determine the distribution of the states through \eqref{eq:state_dynamics}. 
This results in a fixed-point problem on the space of measure flows which will be discussed in further detail in this section}.

\subsection{Deriving the optimal controls}\label{sec:variationalAnalysis}

Our approach to deriving the optimal controls is as follows. We first consider the problem of finding the optimal controls for a particular mean field distribution. That is, we treat an arbitrary $\bm{\mu}$ as an \textit{input} 
to the problem. We establish that the cost functional \eqref{eq:agentCostFnRegularizedMF} is strictly convex and differentiable (in the sense that its G\^ateaux derivative exists everywhere in all directions). From here, we find the value of the controls for which the G\^ateaux derivative vanishes (in all directions), in terms of a stochastic process $Y_t^i$ (an adjoint process, introduced later) as well as $\bm{\mu}$.
The convexity of the functional guarantees such  controls are optimal, and moreover, they correspond to a unique global optimum, given the mean-field $\bm{\mu}$. 
Subsequently, we use the form of the optimal controls to derive a system of MV-FBSDEs, and seek a set of mean-field distributions $\bm{\mu}$ such that the forward processes of the solution to the system of FBSDEs we propose has marginal distributions consistent with said mean-field distributions. This approach, summarized in \Cref{fig:solution_methodology}, \textcolor{black}{ results in a fixed point problem}. 
\tikzstyle{sec2} = [rectangle, draw, thick,fill=blue!10, text width=3em, text centered, rounded corners, minimum height=2em]
\tikzstyle{sec3} = [rectangle, draw, thick,fill=blue!10, text width=5.5em, text centered, rounded corners, minimum height=2em]
\tikzstyle{sec4} = [rectangle, draw, thick,fill=blue!10, text width=6em, text centered, rounded corners, minimum height=2em]
\begin{figure}[bt]
\centering
\begin{tikzpicture}[node distance = 3.75cm]
    \node [sec2] (input) {\footnotesize Take $\bm{\mu}$ \\ \footnotesize as input};
    \node [sec3, right of=input, xshift=0.3cm] (x0xbar) {\footnotesize Obtain $g_t^{\ik, \star}, \Gamma_t^{\ik, \star}$ \\ \tiny (in terms of $Y_t^\ik$ \& $\bm{\mu}$)};
    \node [sec3, right of=x0xbar, xshift=0.3cm] (x0hat) {\footnotesize   Find $S_t^{\bm{\mu}}$ \\ \tiny (in terms of $Y_t^\ik$ \& $\bm{\mu}$)};
    \node [sec4, right of=x0hat, xshift=0.5cm] (u0) {\footnotesize  Formulate FBSDE and solve for $X_t^\ik, Y_t^\ik, \mathcal{L}(X_t^\ik)$ \\ \footnotesize for all $k \in \mcK$};

    
    
    \path [line, thick] (input.0)([xshift=0cm]input.east) |- (x0xbar.180) node [xshift=-1.1cm,above, text width = 1.3cm, text centered] (TextNode) {\tiny Variational Analysis}; 
    
    \path [line, thick] (x0xbar.0)([xshift=0cm]x0xbar.east) |- (x0hat.180) node [xshift=-0.85cm,above, text width = 1.3cm, text centered] (TextNode) {\tiny Clearing condition} node [xshift=-1.75cm, yshift=0.5cm,above] (TextNode) {};

    \path [line, thick] (x0hat.0)([xshift=0cm]x0hat.east) |- (u0.180) ;



    \draw [dashed] (u0.270) -| ([yshift = -1cm]u0.270) -| node [yshift = 0cm,xshift = -0.0cm]{}(input.south) |- (input.270) node [xshift=6.15cm,yshift=-1.0cm, text centered] (TextNode) {\footnotesize Check $\mu_t^{(k)}, \mathcal{L}(X_t^i)$ for consistency, for all $k \in \mcK$};

 
\end{tikzpicture}
\caption{Structure of solving for optimal actions and clearing prices in the mean-field game limit.} \label{fig:solution_methodology}
\end{figure}
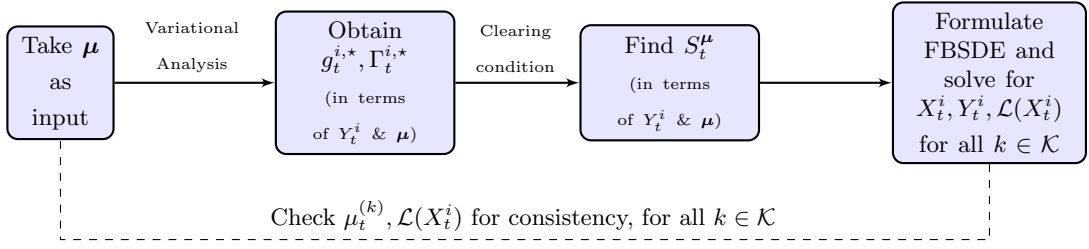

\begin{proposition}[Convexity of Cost Functional] \label{convexity}
For any given mean-field distribution $\bm{\mu}$, the functional $\overline{J}^\ik$ in \eqref{eq:agentCostFnRegularizedMF} is strictly convex in $\mcA^\ik$.
\end{proposition}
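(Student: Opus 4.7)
The plan is to exploit the additive decomposition of $\overline{J^i}$ into a pointwise quadratic running cost in the controls plus a terminal penalty that depends on the controls only through $X_T^i$. I would fix an arbitrary mean-field flow $\bm{\mu}$ (so that $S_u^{\bm{\mu}}$ is frozen data for this argument), take two distinct admissible pairs $u^{(1)}=(g^{(1)},\Gamma^{(1)})$ and $u^{(2)}=(g^{(2)},\Gamma^{(2)})$ in $\mcA^i$ and $\alpha\in(0,1)$, and show that the convex combination $u^{(\alpha)}:=\alpha u^{(1)}+(1-\alpha)u^{(2)}$ --- which itself lies in $\mcA^i$ because admissibility requires $g\ge 0$ (a convex constraint), square-integrability (convex), and $\Gg^i$-adaptedness, all preserved under convex combinations --- satisfies
\[
\overline{J^i}(u^{(\alpha)};\bm{\mu}) < \alpha\,\overline{J^i}(u^{(1)};\bm{\mu}) + (1-\alpha)\,\overline{J^i}(u^{(2)};\bm{\mu}).
\]

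For the running cost, the integrand is quadratic in $(g,\Gamma)$ with Hessian $2Q^k$, which is strictly positive definite since $\zeta^k,\gamma^k>0$. The standard pointwise identity
\[
\alpha\, v_1^\top Q^k v_1 + (1-\alpha)\, v_2^\top Q^k v_2 - (\alpha v_1 + (1-\alpha) v_2)^\top Q^k (\alpha v_1 + (1-\alpha) v_2) = \alpha(1-\alpha)\,(v_1-v_2)^\top Q^k (v_1-v_2)
\]
yields a pointwise-strict convexity gap wherever $u^{(1)}\neq u^{(2)}$, while the linear and constant pieces of the integrand are affine in $(g,\Gamma)$ and contribute zero to the gap. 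Because $u^{(1)}\neq u^{(2)}$ as elements of $\mathbb{H}_t^2$, they disagree on a set of positive $dt\otimes d\PP$-measure, so integrating the nonnegative, pointwise-positive gap against $du\otimes d\PP$ delivers a strictly positive running-cost contribution.

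For the terminal cost, the state equation \eqref{eq:state_dynamicsMF} is linear in $(g^i,\Gamma^i)$, so $X_T^i=\xi^i+\int_0^T(g_u^i+\Gamma_u^i)\,du+\int_0^T \sigma_u^k\,dW_u^i$ is an affine map of the controls, as is $R^k-X_T^i$. Since $F_\delta$ was constructed to be convex (per the discussion around \eqref{eq:regularizedNCC}), the composition $F_\delta(R^k-X_T^i)$ is convex in $(g^i,\Gamma^i)$, and expectation preserves convexity. The full functional is then the sum of a strictly convex piece (the running cost) and a convex piece (the terminal penalty), hence strictly convex on $\mcA^i$. The step that carries the argument is the strict--versus--weak distinction: one has to verify that inequality of controls in $\mathbb{H}_t^2$ actually produces a strictly positive contribution from the quadratic form, but this follows immediately from positive definiteness of $Q^k$ together with the $dt\otimes d\PP$-a.e. interpretation of equality in $\mathbb{H}_t^2$; no substantive obstacle beyond this is expected.
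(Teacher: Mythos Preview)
Your proposal is correct and follows essentially the same approach as the paper: decompose $\overline{J^i}$ into the running cost (strictly convex via positive definiteness of $Q^k$) plus the terminal penalty (convex via the convexity of $F_\delta$ composed with the affine map $(g,\Gamma)\mapsto R^k-X_T^i$), and conclude that their sum is strictly convex. If anything, you are slightly more careful than the paper in explicitly verifying that $\mcA^i$ is closed under convex combinations and in spelling out why inequality in $\mathbb{H}_t^2$ forces a strictly positive gap from the quadratic form.
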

\begin{proof} See \Cref{Proof:prop:convexity}.
\end{proof}

We next establish  the G\^ateaux derivative of \eqref{eq:agentCostFnRegularizedMF} exists  and determine its explicit form.

\begin{proposition}[Existence and form of G\^ateaux derivative] \label{gateaux_deriv}
For any given mean-field distribution $\bm{\mu}$, the cost functional for an arbitrary agent $i$ in sub-population $k$, $\overline{J}^\ik(\cdot, \cdot; \bm{\mu})$, is everywhere G\^ateaux differentiable in $\mcA^\ik$, in all directions\textcolor{black}{, for all $k\in\mcK$.} The G\^ateaux derivative at $(g, \Gamma) \in \mcA^\ik$ in the direction of $(\omega^g, \omega^\Gamma)\in\mcA^\ik$ is
\begin{subequations}
\begin{align}
\langle \mcD \overline{J}^\ik(g^\ik, \Gamma^\ik; \bm{\mu}), \omega^g \rangle &= \EE\left[ \int_0^T \omega_t^g \left( \zeta^k (g_t^\ik - h_t^k) - P Y_t^
\ik \right)dt\right],\label{eq:GD_gen} \\
\langle \mcD \overline{J}^\ik (g^\ik, \Gamma^\ik; \bm{\mu}), \omega^\Gamma \rangle &= \EE\left[ \int_0^T \omega_t^\Gamma \left( \gamma^k \Gamma_t + S_t^{\bm{\mu}} - P Y_t^\ik\right)dt\right],\label{eq:GD_trade}
\end{align}
\end{subequations}
where 
\begin{equation}
Y_t^\ik :=  \EE \left[\left.F_{\delta}^\prime(R^k - X_T^\ik)\, \right|\, \Gg_t^\ik \right].
\label{eqn:defn-Y}
\end{equation}
\end{proposition}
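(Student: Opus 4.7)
The plan is to compute the G\^ateaux derivative directly from its definition: for fixed $(g^i,\Gamma^i)\in\mcA^i$ and perturbation direction $(\omega^g,\omega^\Gamma)\in\mcA^i$, evaluate
\[
\lim_{\epsilon\to 0}\tfrac{1}{\epsilon}\bigl[\overline{J^i}(g^i+\epsilon\omega^g,\,\Gamma^i+\epsilon\omega^\Gamma;\bm{\mu}) - \overline{J^i}(g^i,\Gamma^i;\bm{\mu})\bigr].
\]
Following the decomposition $\overline{J^i}=G^i+H^i$ used in the proof of \Cref{convexity}, I treat the two parts separately, exploiting linearity of the G\^ateaux derivative in the perturbation to handle each coordinate direction on its own.

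For the running part $G^i$, which is quadratic-affine in $(g,\Gamma)$ with coefficients independent of the controls, the difference $G^i(g^i+\epsilon\omega^g,\Gamma^i+\epsilon\omega^\Gamma;\bm{\mu})-G^i(g^i,\Gamma^i;\bm{\mu})$ is an explicit polynomial in $\epsilon$ whose $\epsilon$-linear coefficient is $\EE\bigl[\int_0^T\omega^g_t\zeta^k(g^i_t-h^k_t)\,dt + \int_0^T\omega^\Gamma_t(\gamma^k\Gamma^i_t+S^{\bm{\mu}}_t)\,dt\bigr]$, and whose $\epsilon^2$-coefficient vanishes after division by $\epsilon$ and passage to the limit. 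Interchange of limit and expectation is justified via Cauchy--Schwarz using $(g^i,\Gamma^i,\omega^g,\omega^\Gamma)\in\mathbb{H}^2_t$, together with $|S^{\bm{\mu}}_t|\le P$ and the square-integrability of $h^k$.

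For the terminal term $H^i = P\EE[F_\delta(R^k-X^i_T)]$, exploit that $X^i_T$ depends affinely on the controls: with $Z:=\int_0^T(\omega^g_t+\omega^\Gamma_t)\,dt$, the perturbed terminal inventory equals $X^i_T+\epsilon Z$. A direct check of \eqref{eq:regularizedNCC} shows $F_\delta$ is continuously differentiable with $0\le F_\delta'\le 1$, so the mean-value theorem supplies the pointwise bound
\[
\left|\tfrac{F_\delta(R^k-X^i_T-\epsilon Z)-F_\delta(R^k-X^i_T)}{\epsilon}\right|\le |Z|,
\]
while $\EE[|Z|]\le\sqrt{2T}\,\bigl(\EE[\int_0^T(\omega^g_t)^2+(\omega^\Gamma_t)^2\,dt]\bigr)^{1/2}<\infty$. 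Dominated convergence then yields the limit $-P\,\EE[F_\delta'(R^k-X^i_T)\,Z]$. Finally, Fubini combined with the tower property --- applicable because $\omega^g_t,\omega^\Gamma_t$ are $\Gg^i_t$-adapted by admissibility and $Y^i_t$ is the $\Gg^i_t$-conditional expectation in \eqref{eqn:defn-Y} --- converts this into $-P\int_0^T\EE[Y^i_t(\omega^g_t+\omega^\Gamma_t)]\,dt$. Adding the contributions from $G^i$ and splitting by direction reproduces exactly \eqref{eq:GD_gen}--\eqref{eq:GD_trade}, and linearity of the resulting map in $(\omega^g,\omega^\Gamma)$ is manifest.

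The main technical obstacle is the limit-expectation interchange in the terminal term; this is precisely why the non-compliance penalty was regularized into $F_\delta$. The global Lipschitz bound $\|F_\delta'\|_\infty\le 1$ provides the uniform $L^1$-integrable majorant needed for dominated convergence, whereas the original $(R^k-X^i_T)_+$ fails to be classically differentiable at $X^i_T=R^k$ and a pathwise argument would instead require a subdifferential analysis on the event $\{X^i_T=R^k\}$. Everything else is routine given the square-integrability of admissible controls, and the form \eqref{eq:GD_gen}--\eqref{eq:GD_trade} is already poised for the subsequent first-order optimality analysis.
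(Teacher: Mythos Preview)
Your proof is correct and follows essentially the same route as the paper's: compute the difference quotient, expand, and pass to the limit, then apply the tower property with the $\Gg^i_t$-adaptedness of the perturbation to replace $F_\delta'(R^k-X^i_T)$ by $Y^i_t$. The only noteworthy difference is in handling the terminal term: the paper uses a Taylor expansion of $F_\delta$ with an $O(\epsilon^2)$ remainder, whereas you use the mean-value theorem together with the global bound $\|F_\delta'\|_\infty\le 1$ to produce an explicit $L^1$ majorant $|Z|$ and then invoke dominated convergence. Your route is slightly more self-contained, since it requires only that $F_\delta$ be $C^1$ with bounded derivative, and it makes the limit--expectation interchange fully explicit; the paper's Taylor argument implicitly relies on $F_\delta'$ being Lipschitz to control the remainder but does not spell this out.
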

\begin{proof} See \Cref{Proof:prop:gateaux_deriv}.
\end{proof}


Before continuing, we make an observation about $ Y_t^\ik$. $ Y_t^\ik$ is approximately the non-compliance probability for \textcolor{black}{a generic }
firm $i$ 
from sub-population $k$. To see this, recall that $F_\delta(x)$ is a convex and differentiable approximation of $a(x)=(x)_+$, and
\begin{equation*}
  a^\prime(x)= 
  \begin{cases}
   \Id_{x > 0}, & \text{for } x \neq 0 \\
    \text{d.n.e}, & \text{for } x = 0.
  \end{cases}
\end{equation*}
From the definition of $F_\delta(\cdot)$ in \eqref{eq:regularizedNCC}, we have that $F_\delta^\prime(x) \overset{\delta \rightarrow 0}{\longrightarrow} a^\prime(x)$ pointwise a.e., and for $\delta$ sufficiently small
\begin{equation}
    Y_t^\ik = \EE \left[F_{\delta}^\prime(R^k - X_T^\ik) | \Gg_t^\ik \right] \approx \EE \left[a^\prime(R^k - X_T^\ik) | \Gg_t^\ik \right] = \PP(X_T^\ik < R^k\, |\, \Gg_t^\ik). \nonumber
\end{equation}
Thus, we often refer to $Y_t^\ik$ as \textcolor{black}{the generic} agent's-$i$'s non-compliance probability at $t$. 

From Propositions \ref{convexity} and \ref{gateaux_deriv},   $\overline{J}^\ik$ is strictly convex and everywhere G\^ateaux differentiable in all directions. This implies that an element $\nu \in \mcA^\ik$ that makes the G\^ateaux derivative vanish for all directions is a minimizer of $\overline{J}^\ik$, and moreover, that  minimizer is unique. 
We now establish the conditions which are sufficient and necessary for the G\^ateaux derivative to vanish, which allows us to find solutions to \eqref{eq:problem_statementMF} and interpret the optimal controls in the context of SREC markets.

\begin{proposition}[Vanishing of G\^ateaux Derivative] \label{prop:optimality}
The G\^ateaux derivative \eqref{eq:GD_gen}-\eqref{eq:GD_trade} vanishes in all directions if and only if:
\begin{subequations}
\begin{align}
g_t^\ik - h_t^k - \tfrac{P}{\zeta^k} Y_t^\ik  &= 0,  \qquad\text{and}
\label{eq:gen_nec_cond} \\
\Gamma_t^\ik -  \tfrac{1}{\gamma^k} \left( P Y_t^\ik - S_t^{\bm{\mu}}\right) &= 0 .  \label{eq:trade_nec_cond}
\end{align}
\label{eq:actions}
\end{subequations}
\end{proposition}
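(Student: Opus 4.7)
The plan is to prove the two implications separately, working directly from the explicit integral representations \eqref{eq:GD_gen}--\eqref{eq:GD_trade} of the G\^ateaux derivatives obtained in Proposition \ref{gateaux_deriv}.

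The sufficiency direction ($\Leftarrow$) is essentially a substitution. Plugging \eqref{eq:gen_nec_cond} into the integrand of \eqref{eq:GD_gen} yields $\zeta^k(g_t^i - h_t^k) - P Y_t^i = P Y_t^i - P Y_t^i = 0$ identically, and likewise \eqref{eq:trade_nec_cond} makes the integrand of \eqref{eq:GD_trade} vanish. Both directional derivatives are therefore zero for every admissible $\omega^g, \omega^\Gamma$.

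For the necessity direction ($\Rightarrow$) I would invoke the fundamental lemma of the calculus of variations in $L^2(\Omega\times\mfT)$. Define the $\Gg^i$-adapted processes
\[
A_t^g := \zeta^k(g_t^i - h_t^k) - P Y_t^i, \qquad A_t^\Gamma := \gamma^k \Gamma_t^i + S_t^{\bm{\mu}} - P Y_t^i.
\]
Both belong to $\mathbb{H}_t^2$: the controls $g^i,\Gamma^i$ lie in $\mathbb{H}_t^2$ by admissibility, $h^k$ is deterministic and bounded on $\mfT$, $S^{\bm{\mu}}\in[0,P]$, and $Y_t^i\in[0,1]$ since $F_\delta^\prime\in[0,1]$ on $\R$ and $Y^i$ is a conditional expectation of $F_\delta^\prime(R^k-X_T^i)$. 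Taking the test directions $\omega^g = A^g$ and $\omega^\Gamma = A^\Gamma$ in \eqref{eq:GD_gen}--\eqref{eq:GD_trade} and using the vanishing hypothesis forces
\[
\EE\!\left[\int_0^T (A_t^g)^2\,dt\right] = 0 \qquad \text{and} \qquad \EE\!\left[\int_0^T (A_t^\Gamma)^2\,dt\right] = 0,
\]
which yields $A^g = A^\Gamma = 0$ $dt\otimes d\PP$-a.e., i.e.\ \eqref{eq:gen_nec_cond}--\eqref{eq:trade_nec_cond}.

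The main subtlety I expect to address is the sign constraint $g_t \geq 0$ built into $\mcA^i$: at points where $g_t^i = 0$, the admissible perturbations $\omega^g$ are only one-sided, so one cannot immediately take $\omega^g = A^g$. I would handle this in one of two ways. The clean route is to observe that any candidate satisfying \eqref{eq:gen_nec_cond} automatically takes the form $g_t^i = h_t^k + \tfrac{P}{\zeta^k} Y_t^i \geq 0$, since $h_t^k \geq 0$ and $Y_t^i \geq 0$ (as $F_\delta^\prime \geq 0$); thus the non-negativity constraint is slack at the candidate, and two-sided perturbations with $\omega^g = A^g$ are indeed admissible for all sufficiently small $\epsilon > 0$. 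Alternatively, one can argue in a variational-inequality style using the one-sided test functions $\omega^g = \mathbbm{1}_{\{A^g > 0\}}$ and $\omega^g = \mathbbm{1}_{\{A^g < 0,\,g > 0\}}$ to conclude $A^g = 0$ on $\{g > 0\}$ and $A^g \geq 0$ on $\{g = 0\}$, then note that the latter set is empty for the solution structure above. The trading control $\Gamma^i$ is unconstrained in sign, so \eqref{eq:GD_trade} poses no analogous obstacle.
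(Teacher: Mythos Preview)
Your proposal is correct and lands on the same mechanism as the paper: sufficiency by direct substitution, necessity by testing the integral representation against directions built from the residual processes $A^g, A^\Gamma$. The paper argues necessity via the contrapositive, choosing $\omega^g = (A^g)_+$ (or $(A^g)_-$, according to which set has positive measure) and $\omega^\Gamma = A^\Gamma$; both choices are non-negative in the $g$-component and hence lie in $\mcA^i$, which is exactly your ``option~2'' with positive/negative parts in place of indicators.

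One remark on the handling of the sign constraint. Your ``option~1'' is circular as written: you invoke the fact that the candidate satisfying \eqref{eq:gen_nec_cond} has $g^i = h^k + \tfrac{P}{\zeta^k}Y^i > 0$ to justify two-sided perturbations, but in the necessity direction you have not yet established that the point where the G\^ateaux derivative vanishes \emph{is} this candidate --- that is precisely the conclusion you are after. Moreover, your option~2 is slightly over-engineered: since the paper only requires $\omega^g \in \mcA^i$ (i.e.\ $\omega^g \ge 0$), not that $g^i + \epsilon\omega^g$ remain admissible, you can take $\omega^g = \mathbbm{1}_{\{A^g<0\}}$ (or $(A^g)_-$) globally without splitting into $\{g>0\}$ and $\{g=0\}$; this immediately forces $A^g \ge 0$ a.e., and together with the $(A^g)_+$ test gives $A^g = 0$ everywhere. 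That is exactly what the paper does, and it avoids the circularity.
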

\begin{proof} See \Cref{Proof:prop:optimality}. 
\end{proof}

These set of results allow us to determine the optimal controls in the mean-field game limit. 
\begin{proposition}[Optimal Controls] \label{prop:optimal_controls}
Given a mean-field flow $\bm{\mu}$, the collection of controls $\{g_t^{\ik, \star}, \Gamma_t^{\ik, \star}\}_{i \in \mfN}$ is a unique solution to \eqref{eq:problem_statementMF} if and only if for agent $i$ in sub-population $k$, $(g_t^{\ik, \star}, \Gamma_t^{\ik, \star}) \in \mcA^\ik$, and the G\^ateaux derivative vanishes for all $\omega \in \mcA^\ik$. \textcolor{black}{This holds for all $i \in \mfN_k$, for all $k\in\mcK$}. 

In particular, the optimal controls for the representative agent in sub-population $k$ are
\begin{align}
    g_t^{\ik, \star}  &= h_t^k + \tfrac{P}{\zeta^k} Y_t^\ik, \quad \text{and}
    \label{eq:optG} \\
\Gamma_t^{\ik, \star} &=   \tfrac{1}{\gamma^k} \left( P \,Y_t^\ik - S_t^{\bm{\mu}}\right).  \label{eq:optGamma}
\end{align}

\end{proposition}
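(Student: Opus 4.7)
The plan is to assemble the three preceding propositions into the stated equivalence. Strict convexity of $\overline{J^i}$ (Proposition on Convexity) together with everywhere G\^ateaux differentiability in all directions (Proposition on G\^ateaux Derivative) places us in a standard convex optimization setting on the closed convex set $\mcA^i$. The key fact I will invoke is the well-known characterization: for a G\^ateaux differentiable, strictly convex functional on a convex set, $(g^{i,\star},\Gamma^{i,\star})\in\mcA^i$ is a minimizer if and only if the G\^ateaux derivative at $(g^{i,\star},\Gamma^{i,\star})$ vanishes in every admissible direction, and moreover such a minimizer, when it exists, is unique.

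For the sufficiency direction, I would start from the identity
\[
\overline{J^i}(g+\omega^g,\Gamma+\omega^\Gamma;\bm{\mu}) - \overline{J^i}(g,\Gamma;\bm{\mu}) \;\ge\; \langle \mcD\overline{J^i}(g,\Gamma;\bm{\mu}),(\omega^g,\omega^\Gamma)\rangle,
\]
which holds for convex G\^ateaux differentiable functionals; if the right-hand side vanishes for every admissible direction then the left-hand side is non-negative for every admissible variation, so $(g^{i,\star},\Gamma^{i,\star})$ is a minimizer. Uniqueness then follows from strict convexity: two distinct minimizers would violate the strict inequality in \eqref{eq:convexity_condition}. For the necessity direction, at any minimizer the quantities $\epsilon\mapsto \overline{J^i}(g^{i,\star}+\epsilon\omega^g,\Gamma^{i,\star}+\epsilon\omega^\Gamma;\bm{\mu})$ must have non-negative derivative at $\epsilon=0^+$ for every admissible $(\omega^g,\omega^\Gamma)$; applying this to both $+\omega$ and $-\omega$ (whenever both lie in $\mcA^i$) forces the G\^ateaux derivative to vanish. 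Because the only one-sided constraint in $\mcA^i$ is $g_t\ge 0$, one must verify that the optimum lies in the interior with respect to this constraint, which will be a direct consequence of the explicit form derived next ($g_t^{i,\star}=h_t^k+\tfrac{P}{\zeta^k}Y_t^i \geq 0$ since $h_t^k\ge 0$ and $Y_t^i\ge 0$ as a conditional expectation of $F_\delta'\ge 0$).

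Once the equivalence is established, I invoke Proposition on Vanishing of G\^ateaux Derivative to conclude that vanishing of the derivative in all directions is equivalent to the pointwise conditions \eqref{eq:gen_nec_cond}--\eqref{eq:trade_nec_cond}. Solving these two algebraic identities for $g_t^i$ and $\Gamma_t^i$ gives the explicit formulas \eqref{eq:optG} and \eqref{eq:optGamma} immediately. A brief check is required that these candidate processes are indeed in $\mcA^i$: non-negativity of $g_t^{i,\star}$ follows as noted above; $\Gg^i$-adaptedness holds since $Y_t^i$ and $S_t^{\bm{\mu}}$ are both $\Gg_t^i$-measurable by construction; and square integrability follows from boundedness of $F_\delta'\in[0,1]$ (hence $Y_t^i\in[0,1]$), boundedness of $h_t^k$, and the bound $0\le S_t^{\bm{\mu}}\le P$ noted earlier.

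I do not anticipate a serious obstacle: the main work was already done in the three preceding propositions, and this proof is essentially a synthesis. The only subtle point is the admissibility verification for the candidate controls (particularly the $g_t\ge 0$ constraint and the $\Gg^i$-adaptedness of $Y_t^i$), which is why I would handle it explicitly even though it is almost immediate from the definitions.
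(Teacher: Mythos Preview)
Your proposal is correct and follows essentially the same route as the paper: the paper simply cites Proposition~2.1 of Ekeland--Temam for the convex-analysis equivalence (where you write out the sufficiency and necessity arguments explicitly), and then verifies admissibility of the explicit candidates \eqref{eq:optG}--\eqref{eq:optGamma} exactly as you do (non-negativity of $g^{i,\star}$ from $h_t^k\ge 0$ and $Y_t^i\ge 0$, $\Gg^i$-adaptedness and boundedness of $Y_t^i$ and $S_t^{\bm{\mu}}$). Your remark about the one-sided constraint $g\ge 0$ and the need to check interiority of the optimum is a point the paper leaves implicit in its citation, so your treatment is if anything slightly more careful.
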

\begin{proof}

The first part of this proposition is taken care of by \Cref{convexity} and Proposition 2.1 in \cite{ekeland1999convex}.

All that remains is to show that $g_t^{\ik, \star}$ and $\Gamma_t^{\ik, \star}$  in \eqref{eq:optG}-\eqref{eq:optGamma} are admissible. Observe that $h_t^k$ is deterministic, non-negative, and bounded. $ Y_t^\ik$ is, moreover, non-negative and bounded (by definition of $F_\delta$), as well as $ \Gg^\ik$-adapted (by construction). Therefore, $ g^{\ik, \star}$ is admissible.

Similarly, as $ Y^\ik$ and $S^{\bm{\mu}}$ are $ \Gg^\ik$-adapted and bounded,
\eqref{eq:optG}-\eqref{eq:optGamma} clearly satisfy  \Cref{prop:optimality}.
\end{proof}


The controls  in \eqref{eq:optG} and \eqref{eq:optGamma} provide us with insight about the nature of regulated firms optimal behaviour. $ P\,Y_t^\ik$ may  be interpreted as the expected non-compliance cost avoided by obtaining a marginal SREC. This can also be seen as the marginal benefit of holding an additional SREC.

With this interpretation in mind, we can restate \eqref{eq:optG} as the following:
\begin{equation}
    \zeta^k(g_t^{\ik,\star} - h_t^k) = P \, Y_t^\ik \,.
\end{equation}
This is equivalent to stating that the marginal cost of generation equals  the marginal benefit of generation. A similar result occurs in \cite{hitzemann2018equilibrium}. Isolating $g_t^{\ik,\star}$ implies that firms  generate at their baseline rate plus an amount proportional to the marginal benefit of holding an additional SREC. This makes intuitive sense, as it implies a firm that is in danger of non-compliance generates above their baseline in order to strive for compliance. Conversely, a firm that is assured of compliance  generates their baseline, which they obtain without incurring any cost. 

Similarly, we can interpret \eqref{eq:optGamma} as the optimal trading rate of a firm being proportional to the difference between the marginal value of holding an SREC and the marginal value of selling an SREC. This is scaled by the firm's cost of trading. Once again, this is an intuitive result. If SREC prices are high relative to the benefit a firm would receive from holding an SREC, a firm is more inclined to sell SRECs (and vice-versa). The degree to which this occurs is dependent on the difference between these two quantities, as well as the cost the firm incurs by partaking in the market ($\gamma^k$).

We emphasize  that the results in this section hold for a given, but arbitrary, mean field distribution $\bm{\mu}$. With this, we obtain the existence and uniqueness of the optimal trading and optimal planned generation of \textcolor{black}{any generic firm in sub-population $k$} 
through \Cref{prop:optimal_controls}. The mean-field distributions, however, as discussed in \Cref{def:mean_field_distribution} are not arbitrary and are not to be chosen freely. Specifically, they \textit{must} correspond to the marginal distributions of the agents SREC inventories. As per \eqref{eq:state_dynamics}, the marginal distribution of an agent's SREC inventory is governed by their optimal controls, which  depend on $\bm{\mu}$. 
This circular dependence underlies the core of the MFG methodology, and we must discuss how to find such a $\bm{\mu}$. However, before doing so, we characterize the dependence of the firms optimal controls on the mean-field distribution by finally obtaining $S_t^{\bm{\mu}}$ that clears the market.

\subsection{Equilibrium SREC Price} \label{sec:SREC_price} 
We begin this section by examining the SREC price implied by the mean-field solution for the infinite-player game. Specifically, consider the SREC price that results in the market clearing condition being satisfied, under the assumption that firms behave per \eqref{eq:optG} and \eqref{eq:optGamma}. 

\begin{proposition}[Equilibrium SREC Price in Infinite Player Game] \label{prop:infinite_player_SREC_price}
In the setting of the infinite-player game described in Section \ref{sec:mfg_problem}, we further assume that:
\begin{enumerate}
    \item[(i)] The market clearing condition described in \eqref{eq:infinite_player_clearing_condition} holds, and
    \item[(ii)] Firms behave as per the optimal controls for the mean-field game (i.e. they generate and trade according to \eqref{eq:optG} and \eqref{eq:optGamma}).
\end{enumerate}
The equilibrium 
SREC price is then given by
\begin{equation}
\color{black} S_t^{\bm{\mu}} = P \sum_{k\in\mcK} \eta_k \,\EE\big[Y_t^{\ikrep}\big], \,\text{ a.s.} \label{eq:equilibrium_price}
\end{equation}
where {\color{black}$Y^{(k)}_t$ denotes the non-compliance probability at $t$ for a generic firm in sub-population $k$}, and
\begin{equation}\label{etak}
\eta_k=
\frac{\frac{\pi_{k}}{\gamma^{k}}}{{\displaystyle\sum_{k' \in\mcK}} \frac{\pi_{k'}}{\gamma^{k'}}}\in(0,1),\qquad \forall\,k\in\mcK.
\end{equation}
\end{proposition}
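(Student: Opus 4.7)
The plan is to substitute the optimal trading rate from \cref{prop:optimal_controls} directly into the clearing condition \eqref{eq:infinite_player_clearing_condition}, split the resulting empirical average by sub-population, pass to the limit using \cref{ass:proportion} together with a strong law of large numbers within each sub-population, and finally solve the scalar equation for $S_t^{\bm{\mu}}$.

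Concretely, plugging $\Gamma_t^{i,\star} = \tfrac{1}{\gamma^k}(P\,Y_t^i - S_t^{\bm{\mu}})$ for $i\in\mfN_k$ into \eqref{eq:infinite_player_clearing_condition} and grouping by class gives, for each fixed $t\in\mfT$ and $\PP$-a.s.,
\begin{equation*}
0 \;=\; \lim_{N\to\infty}\frac{1}{N}\sum_{k\in\mcK}\sum_{i\in\mfN_k}\frac{1}{\gamma^k}\bigl(P\,Y_t^i - S_t^{\bm{\mu}}\bigr)
\;=\; \lim_{N\to\infty}\sum_{k\in\mcK}\frac{N_k}{N}\cdot\frac{1}{\gamma^k}\cdot\frac{1}{N_k}\sum_{i\in\mfN_k}\bigl(P\,Y_t^i - S_t^{\bm{\mu}}\bigr).
\end{equation*}
\cref{ass:proportion} handles the $N_k/N\to\pi_k$ factor. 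The main obstacle is justifying that $\tfrac{1}{N_k}\sum_{i\in\mfN_k}Y_t^i\to \EE^{\mu_t^{(k)}}[Y_t^i]$ a.s.\ as $N_k\to\infty$. The key observation is that, with $\bm{\mu}$ treated as an exogenous input (as in the variational step leading to \cref{prop:optimal_controls}), the candidate price $S_t^{\bm{\mu}}$ is measurable with respect to a common information structure that does not depend on any individual $W^i$; given this common structure together with the class-$k$ parameters $(\zeta^k,\gamma^k,\sigma^k,h^k,R^k)$, the state processes $\{X^i\}_{i\in\mfN_k}$ are driven by the i.i.d.\ initial conditions $\xi^i\sim\mu_0^{(k)}$ (\cref{IntialStateAss}) and the mutually independent Brownian motions $W^i$, so the $X^i$ are conditionally i.i.d.\ within class $k$, and hence so are the $Y_t^i=\EE[F_\delta'(R^k-X_T^i)\mid\Gg_t^i]$. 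Since $F_\delta'$ is bounded in $[0,1]$, each $Y_t^i$ is uniformly bounded and a conditional strong law of large numbers applies, yielding a.s.\ convergence to the class-$k$ marginal expectation $\EE^{\mu_t^{(k)}}[Y_t^i]$.

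Taking the limit therefore produces the scalar equation
\begin{equation*}
0 \;=\; \sum_{k\in\mcK}\frac{\pi_k}{\gamma^k}\Bigl(P\,\EE^{\mu_t^{(k)}}[Y_t^i]-S_t^{\bm{\mu}}\Bigr),
\end{equation*}
whose sole dependence on $S_t^{\bm{\mu}}$ is affine; isolating $S_t^{\bm{\mu}}$ and dividing by $\sum_{k'\in\mcK}\pi_{k'}/\gamma^{k'}\in(0,\infty)$ (well-defined since each $\pi_k\in(0,1)$ and $\gamma^k>0$) yields \eqref{eq:equilibrium_price} with the weights $\eta_k$ of \eqref{etak}. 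That the $\eta_k$ form a convex combination summing to one is immediate from their definition, and establishes the stated form.
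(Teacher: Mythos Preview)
Your proposal is correct and follows essentially the same approach as the paper: substitute \eqref{eq:optGamma} into the clearing condition \eqref{eq:infinite_player_clearing_condition}, group by sub-population, pass to the limit via \cref{ass:proportion} and a law of large numbers within each class, and solve the resulting affine equation for $S_t^{\bm{\mu}}$. If anything, you supply more justification for the empirical-average step than the paper does (which simply cites ``limit laws''); note that since the mean-field measure flow $\bm{\mu}$ is deterministic here (cf.\ \cref{rem:deterministic_measure_flow}), the within-class processes are in fact unconditionally i.i.d., so the ordinary SLLN already suffices.
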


\begin{proof}
We begin with the clearing condition {\eqref{eq:infinite_player_clearing_condition}}. 
 
\begin{align*}
0 &= \lim_{N\rightarrow\infty}\frac{\sum_{i\in\mfN} \Gamma_t^{i, \star}}{N} =  \lim_{N\rightarrow\infty} \sum_{k\in\mcK} \sum_{i \in \mfN_k} \frac{\Gamma^{i, \star}_t}{N} \nonumber \\
&= \lim_{N\rightarrow\infty} \sum_{k\in\mcK} \sum_{i \in \mfN_k} \tfrac{1}{N \gamma^k} \left( P Y_t^i - S_t^{\bm{\mu}}\right)
 = \lim_{N\rightarrow\infty} \sum_{k\in\mcK} \tfrac{1}{N \gamma^k} \sum_{i \in \mfN_k} P Y_t^{i} - S_t^{\bm{\mu}} \lim_{N\rightarrow\infty} \sum_{k\in\mcK} \tfrac{N_k}{N \gamma^k}\,.
\end{align*}
Therefore,
\begin{align}
S_t^{\bm{\mu}} = \frac{\displaystyle\lim_{N\rightarrow\infty}\sum_{k\in\mcK} \frac{P}{N\gamma^k} \sum_{i \in \mfN_k} Y_t^{i}}{\displaystyle\sum_{k\in\mcK} \frac{\pi_k}{\gamma^k}} = \frac{\displaystyle\lim_{N\rightarrow\infty}\sum_{k\in\mcK} \frac{P N_k}{N \gamma^k} \sum_{i \in \mfN_k} \frac{Y_t^{i}}{N_k}}{\displaystyle\sum_{k\in\mcK} \frac{\pi_k}{\gamma^k}} 
= \color{black} P \sum_{k\in\mcK} \eta_k\, \EE\big[Y_t^{\ikrep}\big], \label{eq:infinite_player_SREC_price}
\end{align}
as required. The third equality follows by substituting \eqref{eq:optGamma}, and all subsequent equalities follow from \eqref{eq:proportion}, the definition of $Y^i_t$ \eqref{eqn:defn-Y}, \textcolor{black}{de Finetti's Law of Large Numbers, the lack of common noise,} and limit laws.

\textcolor{black}{For clarity, we point out some notational changes in the last equality, where we replace the limiting mean of $Y_t^\ik$ (for $i\in\mfN_k$) with the expectation of $Y_t^\ikrep$. This notational change is to more concretely emphasize the dependence on the sub-population $k$. It is implicit in the notation $Y_t^\ik$, as $i$ belongs to sub-population $k$, but we make this more explicit when put into the expectation for the reader's benefit.}

\end{proof}

As a reminder, $\color{black} Y_t^{\ikrep}$ may be interpreted as the non-compliance probability of a representative firm in sub-population $k$. As such, \eqref{eq:equilibrium_price} may be interpreted as follows: (i) The expression $\color{black} \EE[Y_t^{\ikrep}]$ is the average  non-compliance probability within sub-population $k$, and (ii) the SREC equilibrium price is a weighted average over sub-populations (scaled by $P$) where the weights are proportional to the ratio of that sub-population and its trading costs $\frac{\pi_k}{\gamma^k}$.
The form of \eqref{eq:equilibrium_price} implies a few notable properties of the SREC price. By \eqref{eqn:defn-Y},  $S_t^{\bm{\mu}} \in [0, P]$ as expected. SREC prices increase as firms increase their non-compliance probabilities, which results from the fact that increasing demand for SRECs (from firms who are far from compliance) increases the price. The opposite holds true as well; as firms decrease their non-compliance probabilities (that is, they are highly likely to comply), the SREC price decreases.

Finally,  while $Y_T^i$ is either zero or one for all $i \in \mathfrak{N}$, depending on whether firm $i$ has sufficient SRECs to comply, it does not imply that $S_t^{\bm{\mu}}$ converges to $0$ or $P$. Instead, the SREC price is a weighted average of non-compliance probabilities across all agents/types.

We now specify the form of the equilibrium price $S_t^{{\mu^{[N]}}}$ in the finite-player game. As previously mentioned, this is derived endogenously, through the optimal behaviours that emerge from the agents optimizing their respective cost functionals, and the market clearing condition \eqref{eq:market_clearing_condition}.

\begin{proposition}[Market Clearing SREC Price in Finite Player Game] \label{prop:finite_player_SREC_price}
In the setting of the finite-player game described in Section \ref{sec:model}, suppose that:
\begin{itemize}
    \item[(i)] The market clearing condition \eqref{eq:market_clearing_condition} holds.
    \item[(ii)] Firms behave as per the optimal controls for the mean-field game (i.e. they generate and trade according to \eqref{eq:optG} and \eqref{eq:optGamma}), modified for the version of the price they observe and $Y^i_t$ in the finite-player game.
\end{itemize}
The market-clearing SREC price is 
\begin{equation}
S_t^{{\mu^{[N]}}} = P\,\frac{\displaystyle\sum_{k\in\mcK} \frac{ N_k}{N\gamma^k} \sum_{i\in\mfN_k} \frac{Y_t^i}{N_k}}{\displaystyle\sum_{k\in\mcK} \frac{N_k}{N \gamma^k}}. \label{eq:finite_player_SREC_price}
\end{equation}
\end{proposition}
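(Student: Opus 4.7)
The plan is to mirror the proof of \cref{prop:infinite_player_SREC_price} but without passing to the limit, since the finite-$N$ market-clearing condition is itself an exact equation (not a limiting one). I would start by writing the assumed optimal trading rate in the finite-player setting, namely
\[
\Gamma_t^{i,\star} = \tfrac{1}{\gamma^k}\bigl(P\,Y_t^i - S_t^{\mu^{[N]}}\bigr), \qquad i \in \mfN_k,
\]
where $Y_t^i$ and $S_t^{\mu^{[N]}}$ are the finite-player analogues (with conditioning on $\Gg_t^i$ from \eqref{individualFilt} rather than \eqref{individualFilt_infpop}). I would then substitute this into the market-clearing condition \eqref{eq:market_clearing_condition}.

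Concretely, substituting and splitting the double sum by sub-population yields
\[
0 \;=\; \frac{1}{N}\sum_{i\in\mfN}\Gamma_t^{i,\star}
\;=\; \sum_{k\in\mcK}\sum_{i\in\mfN_k}\tfrac{1}{N\gamma^k}\bigl(P\,Y_t^i - S_t^{\mu^{[N]}}\bigr)
\;=\; \sum_{k\in\mcK}\tfrac{P}{N\gamma^k}\sum_{i\in\mfN_k} Y_t^i \;-\; S_t^{\mu^{[N]}}\sum_{k\in\mcK}\tfrac{N_k}{N\gamma^k}.
\]
Since $\gamma^k>0$ for every $k\in\mcK$ and $N_k\geq 1$, the coefficient $\sum_{k\in\mcK}\tfrac{N_k}{N\gamma^k}$ is strictly positive, so I can isolate $S_t^{\mu^{[N]}}$. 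After rearranging and multiplying the $k$-th term of the numerator by $N_k/N_k$ to put it into the weighted-average form displayed in \eqref{eq:finite_player_SREC_price}, the claim follows. This computation is purely algebraic and holds $t$-a.e., $\PP$-a.s., inheriting these qualifiers from the assumed clearing condition.

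The only subtlety, and thus the ``main obstacle'' to address, is justifying the use of the mean-field form of $\Gamma^{i,\star}$ in the finite-player setting: the optimal-control result \cref{prop:optimal_controls} was derived under the MFG formulation where $S_t^{\bm{\mu}}$ is treated as exogenous to an individual agent. In the finite-player game, each agent's trading rate enters $S_t^{\mu^{[N]}}$ directly through the clearing mechanism. The proposition's hypothesis explicitly sidesteps this by assuming firms apply the MFG-form controls (as flagged in the footnote), so the argument is really: given that agents use the prescribed feedback strategies, the only price process consistent with clearing is the one in \eqref{eq:finite_player_SREC_price}. I would state this scope clearly at the start of the proof so the reader understands that the proposition characterizes the price induced by those strategies, and that the $\epsilon$-Nash verification of applying these controls in the finite game is deferred to \cref{sec:epsilon-nash}.
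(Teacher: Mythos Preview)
Your proposal is correct and follows essentially the same argument as the paper: substitute the assumed MFG-form trading rate into the finite-$N$ clearing condition, split by sub-population, and solve the resulting linear equation for $S_t^{\mu^{[N]}}$. The only cosmetic difference is that the paper drops the $1/N$ factor at the outset (noting the equivalence for finite $N$) while you carry it through, and your added paragraph on the scope of the hypothesis is extra commentary rather than a divergence in method.
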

\begin{proof}
We note that the market clearing condition in \eqref{eq:market_clearing_condition} is equivalent to the condition $\sum_{i\in\mfN} \Gamma_t^{i, \star} = 0$, as $N$ is finite. We begin with this.
\begin{align}
0 = \sum_{i\in\mfN} \Gamma_t^{i, \star} =  \sum_{k\in\mcK} \sum_{i \in \mfN_k} \Gamma^{i, *}_t 
&=  \sum_{k\in\mcK} \sum_{i \in \mfN_k} \tfrac{1}{\gamma^k} \left( P Y_t^{i} - S_t^{{\mu^{[N]}}}\right) \nonumber \\
& = \sum_{k\in\mcK} \tfrac{1}{\gamma^k} \sum_{i \in \mfN_k} P Y_t^{i} - S_t^{{\mu^{[N]}}} \sum_{k\in\mcK} \tfrac{N_k}{\gamma^k} \nonumber 
\end{align}
Hence,
\begin{align}
S_t^{{\mu^{[N]}}} &= \frac{\sum_{k\in\mcK} \tfrac{P}{\gamma^k} \sum_{i \in \mfN_k} Y_t^{i}}{\sum_{k\in\mcK} \tfrac{N_k}{\gamma^k}}  = \frac{\sum_{k\in\mcK} \tfrac{P N_k}{N \gamma^k} \sum_{i \in \mfN_k} \tfrac{Y_t^{i}}{N_k}}{\sum_{k\in\mcK} \tfrac{N_k}{N\gamma^k}} ,
\end{align}
as required. The second equality arises from substituting \eqref{eq:optGamma}, and each subsequent equality is through simple algebraic operations.
\end{proof}




The equilibrium SREC price in the finite-player game has many of the same properties as the market clearing SREC price in the infinite player game. Namely, it also takes values on the range $[0, P]$, is impacted the same way by the non-compliance probabilities of the regulated firms, and is not necessarily pinned to $0$ or $P$ as $t \to T$. The finite-player market clearing SREC price \eqref{eq:finite_player_SREC_price}, however, is stochastic. 
This is in contrast to \eqref{eq:equilibrium_price}, which is deterministic, as discussed in the remark below.


\begin{remark} \label{rem:deterministic_measure_flow}
It is well known that mean-field games with only minor agents result in a deterministic mean-field measure flow of states (see e.g. \cite{huang2007invariance, carmona2013probabilistic}). Consequently, $\mu_t^{(k)}$ is a deterministic measure flow across $t \in \mathfrak{T}$, for all $k \in \mcK$. As a result, $S_t^{\bm{\mu}}$ is also deterministic. \textcolor{black}{Moreover, $S_t^{\bm{\mu}}$ can be thought of as an expectation (across agents) of a Doob martingale (which is indexed by agents), hence it is in fact constant. When applying the MFG solution to the finite population case, however, the SREC price will inherit stochasticity as discussed in Section \ref{finite_player_game}.}
\end{remark}

\begin{remark}
Prior work in equilibrium pricing for C\&T find that the SREC price is proportional to the probability that the aggregate SREC inventory meets the aggregate requirement (\cite{hitzemann2018equilibrium}). This contrasts with our result, as here we consider trading frictions that prevent an agent from instantaneously changing their inventory with zero cost.
\end{remark}


\subsection{An FBSDE for the Mean-Field Controls} 
\label{sec:FBSDE} 

While we have obtained the form of the agents' optimal controls (given  $\bm{\mu}$), we have not yet solved for them. Additionally, we have not specified the mean-field distributions $\bm{\mu}$. 

Recall that agent-$i$'s optimal controls are governed by \eqref{eq:optG} and \eqref{eq:optGamma}, and are affine transformations of $Y_t^\ik$ and the clearing price. Additionally, recall that, for each $k\in\mcK$, $(\mu_t^{(k)})_{t \in \mfT}$ represents the mean-field distributions of $X_t^\ik$ across agents $i\in\mfN_k$. 

To fully specify the optimal controls of a regulated LSE in an SREC market, we must characterize $Y^\ik_t$ as well as $\bm{\mu}$. Specifically, we need to find the fixed-point of $\bm{\mu}$ such that the optimal controls given the choice of $\bm{\mu}$ imply a controlled state process $X_t^\ik$ over $i$ with  distribution  $\bm{\mu}$.
To achieve this, we first formulate $Y_t^\ik$ as a component of a solution to the FBSDE specified in \Cref{cor:fbsde}.

\begin{corollary} [FBSDE Formulation] \label{cor:fbsde}
The triple $(X_t^{i}, Y_t^{i}, Z_t^{i})$, given $(\bm{\mu}_t)_{t\in\mfT}$, for $i\in\mfN_k$, satisfies the FBSDE 

\begin{subequations}
\begin{align}
dX_t^{i} &= \left(h_t^k + \upsilon^k\,P  Y_t^i
- P\tfrac{1}{\tgamma^k} \sum_{k\in\mcK} \tfrac{\pi_k}{\gamma^k} \,\textcolor{black}{\EE\big[Y_t^{\ikrep}\big]}\right) \,dt + \sigma_t^k dW_t^i, 
\label{eq:HetFBSDE_fwd} 
\\
X_0^{i} &= \xi^i,\label{eq:HetFBSDE_fwd_IC}
\\
dY_t^i &= Z_t^{i} \,dW_t^i,  \label{eq:HetFBSDE_bwd}
\\
Y_T^{i} &=F_{\delta}^\prime(R^k - X_T^{i}) ,
\end{align}
\label{eqn:FBSDE-full}%
\end{subequations}%
where
\begin{equation}
\tgamma^k := \gamma^k \sum_{k'\in\mcK} \frac{\pi_{k'}}{\gamma^{k'}},
\qquad
\upsilon^k := \frac{1}{\gamma^k} + \frac{1}{\zeta^k},
\end{equation}
and $\xi^i\sim \mu^{(k)}_0$.
\end{corollary}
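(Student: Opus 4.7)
The plan is straightforward: the FBSDE is obtained by substituting the optimal controls from \cref{prop:optimal_controls} and the equilibrium price from \cref{prop:infinite_player_SREC_price} into the controlled state dynamics \eqref{eq:state_dynamicsMF} for the forward leg, and by invoking the martingale property of conditional expectations together with the martingale representation theorem for the backward leg.

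For the forward equation, I would substitute \eqref{eq:optG} and \eqref{eq:optGamma} into \eqref{eq:state_dynamicsMF} to obtain the drift
\[
g_t^{i,\star}+\Gamma_t^{i,\star} = h_t^k + \tfrac{P}{\zeta^k}Y_t^i + \tfrac{1}{\gamma^k}\bigl(P Y_t^i - S_t^{\bm\mu}\bigr) = h_t^k + \upsilon^k P Y_t^i - \tfrac{1}{\gamma^k}S_t^{\bm\mu},
\]
recognizing the definition $\upsilon^k = \tfrac{1}{\zeta^k}+\tfrac{1}{\gamma^k}$. Then I would insert the equilibrium price \eqref{eq:equilibrium_price}, noting that
\[
\tfrac{1}{\gamma^k} S_t^{\bm\mu} = \tfrac{P}{\gamma^k}\sum_{k'\in\mcK}\eta_{k'}\,\EE^{\mu_t^{(k')}}[Y_t^\cdot] = \tfrac{P}{\tgamma^k}\sum_{k'\in\mcK}\tfrac{\pi_{k'}}{\gamma^{k'}}\,\EE^{\mu_t^{(k')}}[Y_t^\cdot],
\]
where the second equality uses the definition \eqref{etak} of $\eta_{k'}$ and the definition $\tgamma^k=\gamma^k\sum_{k'}\pi_{k'}/\gamma^{k'}$. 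The diffusion coefficient $\sigma_t^k$ is unchanged, and the initial condition $X_0^i=\xi^i$ is by construction, so \eqref{eq:HetFBSDE_fwd}-\eqref{eq:HetFBSDE_fwd_IC} follow.

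For the backward equation, the observation is that by its very definition \eqref{eqn:defn-Y}, $Y_t^i = \EE[F_\delta'(R^k-X_T^i)\mid \Gg_t^i]$ is, for fixed terminal random variable $F_\delta'(R^k-X_T^i)$, a $\Gg^i$-martingale; it is square integrable because $F_\delta'$ is bounded (by construction of $F_\delta$ in \eqref{eq:regularizedNCC}). The terminal condition $Y_T^i = F_\delta'(R^k-X_T^i)$ is immediate from the $\Gg_T^i$-measurability of $X_T^i$. The main technical point to be careful about is the martingale representation step: since in the MFG limit $S_t^{\bm\mu}$ is deterministic (by \cref{rem:deterministic_measure_flow}), one has $\Gg_t^i = \sigma(\xi^i)\vee \sigma((W_u^i)_{u\in[0,t]})$, the natural augmentation of the Brownian filtration by the initial condition. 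On this augmented filtration the martingale representation theorem applies and yields a unique $\Gg^i$-progressively measurable, square-integrable $Z^i$ such that $dY_t^i = Z_t^i\,dW_t^i$, which is \eqref{eq:HetFBSDE_bwd}. Apart from this mild filtration bookkeeping, there is no substantive obstacle: the corollary is essentially an assembly of \cref{prop:optimal_controls}, \cref{prop:infinite_player_SREC_price}, and the definition of $Y_t^i$ into one coupled system.
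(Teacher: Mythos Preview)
Your proof is correct and follows essentially the same approach as the paper: substitute \eqref{eq:optG}, \eqref{eq:optGamma}, and \eqref{eq:equilibrium_price} into the state dynamics for the forward equation, and invoke the martingale property of $Y_t^i$ together with martingale representation for the backward equation. Your treatment is in fact more detailed than the paper's, which dispatches the backward leg in a single sentence with a reference to Pham; your explicit attention to the filtration (via \cref{rem:deterministic_measure_flow}) is a welcome clarification of why the representation theorem applies here.
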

In \eqref{eqn:FBSDE-full}, there are $K$ distinct forms of equations (each agent within a sub-population satisfies the same form of FBSDE).
\begin{proof}
Substitute \eqref{eq:optG}, \eqref{eq:optGamma}, and \eqref{eq:equilibrium_price} into \eqref{eq:state_dynamics} to obtain \eqref{eq:HetFBSDE_fwd}. Observe that $Y_t^i$ is a martingale, and hence can be expressed as in \eqref{eq:HetFBSDE_bwd} (see Chapter 6 of \cite{pham2009continuous}).
\end{proof}


Before proceeding, we make two remarks.

\begin{remark} \label{rem:fbsde}
From \Cref{prop:optimality}, the G\^ateaux derivative vanishes if and only if \eqref{eq:actions} is satisfied, hence, when $\bm{\mu}$ is given, there is an equivalence between the variational problem and the FBSDE \eqref{eqn:FBSDE-full}. Therefore, when $\bm{\mu}$ is given, we are guaranteed a unique solution to the FBSDE \eqref{eqn:FBSDE-full}. However, this does not guarantee that a $\bm{\mu}$ exists such that the unique solution to the FBSDE given said $\bm{\mu}$ results in a controlled state process with marginal distribution $\bm{\mu}$. 
\end{remark}

\begin{remark} \label{rmk:markov_lipschitz_adjoint}
We note that \eqref{eqn:FBSDE-full} falls within the framework in \cite{delarue2002existence}, and the set of assumptions (A1) therein are clearly satisfied. By Corollary 1.5 of \cite{delarue2002existence}, for $\bm{\mu}\in\mcP(\R)$,  $Y_t^i$ has a Markov form, i.e., there exists a function $Y^{(k)}:\R_+\times \R\to \R$ s.t. $Y_t^i = Y^{(k)}(t, X_t^i; \bm{\mu})$. Moreover, $Y^{(k)}(t, X_t^i; \bm{\mu})$ is Lipschitz continuous in its  arguments. While $\bm{\mu}$ is not an argument, we include it to emphasize  that $Y^{(k)}(\cdot,\cdot;\bm\mu)$  depends on the, at this point exogenously, specified choice of $\bm{\mu}$.
\end{remark}
As a consequence of \Cref{rmk:markov_lipschitz_adjoint}, we may re-state the infinite-population market clearing price \eqref{eq:equilibrium_price} as 
\begin{equation}
    S_t^{\bm{\mu}} = P \sum_{k\in\mcK} \eta_k \int Y^{(k)}(t, x; \bm{\mu})\; \mu_t^{(k)}(dx), \label{eq:equilibrium_price_mkv}
\end{equation}
with $\eta_k$ retaining its definition \eqref{etak} from \Cref{sec:SREC_price}.

Returning to the FBSDE \eqref{eqn:FBSDE-full}, we next aim to address the question of the existence (and possibly uniqueness) of a fixed point of $\bm{\mu}$ to it.
To be precise, we seek a mean-field distribution $\bm{\mu}$ and a progressively measurable triple $(X^i, Y^i, Z^i) = (X_t^i, Y_t^i, Z_t^i)_{t \in \mathfrak{T}}$ that satisfies \eqref{eqn:FBSDE-full}, such that $\mu_t^{(k)}$ coincides with $\mathcal{L}(X_t^i)$ for all $i \in \mfN_k$, for all $k \in \mcK$. This is a fixed point problem on the space of measure flows.

At this fixed point, $\mu_t^{(k)}$ is replaced with $\mathcal{L}(X_t^i)$ in \eqref{eqn:FBSDE-full}, and results in the FBSDE becoming of McKean-Vlasov FBSDE (MV-FBSDE) type. Such equations are characterized by their dependence on the law of its solution and arise in the probabilistic formulation of mean-field games. For more details on MV-FBSDEs, see Chapter 3.2.2 of \cite{carmona2018probabilistic}. Finding this fixed point is equivalent to finding a solution to the MV-FBSDEs. Armed with such a solution, we arrive at the full characterization of all agents' optimal actions \eqref{eq:problem_statementMF} and the resulting equilibrium SREC price \eqref{eq:equilibrium_price}. We now investigate whether such a fixed point exists, and if so, whether it is unique.

\subsection{Existence and Uniqueness} \label{sec:existence_uniqueness}

In this section we study existence and uniqueness of a solution to the MV-FBSDE. We first discuss existence. 
\begin{proposition}[Existence of a Fixed Point] \label{prop:existence}
There exists a mean-field distribution $\bm{\mu}$ and a progressively measurable triple $(X^i, Y^i, Z^i) = (X_t^i, Y_t^i, Z_t^i)_{t \in \mathfrak{T}}$ that satisfy \eqref{eqn:FBSDE-full}, such that $\mu_t^{(k)}$ coincides with $\mathcal{L}(X_t^i)$ for all $i \in \mfN_k$, for all $k \in \mcK$.
\end{proposition}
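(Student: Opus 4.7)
The plan is to reduce the measure-flow fixed-point problem to a finite-dimensional one and apply Brouwer's theorem. The crucial simplification is that $Y^i$, being the solution of the BSDE \eqref{eq:HetFBSDE_bwd} which has zero drift, is a $\Gg^i$-martingale. Hence, under the Markov representation of \Cref{rmk:markov_lipschitz_adjoint}, for any candidate fixed-point measure flow $\bm{\mu}$,
\begin{equation*}
\EE^{\mu_t^{(k)}}[Y_t^\cdot] \;=\; \int Y^{(k)}(t,x;\bm{\mu})\,\mu_t^{(k)}(dx) \;=\; \EE[Y_t^{i,\bm{\mu}}] \;=\; \EE[Y_0^{i,\bm{\mu}}] \;=\; \EE[F_{\delta}^\prime(R^k - X_T^{i,\bm{\mu}})],
\end{equation*}
which is constant in $t$. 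Consequently, the mean-field coupling in \eqref{eq:HetFBSDE_fwd} is entirely described by a vector of scalars $\bm{\alpha} := (\alpha^{(k)})_{k\in\mcK} \in [0,1]^K$, and any fixed point in the full space of measure flows projects onto a fixed point of a map on $[0,1]^K$.

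First, I would parametrise the problem by $\bm{\alpha}\in[0,1]^K$. For each such $\bm{\alpha}$, the resulting FBSDE
\begin{equation*}
dX_t^i = \left(h_t^k + \upsilon^k P\,Y_t^i - \tfrac{P}{\tgamma^k}\sum_{k^\prime\in\mcK}\tfrac{\pi_{k^\prime}}{\gamma^{k^\prime}}\alpha^{(k^\prime)}\right) dt + \sigma_t^k\,dW_t^i, \qquad dY_t^i = Z_t^i\,dW_t^i,
\end{equation*}
with $X_0^i = \xi^i$ and $Y_T^i = F_{\delta}^\prime(R^k - X_T^i)$, decouples into $K$ independent standard Lipschitz FBSDEs. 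The coefficients are Lipschitz in $(x,y)$, the terminal function $F_{\delta}^\prime$ is bounded in $[0,1]$ and Lipschitz, and $\sigma^k$ is deterministic, so the same Delarue theory invoked in \Cref{rmk:markov_lipschitz_adjoint} delivers a unique adapted solution $(X^{i,\bm{\alpha}}, Y^{i,\bm{\alpha}}, Z^{i,\bm{\alpha}})$. Define $\Psi : [0,1]^K \to [0,1]^K$ by $\Psi(\bm{\alpha})^{(k)} := \EE[F_{\delta}^\prime(R^k - X_T^{i,\bm{\alpha}})]$ for $i \in \mfN_k$. That $\Psi$ takes values in $[0,1]^K$ is immediate from $F_{\delta}^\prime \in [0,1]$. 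Continuity of $\Psi$ follows from the classical $L^2$-stability estimate for the FBSDE with respect to its additive drift parameter: a Gronwall-type argument combined with the Lipschitz property of $F_{\delta}^\prime$ gives $\|X_T^{i,\bm{\alpha}} - X_T^{i,\bm{\alpha}^\prime}\|_{L^2} \le C\,|\bm{\alpha} - \bm{\alpha}^\prime|$, hence Lipschitzness of $\Psi$.

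Brouwer's fixed-point theorem on the compact convex set $[0,1]^K$ then yields some $\bm{\alpha}^\star$ with $\Psi(\bm{\alpha}^\star) = \bm{\alpha}^\star$. Setting $\mu_t^{(k),\star} := \mathcal{L}(X_t^{i,\bm{\alpha}^\star})$ for $i \in \mfN_k$ defines the candidate fixed-point measure flow, and consistency with \eqref{eqn:FBSDE-full} holds because $\EE^{\mu_t^{(k^\prime),\star}}[Y_t^\cdot] = \EE[Y_t^{i,\bm{\alpha}^\star}] = \EE[Y_0^{i,\bm{\alpha}^\star}] = \alpha^{(k^\prime),\star}$, using the martingale property of $Y^{i,\bm{\alpha}^\star}$ and the fixed-point equation; the drift in the parametrised FBSDE then matches the drift in \eqref{eq:HetFBSDE_fwd} with $\bm{\mu} = \bm{\mu}^\star$, so uniqueness of the parametrised FBSDE identifies $(X^{i,\bm{\alpha}^\star}, Y^{i,\bm{\alpha}^\star}, Z^{i,\bm{\alpha}^\star})$ as the required triple. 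The main technical obstacle will be the FBSDE stability estimate underlying continuity of $\Psi$: although standard, the forward-backward coupling ($Y_t^i$ appears in the drift of $X_t^i$ and $X_T^i$ in the terminal of $Y^i$) requires a combined Gronwall-type estimate rather than a straightforward one, and one must verify that the Lipschitz/monotonicity hypotheses of \cite{delarue2002existence} indeed hold uniformly in $\bm{\alpha}$ in our setting --- but precisely these hypotheses have already been checked when invoking \Cref{rmk:markov_lipschitz_adjoint}.
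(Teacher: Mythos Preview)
Your approach is correct and genuinely different from the paper's. The paper does not exploit the martingale property of $Y^i$; instead it introduces an auxiliary mean-field game in which the price is \emph{pre-specified} to coincide with the clearing formula \eqref{eq:equilibrium_price}, observes that this auxiliary problem falls directly into the Carmona--Delarue probabilistic MFG framework \cite{carmona2013probabilistic}, and then invokes their Theorem~3.2 as a black box to obtain a fixed point. Because the auxiliary FBSDE is formally identical to \eqref{eqn:FBSDE-full}, existence for the original problem follows immediately.

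Your route is more elementary and more transparent for this particular model: the observation that $\EE^{\mu_t^{(k)}}[Y_t^\cdot]$ is constant in $t$ at any fixed point collapses the infinite-dimensional measure-flow fixed-point problem to a finite-dimensional one on $[0,1]^K$, after which Brouwer suffices and no MFG machinery is needed. This buys you a self-contained argument that makes clear \emph{why} existence holds here (the mean-field interaction is effectively scalar), whereas the paper's argument buys generality --- it would continue to apply if the BSDE had a non-zero driver and $Y^i$ were not a martingale, or if the price depended on the mean field in a more complicated way. The one place your argument is thinner than the paper's is the continuity of $\Psi$: you correctly flag that the coupled FBSDE stability estimate is the substantive step, and indeed Delarue's framework (already invoked in \Cref{rmk:markov_lipschitz_adjoint}) delivers a uniformly Lipschitz decoupling field, from which continuity in the additive drift parameter $\bm{\alpha}$ follows by standard forward-SDE stability; but this deserves an explicit citation or a short computation rather than a pointer to ``a Gronwall-type argument,'' since naive Gronwall on the coupled system can circle back on itself without the decoupling-field intermediary.
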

\begin{proof} See \Cref{Proof:prop:existence}.
\end{proof}
Next we consider uniqueness.~
\begin{proposition} [Uniqueness of a Fixed Point] \label{prop:uniqueness}
There is at most one solution to the MV-FBSDE associated with \eqref{eqn:FBSDE-full}.
\end{proposition}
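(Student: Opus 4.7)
The plan is a Lasry--Lions style monotonicity argument, tailored to the present coupling through the equilibrium price. Suppose two solutions $(X^i, Y^i, Z^i, \bm{\mu})$ and $(\hat X^i, \hat Y^i, \hat Z^i, \hat{\bm{\mu}})$ of the system \eqref{eqn:FBSDE-full} exist. Write $\delta X_t^i := X_t^i - \hat X_t^i$, $\delta Y_t^i := Y_t^i - \hat Y_t^i$, $\delta Z_t^i := Z_t^i - \hat Z_t^i$ and $\delta m_t^{(k)} := \EE^{\mu_t^{(k)}}[Y_t^{\cdot}] - \EE^{\hat{\mu}_t^{(k)}}[\hat Y_t^{\cdot}]$. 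Two structural features matter: the two copies share the same Brownian motion $W^i$ and the same initial datum $\xi^i$, so $\delta X_0^i = 0$; and since $\sigma_t^k$ is deterministic, $\delta X^i$ is of bounded variation in $t$ and hence $\langle \delta X^i, \delta Y^i\rangle \equiv 0$. Apply It\^o to $\delta X_t^i \,\delta Y_t^i$; after taking expectations the $dW^i$-martingale vanishes (with a standard localization), and convexity of $F_\delta$, i.e. monotonicity of $F_\delta'$, yields $\EE[\delta X_T^i\,\delta Y_T^i]\le 0$. Expanding the forward drift gives, for $i\in\mfN_k$,
\[
\EE[\delta X_T^i\,\delta Y_T^i] = \EE\left[\int_0^T \delta Y_t^i\left(\upsilon^k P\,\delta Y_t^i \;-\; \tfrac{P}{\tgamma^k}\sum_{k'\in\mcK}\tfrac{\pi_{k'}}{\gamma^{k'}}\,\delta m_t^{(k')}\right)dt\right].
\]

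The second step is to multiply by $\pi_k$ and sum over $k\in\mcK$. Using exchangeability, $\EE[\delta Y_t^i] = \delta m_t^{(k)}$ for $i\in\mfN_k$. The local term produces the non-negative quantity $P\sum_k \pi_k\,\upsilon^k\,\EE[(\delta Y_t^{(k)})^2]$, while the coupling term collapses via the algebraic identity $\sum_k \tfrac{\pi_k}{\tgamma^k}\,\delta m_t^{(k)} = A_t\big/\sum_{k''}\tfrac{\pi_{k''}}{\gamma^{k''}}$, where $A_t := \sum_{k'}\tfrac{\pi_{k'}}{\gamma^{k'}}\,\delta m_t^{(k')}$, giving the non-positive contribution $-P\,A_t^2\big/\sum_{k''}\tfrac{\pi_{k''}}{\gamma^{k''}}$. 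Applying Cauchy--Schwarz with weights $\tfrac{\pi_{k'}}{\gamma^{k'}}$ followed by Jensen's inequality $(\delta m_t^{(k)})^2 = (\EE[\delta Y_t^{(k)}])^2 \le \EE[(\delta Y_t^{(k)})^2]$ gives the tight bound
\[
\frac{A_t^2}{\sum_{k''}\tfrac{\pi_{k''}}{\gamma^{k''}}} \;\le\; \sum_{k\in\mcK}\tfrac{\pi_k}{\gamma^k}\,\EE[(\delta Y_t^{(k)})^2].
\]
Combining this with the identity $\upsilon^k - \tfrac{1}{\gamma^k} = \tfrac{1}{\zeta^k}$ and the non-positivity coming from the first step collapses the whole chain to
\[
\int_0^T \sum_{k\in\mcK}\tfrac{P\,\pi_k}{\zeta^k}\,\EE[(\delta Y_t^{(k)})^2]\,dt \;\le\; 0,
\]
so $\delta Y_t^i \equiv 0$ a.s. for every $i$ and $t$. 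Substituting back, the forward drift difference is identically zero, hence $\delta X_t^i\equiv 0$ and $\mu_t^{(k)} = \hat\mu_t^{(k)}$; the martingale representation applied to the BSDE for $\delta Y^i$ then forces $\delta Z_t^i \equiv 0$.

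The crux, and the step I expect to be the most delicate, is the third paragraph: the mean-field coupling term appears with the "wrong" sign in the drift and is not a priori dominated by the local quadratic term in $\delta Y$. The uniqueness hinges on the precise arithmetic $\upsilon^k - \tfrac{1}{\gamma^k} = \tfrac{1}{\zeta^k} > 0$, which extracts a strictly positive residue from the generation penalty after Cauchy--Schwarz and Jensen have been used to tighten the coupling contribution. Absent the strictly positive generation cost $\zeta^k$, the same computation would yield only a trivial equality rather than $\delta Y \equiv 0$, so the argument genuinely uses every piece of convexity available (monotonicity of $F_\delta'$ at the terminal time, quadratic generation cost in the interior, and quadratic trading cost driving the coupling through the market-clearing identity).
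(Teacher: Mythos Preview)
Your proof is correct but proceeds along a genuinely different line from the paper's. The paper works at the control level via the auxiliary problem \eqref{eqn:equiv-MFG}: it invokes the strict convexity estimate of Carmona--Delarue (Proposition~2.5 in \cite{carmona2013probabilistic}) to bound $\lambda^k\,\EE\int_0^T\|\alpha^{i,\star}-\hat\alpha^{i,\star}\|^2\,dt$ by a difference of cost functionals, then symmetrizes in $\bm{\mu}$ and $\hat{\bm{\mu}}$, and finally takes a $\pi_k$-weighted sum over sub-populations. The key cancellation there is that the resulting right-hand side collapses to
\[
\int_0^T (S_t^{\bm{\mu}}-S_t^{\hat{\bm{\mu}}})\Bigl(\textstyle\sum_k\pi_k\,\EE^{\hat\mu^{(k)}}[\hat\Gamma^{\cdot,\star}]-\sum_k\pi_k\,\EE^{\mu^{(k)}}[\Gamma^{\cdot,\star}]\Bigr)\,dt=0,
\]
which vanishes \emph{exactly} by the market-clearing identity itself. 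You instead bypass the control formulation and attack the FBSDE directly via the monotonicity device $\EE[\delta X_T^i\,\delta Y_T^i]\le 0$, then handle the mean-field coupling with Cauchy--Schwarz and Jensen. This is more elementary---no black-box convexity estimate is imported---and it makes explicit precisely which parameter supplies the strictness: the generation cost furnishes the residual margin $\upsilon^k-1/\gamma^k=1/\zeta^k>0$ after the trading-price coupling has been absorbed. The paper's route hides this arithmetic inside the abstract constant $\lambda^k$ but, in exchange, uses the clearing condition to kill the coupling exactly rather than only up to a bound; your route trades that exactness for complete transparency about where each piece of convexity is spent.
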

\begin{proof} See \Cref{Proof:prop:uniqueness}
\end{proof}

As a result of establishing the existence and uniqueness of a solution to the MFG, we have established the existence and uniqueness of a Nash equilibrium to the problem.
\begin{theorem} [Nash Equilibrium] \label{thrm:nash_eq}
The set of optimal controls given by \eqref{eq:optG}, \eqref{eq:optGamma}, along with progressively measurable triple $(X^i, Y^i, Z^i) = (X_t^i, Y_t^i, Z_t^i)_{t \in \mfT}$ and the measure flow $\bm{\mu}$ that satisfy the MV-FBSDE version of \eqref{eqn:FBSDE-full}, forms the unique Nash equislibrium for the MFG problem. 
\end{theorem}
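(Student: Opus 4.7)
The proof is essentially a synthesis of the results already established in Propositions~\ref{prop:optimal_controls}, \ref{prop:infinite_player_SREC_price}, \ref{prop:existence}, and \ref{prop:uniqueness}, and my plan is to assemble them in the right order rather than to introduce new machinery. First, I would invoke \Cref{prop:existence} to fix a measure flow $\bm{\mu}$ and a progressively measurable triple $(X^i,Y^i,Z^i)$ solving the MV-FBSDE \eqref{eqn:FBSDE-full}, with $\mu^{(k)}_t = \mathcal{L}(X^i_t)$ for all $i\in\mfN_k$ and $k\in\mcK$. This $\bm{\mu}$ then induces, through \eqref{eq:equilibrium_price_mkv}, a (deterministic) price process $S^{\bm{\mu}}$.

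Next, the Nash property. The plan is to fix an arbitrary agent $i\in\mfN_k$ and treat $\bm{\mu}$ (hence $S^{\bm{\mu}}$) as exogenous to that agent, as is standard in the MFG limit where a single minor agent has no feedback on the aggregate. Under that exogeneity, the agent's individual problem is exactly the convex variational problem of \Cref{sec:variationalAnalysis}: by \Cref{convexity} the functional $\overline{J^i}(\cdot,\cdot;\bm{\mu})$ is strictly convex on $\mcA^i$, and by \Cref{gateaux_deriv} it is everywhere G\^ateaux differentiable, so \Cref{prop:optimality} together with \Cref{prop:optimal_controls} identifies the unique minimizer as the pair $(g^{i,\star},\Gamma^{i,\star})$ given by \eqref{eq:optG}--\eqref{eq:optGamma} with $Y^i_t$ read off the backward component of the FBSDE. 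Hence no agent can profitably deviate from the proposed strategy, which is the Nash property for the infinite-player game.

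I would then verify the consistency loop depicted in \Cref{fig:solution_methodology}: the optimal controls \eqref{eq:optG}--\eqref{eq:optGamma}, when inserted into the state dynamics \eqref{eq:state_dynamicsMF}, reproduce the forward equation \eqref{eq:HetFBSDE_fwd} precisely because $S^{\bm{\mu}}$ was chosen via \eqref{eq:equilibrium_price}; and by the fixed-point property from \Cref{prop:existence} the law of the resulting controlled state coincides with $\mu^{(k)}_t$. The clearing condition \eqref{eq:infinite_player_clearing_condition} holds by construction of $S^{\bm{\mu}}$, as derived in \Cref{prop:infinite_player_SREC_price}. Together, these checks confirm that the proposed strategies, together with $\bm{\mu}$, jointly satisfy the MFG fixed-point problem.

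Finally, for uniqueness I would simply appeal to \Cref{prop:uniqueness}: if two distinct Nash equilibria existed, each would generate a solution to the MV-FBSDE fixed-point problem, contradicting that proposition. The only genuinely subtle point is making it precise that the best-response characterization in \Cref{prop:optimal_controls} is applicable even though the exogenous price $S^{\bm{\mu}}$ was itself constructed from the aggregate behavior; this is legitimate in the MFG limit because an individual agent's deviation does not perturb the mean-field measure flow, and hence does not perturb $S^{\bm{\mu}}$. I expect this to be the only step that requires a careful one-line justification, whereas the remainder is bookkeeping of the previously established propositions.
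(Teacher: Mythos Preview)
Your proposal is correct and follows essentially the same approach as the paper: invoke \Cref{prop:existence} and \Cref{prop:uniqueness} to obtain the unique MV-FBSDE solution with consistent measure flow, then use \Cref{prop:optimal_controls} to certify that each agent's strategy is a best response given that $\bm{\mu}$, yielding the Nash property. If anything, your write-up is more explicit than the paper's brief proof in spelling out the clearing condition and the point that a single agent's deviation does not perturb $\bm{\mu}$ in the mean-field limit.
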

\begin{proof}
From \Cref{prop:existence} and \Cref{prop:uniqueness}, we obtain the existence and uniqueness of $\bm{\mu}, X^i, Y^i, Z^i$ that satisfy the MV-FBSDE version of \eqref{eqn:FBSDE-full}. Moreover, $\bm{\mu}$ is consistent with the marginal distribution of $X^i$ where agents behave per \eqref{eq:optG} and \eqref{eq:optGamma}. These behaviours optimize each agent's cost functional for a given $\bm{\mu}$ and are unique, by \Cref{prop:optimal_controls}. The consistency of $\bm{\mu}$ ensures that  agents' controls remain optimal as we flow through time, as the mean field distribution implied by agents' behaviour through time is exactly the measure flow $\bm{\mu}$. This holds for all agents, and therefore, we have found the set of unique controls that satisfy $\eqref{eq:problem_statementMF}$, and thus, the unique Nash equilibrium.
\end{proof}

Having established the existence and uniqueness of a set of controls which result in a Nash equilibrium for the MFG, we next relate this solution to a form of optimality for the finite-player game.
 
\section{The $\epsilon$-Nash Property} \label{sec:epsilon-nash}

Thus far, we have considered the mean field game version, i.e., the infinite-player limit, of our original finite player problem.  This section demonstrates that  the solution of the mean field game is a sensible approximation to the original problem. Specifically, we demonstrate that solving the mean field problem \eqref{eq:infinite_player_clearing_condition}-\eqref{eq:problem_statementMF} corresponds to an $\epsilon$-Nash equilibrium for the finite-player problem \eqref{eq:problem_statement}- \eqref{eq:market_clearing_condition}.

From the previous section,  the solution to \eqref{eq:infinite_player_clearing_condition}-\eqref{eq:problem_statementMF} corresponds to the solution to a fixed point problem involving the FBSDE \eqref{eqn:FBSDE-full}. In \Cref{sec:FBSDE}, we established the existence and uniqueness of the solution to this fixed point, i.e. we demonstrated that there exists a unique mean field distribution $\bm{\mu}$ such that the optimal controls implied by \eqref{eq:optG} and \eqref{eq:optGamma} imply a solution to the FBSDE \eqref{eqn:FBSDE-full} that has a marginal distribution coinciding with $\bm{\mu}$.
The existence of this solution implies that we have found $\{\{g^i,\Gamma^i\} \in \mcA^i\}_{i = 1}^\infty$ that form a Nash equilibrium for the infinite player game, as per \Cref{thrm:nash_eq}. The solution to the mean field problem does not, however, form a Nash equilibrium for the finite-player problem. 
For the finite player game, we assume agents use (mean field) optimal strategies applied to their observable states. That is, we assume their strategies are
\begin{subequations}
\begin{align}
    g_t^{i, \star}  = h_t^k + \tfrac{P}{\zeta^k} Y^{(k)}(t, X_t^{i, [N]}; \bm{\mu}), \quad \text{and} \quad
\Gamma_t^{i, \star} =   \tfrac{1}{\gamma^k} \left( P \, Y^{(k)}(t, X_t^{i, [N]}; \bm{\mu}) - S_t^{{\mu^{[N]}}}\right).
\label{eqn:finite-opt-controls}
\end{align}
Here, the equilibrium price is given by the finite-player clearing condition
\begin{equation}
    S_t^{\mu^{[N]}} = P
    \frac{\displaystyle\sum_{k\in\mcK} \tfrac{ N_k}{N \gamma^k} \sum_{i \in \mfN_k} \tfrac{Y^{(k)}(t, X_t^{i, [N]}; \bm{\mu})}{N_k}}
    {\displaystyle\sum_{k\in\mcK} \tfrac{N_k}{N\gamma^k}}. 
    \label{eqn:S-finite-clearin}
\end{equation}%
\label{eqn:finite-Opt-Controls-and-clearing}%
\end{subequations}%
Recall $X_t^{i, [N]}$ is the agents' finite-population state process solving \eqref{eq:state_dynamics}, with the SREC price above (implicitly assuming all agents are behaving per \eqref{eqn:finite-opt-controls}). We denote $((X^i,Y^i)_{i\in\mfN},\bm{\mu})$ as the solutions to  \eqref{eqn:FBSDE-full} such $\mathcal{L}\left(\{X_t^i\}_{i\in\mfN_k}\right)=\mu^{(k)}$, $\forall\;k\in\mcK$, with $Y^{(k)}$ as the Markov representation of $Y^i$ for agent $i$ in sub-population $k$.

\begin{remark}
The difference between $X_t^{i, [N]}$ and $X_t^i$ is subtle yet important. The former is the forward state described through \eqref{eq:state_dynamics}, which implies that the agent is acting on the finite-player SREC price. Meanwhile, $X_t^i$ is a component of the solution to \eqref{eqn:FBSDE-full}, which implies that the agent is behaving according to \eqref{eq:optG} and \eqref{eq:optGamma}, and in particular, is using the deterministic path of the SREC price \eqref{eq:equilibrium_price} process (or equivalently \eqref{eq:equilibrium_price_mkv}) implied through the infinite-player game.
\end{remark}

\begin{remark}
In this section, we express the (finite-player) performance criterion using the notation $J^{i}(g^i, {\Gamma}^i, \bm{g}^{-i}, \bm{\Gamma}^{-i})$, where $\bm{g}^{-i} = (g^1, ..., g^{i-1}, g^{i+1}, ..., g^{N})$ and $\bm{\Gamma}^{-i}$ is defined analogously.
\end{remark}
\begin{definition} ($\epsilon$-Nash equilibrium)
 In the finite-player ($N$-player) game, a set of controls  $\mathcal{U}^\star=\{(g^{i,\star}, \Gamma^{i,\star})\in \mcA^i,  i \in \mfN\}$ forms an $\epsilon$-Nash equilibrium for the collection of objective functionals $\{J^{i}, i \in \mfN_k, k \in \mcK \}$ if there exists an $\epsilon > 0$
such that 
\begin{equation}
\begin{split}
    J^{i}(g^{i,\star}, \Gamma^{i,\star}, \bm{g}^{-i,\star}, \bm{\Gamma}^{-i,\star}) 
    &\geq \inf_{(g^i, \Gamma^i) \in \mcA^i} J^{i} (g^i, \Gamma^i, \bm{g}^{-i,\star}, \bm{\Gamma}^{-i,\star}) 
    \\
    &\geq J^{i}(g^{i,\star}, \Gamma^{i,\star}, \bm{g}^{-i,\star}, \bm{\Gamma}^{-i,\star}) - \epsilon,
\end{split}
\end{equation}
for all $i \in \mfN_k, k \in \mcK$. 
\end{definition}

This definition says that if an agent unilaterally deviates from the set of strategies $\mathcal{U}^\star$, they can at most benefit by $\epsilon$. We  show that the controls given by \eqref{eq:optG} and \eqref{eq:optGamma}, which are fully specified by the solution to the fixed point problem presented in \Cref{sec:FBSDE} satisfy the $\epsilon$-Nash property once the population size is sufficiently large. The $\epsilon$-Nash property is achieved with respect to the admissible control set $\mcA^{i}$.
Thus, for an $N$-player game, there is a certain tolerance (value of $\epsilon$) that determines how close the decentralized $\mcA^i$-adapted solutions to the MFG problem are to the Nash equilibrium for the finite-player game. We show that $\epsilon$ decreases as $N$ increases.




We must define some additional notation before we proceed. Define
\begin{equation}
\mfN_k^{-i} := \{\, j \in \mfN_k \;| \;j \neq i \,\},
\end{equation}
and $|\mfN_k^{-i}| = N_k^{-i} := N_k - \Id_{i \in \mfN_k}$.
We use the notation $S_t^{\mu^{[N]},-i}$ to denote the clearing price when all agents except for agent-$i$ acts according to \eqref{eqn:finite-opt-controls}, and agent-$i$ uses arbitrary controls $(g,\Gamma)\in\mcA^{i}$. The clearing condition in this case is
\begin{equation}
    \Gamma_t + \sum_{k\in\mcK} \sum_{j \in \mfN_k^{-i}} \Gamma_t^{j, \star}  = 0\,,
\end{equation}
and the equilibrium price is 
\begin{equation}
    S_t^{\mu^{[N]},-i}  
    = P\frac{\displaystyle\sum_{k\in\mcK} \tfrac{ |\mfN_k^{-i}|}{N \gamma^k} \sum_{j\in\mfN_k^{-i}} \tfrac{Y^{(k)}(t, X_t^{j,-i};\bm{\mu}) }{|\mfN_k^{-i}|}}
    {\displaystyle\sum_{k\in\mcK} \tfrac{|\mfN_k^{-i}|}{N\gamma^k}} + \frac{\frac{\Gamma_t}{N}}{\displaystyle\sum_{k\in\mcK} \tfrac{|\mfN_k^{-i}|}{N\gamma^k}}\;.
    \label{eqn:finite-S-without-agent-i}
\end{equation}
Here we again substituted the Markovian form of $Y^j_t=Y^{(k)}(t, X_t^{j,-i}; \bm{\mu})$ from  \Cref{rmk:markov_lipschitz_adjoint}, for all $j\in\mfN$, where the process  $X_t^{j,-i}$ denotes agent-$j$'s state when the SREC  price is the one in  \eqref{eqn:finite-S-without-agent-i}, and agents use the control
\begin{subequations}
\begin{align}
    g_t^{j,-i, \star}  = h_t^k + \tfrac{P}{\zeta^k} Y^{(k)}(t, X_t^{j,-i}; \bm{\mu}), 
    \quad \text{and} \quad
\Gamma_t^{j, -i,\star} =   \frac{1}{\gamma^k} \left( P \, Y^{(k)}(t, X_t^{j,-i}; \bm{\mu}) - S_t^{{\mu^{[N]}},-i}\right).
\end{align}
\end{subequations}

To demonstrate the $\epsilon$-Nash property, we  first establish a useful proposition. 
%
\begin{proposition}
 \label{prop:nash_epsilon_rate}
Take arbitrary admissible controls $(g, \Gamma) \in \mcA^i$  and let $(\bm{g}^{-i,\star}, \bm{\Gamma}^{-i,\star})$ denote the collection of optimal controls given by \eqref{eqn:finite-opt-controls} for agents $j\in \mfN$, $j\ne i$. Suppose that there exists a sequence $\{\delta_N\}_{N=1}^\infty$ such that $\delta_N \rightarrow 0$ and $\left| \tfrac{N_k}{N} -  \pi_k\right| = o(\delta_N)$, for all $k \in \mcK$. Then 
\begin{equation}
    \left\lvert J^{i} (g, \Gamma, \bm{g}^{-i,\star} , \bm{\Gamma}^{-i,\star}) - \overline{J^i}(g, \Gamma; \bm{\mu}) \right\rvert = o\left(\delta_N\right) + o\left(\tfrac{1}{\sqrt{N}}\right)\,, 
    \label{eqn:first-ep-Nash-bound}
\end{equation}
\end{proposition}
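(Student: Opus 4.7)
The plan is to reduce the $L^1$ difference between $J^i$ and $\overline{J^i}$ to an $L^2$ estimate on the gap between the finite-player and mean-field clearing prices, and then to bound that gap by coupling the $N$-player state trajectories to their mean-field counterparts. The reduction is immediate: agent-$i$'s own state $X^{i,[N]}$ satisfies the same SDE $dX=(g+\Gamma)\,dt+\sigma_t^k\,dW^i_t$ with the same initial condition $\xi^i$ as $X^i$, because neither depends on other agents' actions. Hence the generation cost, trading penalty, and terminal non-compliance terms in $J^i$ and $\overline{J^i}$ agree almost surely, leaving
\[
|J^i-\overline{J^i}|=\Bigl|\EE\!\!\int_0^T\!\!(S_t^{{\mu^{[N]}},-i}-S_t^{\bm{\mu}})\,\Gamma_t\,dt\Bigr|\le \Bigl(\EE\!\!\int_0^T\!\!\Gamma_t^2\,dt\Bigr)^{\!1/2}\!\!\Bigl(\EE\!\!\int_0^T\!\!(S_t^{{\mu^{[N]}},-i}-S_t^{\bm{\mu}})^2\,dt\Bigr)^{\!1/2}
\]
by Cauchy--Schwarz. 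Since $(g,\Gamma)\in\mcA^i\subset\mathbb{H}_t^2$, the first factor is bounded, so it suffices to show the second factor is $O(1/\sqrt{N})+o(\delta_N)$.

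Next, I would decompose the price gap using \eqref{eq:equilibrium_price_mkv} and \eqref{eqn:finite-S-without-agent-i} into three pieces. The additive $\Gamma_t/N$ term in \eqref{eqn:finite-S-without-agent-i} contributes an $O(1/N)$ $L^2$-error, absorbed using $\Gamma\in\mathbb{H}_t^2$. The discrepancy between the finite-player weights $\tfrac{N_k^{-i}/(N\gamma^k)}{\sum_{k'}N_{k'}^{-i}/(N\gamma^{k'})}$ and the limiting weights $\eta_k$ is $o(\delta_N)$ by the hypothesis $|N_k/N-\pi_k|=o(\delta_N)$ and the uniform boundedness of $Y^{(k)}\in[0,1]$ (inherited from $F_\delta'\in[0,1]$). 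The substantive piece is the gap between the empirical average $\tfrac{1}{N_k^{-i}}\sum_{j\in\mfN_k^{-i}}Y^{(k)}(t,X_t^{j,-i};\bm{\mu})$ and the mean-field integral $\int Y^{(k)}(t,x;\bm{\mu})\,\mu_t^{(k)}(dx)$. Splitting this difference as
\[
\tfrac{1}{N_k^{-i}}\!\sum_{j}\bigl[Y^{(k)}(t,X_t^{j,-i};\bm{\mu})-Y^{(k)}(t,X_t^{j};\bm{\mu})\bigr]+\Bigl(\tfrac{1}{N_k^{-i}}\!\sum_{j}Y^{(k)}(t,X_t^{j};\bm{\mu})-\!\!\int\! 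Y^{(k)}(t,x;\bm{\mu})\,\mu_t^{(k)}(dx)\Bigr),
\]
the second term is a classical empirical mean of i.i.d.\ bounded variables (since $\{X^j\}_{j\in\mfN_k}$ are i.i.d.\ copies of a single process with deterministic law $\mu_t^{(k)}$, by \Cref{rem:deterministic_measure_flow}), whose $L^2$ norm is $O(1/\sqrt{N_k})=O(1/\sqrt{N})$. The first term is controlled, via the Lipschitz-in-$x$ property of $Y^{(k)}$ asserted in \Cref{rmk:markov_lipschitz_adjoint}, by $\tfrac{L}{N_k^{-i}}\sum_j|X_t^{j,-i}-X_t^j|$.

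The main obstacle will be the circular dependence between $|X_t^{j,-i}-X_t^j|$ and $|S_t^{{\mu^{[N]}},-i}-S_t^{\bm{\mu}}|$: the drift of $X^{j,-i}$ contains $S^{{\mu^{[N]}},-i}$ while the drift of $X^{j}$ contains $S^{\bm{\mu}}$, so bounding the state gap requires controlling the very price gap we are trying to estimate. Writing down the difference of the two SDEs (which share $W^j$), using the Lipschitz estimate on $Y^{(k)}$ once more, squaring, and taking expectation will yield
\[
\EE|X_t^{j,-i}-X_t^j|^2\le C\!\int_0^t\!\bigl(\EE|X_s^{j,-i}-X_s^j|^2+\EE|S_s^{{\mu^{[N]}},-i}-S_s^{\bm{\mu}}|^2\bigr)\,ds.
\]
Substituting in the decomposition of the previous paragraph gives a coupled system in $\psi(t):=\sup_{j}\EE|X_t^{j,-i}-X_t^j|^2$ and $\phi(t):=\EE|S_t^{{\mu^{[N]}},-i}-S_t^{\bm{\mu}}|^2$ with source terms of order $1/N$ and $o(\delta_N^2)$. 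A Gronwall-type argument applied to this system produces $\phi(t)=O(1/N)+o(\delta_N^2)$ uniformly on $[0,T]$; substituting back into the Cauchy--Schwarz inequality and taking the square root delivers the claimed bound. The most delicate technical point will be keeping the $\Gamma_t/N$ contribution under control throughout the Gronwall step, since $\Gamma$ is an arbitrary admissible control and one must rely on its $\mathbb{H}_t^2$ integrability, together with the fact that its $L^2$ norm is independent of $N$, to absorb the source term it generates.
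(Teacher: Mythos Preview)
Your proposal is correct and follows essentially the same architecture as the paper: reduce to the price gap via Cauchy--Schwarz, split $S_t^{\mu^{[N]},-i}-S_t^{\bm{\mu}}$ into the $\Gamma_t/N$ term, the weight discrepancy, and the empirical-average-versus-mean-field-integral term, then split the latter once more into the $X^{j,-i}$ versus $X^{j}$ piece and the i.i.d.\ law-of-large-numbers piece. The one substantive difference is your treatment of the coupling term $\tfrac{1}{N_k^{-i}}\sum_j |X_t^{j,-i}-X_t^j|$. The paper handles this by simply asserting $X_t^{j,-i}\to X_t^j$ and claiming the rate is irrelevant because of the $\theta_k^2=O(1/N^2)$ prefactor; you instead close the circular dependence between the state gap and the price gap with an explicit Gronwall argument on the pair $(\psi,\phi)$. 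Your route is more self-contained, since the Cauchy--Schwarz bound on $(D_t^{(1)})^2$ actually carries a factor of order $1$ (the $1/N^2$ from $\theta_k^2$ is cancelled by the two sums of order $N$), so some quantitative control of $\EE|X_t^{j,-i}-X_t^j|^2$ really is needed; the Gronwall delivers exactly that and absorbs the $\Gamma_t/N$ source term via the $\mathbb{H}^2_t$ bound you flag.
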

\begin{proof}
See \Cref{Proof:prop:nash_epsilon_rate}.
\end{proof}
Now, we may state and prove the $\epsilon$-Nash property for our problem.


\begin{theorem} [$\epsilon$-Nash Property] \label{thrm: epsilon_nash}
Suppose  that (i) \Cref{ass:proportion}-\Cref{ass: MinorContrAction} are enforced, 
(ii) the optimal controls\footnote{We repeat for emphasis and completeness that these optimal controls are only fully specified by the solution to the fixed point problem presented in \Cref{sec:FBSDE}} $\{\bm{g}^\star, \bm{\Gamma}^\star\}$ are as defined in \eqref{eqn:finite-Opt-Controls-and-clearing}; and (iii) there exists a sequence $\{\delta_N\}_{N=1}^\infty$ such that $\delta_N \rightarrow 0$ and $\left| \tfrac{N_k}{N} -  \pi_k\right| = o(\delta_N)$, for all $k \in \mcK$. 
Then
\begin{equation}
\begin{split}
    J^{i}(g^{i, \star}, \Gamma^{i, \star}, \bm{g}^{-i, \star} , \bm{\Gamma}^{-i, \star}) &\geq 
    \inf_{(g^i, \Gamma^i) \in \mcA^i} J^{i} (g^i, \Gamma^i ,\bm{g}^{-i, \star}, \bm{\Gamma}^{-i, \star}) 
    \\
    &\geq 
    J^{i}(g^{i, \star}, \Gamma^{i, \star} ,
    \bm{g}^{-i,\star}, \bm{\Gamma}^{-i,\star}) - \epsilon, 
\end{split}
    \label{eq:nash-epsilon}
\end{equation}
for all $i \in \mfN_k, k \in \mcK$. 
In particular, $\epsilon = o(\tfrac{1}{\sqrt{N}}) + o(\delta_N)$.

\end{theorem}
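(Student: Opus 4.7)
The plan is to decompose \eqref{eq:nash-epsilon} into its two inequalities and dispatch each in turn, with \Cref{prop:nash_epsilon_rate} serving as the sole non-trivial ingredient. The left inequality is immediate because $(g^{i,\star},\Gamma^{i,\star})\in\mcA^i$, so the infimum of $J^i(\cdot,\cdot,\bm{g}^{-i,\star},\bm{\Gamma}^{-i,\star})$ over $\mcA^i$ cannot exceed its value at that particular admissible pair. All the real work lies in the right inequality, and the plan there is a standard three-link sandwich between the finite-player cost and the MFG cost.

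Concretely, let $\epsilon_N := o(\delta_N) + o(1/\sqrt{N})$ denote the rate furnished by \Cref{prop:nash_epsilon_rate}. For an arbitrary deviation $(g^i,\Gamma^i)\in\mcA^i$, applying the proposition gives
\begin{equation*}
J^i(g^i,\Gamma^i,\bm{g}^{-i,\star},\bm{\Gamma}^{-i,\star})\;\geq\;\overline{J^i}(g^i,\Gamma^i;\bm{\mu})-\epsilon_N.
\end{equation*}
Next, by \Cref{prop:optimal_controls} together with \Cref{thrm:nash_eq}, the feedback law $(g^{i,\star},\Gamma^{i,\star})$ appearing in \eqref{eqn:finite-opt-controls} attains the unique minimum of $\overline{J^i}(\cdot,\cdot;\bm{\mu})$ over $\mcA^i$, and therefore
\begin{equation*}
\overline{J^i}(g^i,\Gamma^i;\bm{\mu})\;\geq\;\overline{J^i}(g^{i,\star},\Gamma^{i,\star};\bm{\mu}).
\end{equation*}
Since $(g^{i,\star},\Gamma^{i,\star})$ is itself admissible (it is $\Gg^i$-adapted and square-integrable by the Lipschitz, bounded Markovian form of $Y^{(k)}$ from \Cref{rmk:markov_lipschitz_adjoint} combined with $S^{\mu^{[N]}}\in[0,P]$), a second application of \Cref{prop:nash_epsilon_rate} gives
\begin{equation*}
\overline{J^i}(g^{i,\star},\Gamma^{i,\star};\bm{\mu})\;\geq\;J^i(g^{i,\star},\Gamma^{i,\star},\bm{g}^{-i,\star},\bm{\Gamma}^{-i,\star})-\epsilon_N.
\end{equation*}
Concatenating the three bounds, taking the infimum over $(g^i,\Gamma^i)\in\mcA^i$ on the left, and absorbing the factor of $2$ into the $o(\cdot)$ notation yields the desired inequality with $\epsilon = o(\delta_N) + o(1/\sqrt{N})$.

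The main (indeed the only) obstacle is \Cref{prop:nash_epsilon_rate}, which has already been established, so what remains is essentially bookkeeping. One point worth flagging for rigour: the hidden constant in the $o(\cdot)$ term of \Cref{prop:nash_epsilon_rate} depends, via the Cauchy--Schwarz step in \eqref{eq:cost_fun_diff}, on $(\EE[\int_0^T |\Gamma_t^i|^2\,dt])^{1/2}$. For the right inequality to hold with a deviation-independent $\epsilon$, it is enough to restrict the infimum to competitive deviations, i.e.\ those whose finite-player cost does not exceed $J^i(g^{i,\star},\Gamma^{i,\star},\bm{g}^{-i,\star},\bm{\Gamma}^{-i,\star})+1$, because uncompetitive deviations trivially satisfy \eqref{eq:nash-epsilon}. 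Over this restricted class, the quadratic trading-friction term $\tfrac{\gamma^k}{2}(\Gamma_t^i)^2$ in \eqref{eq:agentCostFnRegularized} gives a uniform bound on $\EE[\int_0^T (\Gamma_t^i)^2\,dt]$, rendering the rate in \Cref{prop:nash_epsilon_rate} uniform and completing the argument.
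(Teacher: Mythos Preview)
Your proof is correct and follows essentially the same three-link sandwich as the paper: apply \Cref{prop:nash_epsilon_rate} to pass from $J^i$ to $\overline{J^i}$, invoke MFG optimality, then apply \Cref{prop:nash_epsilon_rate} again in the reverse direction, absorbing the factor of two into the $o(\cdot)$ rate. Your closing remark on the uniformity of the constant in \Cref{prop:nash_epsilon_rate} (via restricting to competitive deviations and using the quadratic trading friction to bound $\EE[\int_0^T(\Gamma_t^i)^2\,dt]$) is a valid refinement that the paper's own argument leaves implicit.
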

\begin{proof}
By the definition of the infimum, the left-hand inequality in \eqref{eq:nash-epsilon} is immediately satisfied. It remains to show that the right-hand inequality is also satisfied.

Observe that for any arbitrary controls $(g, \Gamma) \in \mcA^i$, \Cref{prop:nash_epsilon_rate} implies
\begin{equation}
    J^{i} (g, \Gamma, \bm{g}^{-i, \star} , \bm{\Gamma}^{-i,\star}) \geq \overline{J^i}(g, \Gamma) - \varepsilon_N \geq \overline{J^i}(g^\star, \Gamma^\star) - \varepsilon_N,
\end{equation}
where $\varepsilon_N=o(\delta_N) + o(\tfrac{1}{\sqrt{N}})$.
Changing roles of $\overline{J^i}(g^\star, \Gamma^\star)$ and $J^{i} (g^\star, \Gamma^\star, \bm{g}^{-i, \star} , \bm{\Gamma}^{-i,\star})$ in \Cref{prop:nash_epsilon_rate}, we may further write
\begin{equation}
    J^{i} (g, \Gamma, \bm{g}^{-i,\star} , \bm{\Gamma}^{-i,\star}) \geq J^{i} (g^\star, \Gamma^\star, \bm{g}^{-i,\star}, \bm{\Gamma}^{-i, \star}) - 2\,\varepsilon_N.
\end{equation}
As the above holds for arbitrary controls $(g, \Gamma) \in \mcA^i$, we obtain the  inequality
\begin{equation}
    \inf_{(g^i, \Gamma^i) \in \mcA^i} J^{i} (g^i, \Gamma^i ,\bm{g}^{-i,\star}, \bm{\Gamma}^{-i,\star}) \geq J^{i}(g^{i, \star}, \Gamma^{i, \star} ,\bm{g}^{-i,\star}, \bm{\Gamma}^{-i,\star}) - 2\,\varepsilon_N
\end{equation}
and the proof is complete.
\end{proof}

Thus far, we have  established that there is a unique solution to the MFG (in \Cref{sec:existence_uniqueness}), and that the MFG solution provides an $\epsilon$-Nash solution to the finite-player game. As such, all that remains is to solve the McKean-Vlasov version of \eqref{eqn:FBSDE-full}. Solving MV-FBSDEs analytically is difficult, in all but the simplest situations. Rather, we devise a numerical scheme to solve the MV-FBSDE and relegate the presentation and discussion of this scheme to \Cref{appendix:algorithm}.  We  use our numerical scheme  to perform a variety of  experiments and  discuss them next. 

\section{Numerical Experiments} \label{sec:experiments}

In this section, we present the results of numerical experiments which reveal significant qualitative characteristics about how regulated firms behave in this model and the associated implications of this behaviour. We aim for these experiments to be illustrative and provide insight into the agents' optimal behaviour in emissions markets while explicitly accounting for trading frictions and interactions between them, which to the authors' knowledge, has not been done before. 

There are many viable experiments that one could run within this model framework, especially when  considering the potential variety and number of sub-populations of load serving entities that are regulated by SREC markets. 
As such, we restrict focus  to a handful of interesting and meaningful experiments that allow us to draw insights into the drivers of optimal behaviour in SREC markets, the nature of the interaction between agents, and the outcomes of agents acting optimally. 


The scarcity of accessible data introduces a significant challenge for obtaining  calibrated model parameters. 
For example, we do not have data relating to costs firms incur for choosing to generate or trade SRECs. Consequently, we instead focus on highlighting the key features and outcomes  that we believe are applicable to any emissions market system, and aligning said features and outcomes with  economic arguments that justify their validity and applicability. That is, we provide normative rather than informative results.

For the first set of numerical experiments, we use the  parameters reported in Tables \ref{tbl:ComplianceParams} and \ref{tbl:ModelParams}. For simplicity of analysis and interpretation, we choose $h_t^k$ and $\sigma_t^k$ to be constant and set $X_0^i \sim N(\nu_0^k, m_0^k)$ 
for all $i \in \mfN_k$, with the values specified in Table \ref{tbl:ModelParams}. 
\begin{table}[bt]
\centering
\begin{threeparttable}
 \begin{tabular}{cccccc}
 \hline
 $\boldsymbol{n}$ & $\boldsymbol{\Delta t}$ & $\boldsymbol{T}$ & $\boldsymbol{P}$ (\textdollar/SREC) & $\boldsymbol{R^k}$ (SREC) & $\boldsymbol{K}$  \\
 \hline
 1 & $\frac{1}{52}$ &1&  1 & 1 (for $k=1,2$) & 2 \\
\hline
\end{tabular}
\end{threeparttable}
\caption{Compliance Parameters}
\label{tbl:ComplianceParams}
\end{table}

\begin{table}[bt]
\centering
\begin{threeparttable}
\begin{tabular}{cccccccc}
\hline
Sub-population & $\boldsymbol{\pi_k}$ & $\boldsymbol{h^k}$  & $\boldsymbol{\sigma^k}$ & $\boldsymbol{\zeta^k}$ &$\boldsymbol{\gamma^k}$ & $\boldsymbol{\nu_0^k}$ & $\boldsymbol{m_0^k}$\\
\hline
$k = 1$ &  0.25 & 0.2 & 0.1 & 1.75 & 1.25 & 0.6 & 0.1 \\
$k = 2$ &  0.75 & 0.5 & 0.15 & 1.25 & 1.75 & 0.2 & 0.1 \\
\hline
\end{tabular}
\end{threeparttable}
\caption{Model Parameters}
\label{tbl:ModelParams}
\end{table}

As $T = 1, n = 1$, we are focused on a single-period SREC market that takes place over a year. Firms must submit $1$ SREC at $T$, and pay a penalty of $\$ 1$ for every SREC they fail to submit. We assume $K = 2$, with each sub-population having parameters detailed in Table  \ref{tbl:ModelParams}. Specifically, we assume a 3:1 ratio in favour of sub-population $2$. When comparing the two sub-populations, we see that sub-population $2$ has a greater base generation rate ($h^k$) and a slightly higher volatility associated with generation ($\sigma^k$) than sub-population $1$. They also have lower generation costs ($\zeta^k$), while experiencing a higher level of trading costs ($\gamma^k$). Finally, we see the average initial inventory of firms in sub-population $2$ is lower than in sub-population $1$.


\subsection{Optimal Behaviour Summary}

\begin{figure}[bt]
\centering
\includegraphics[align = c, width=0.75\textwidth]{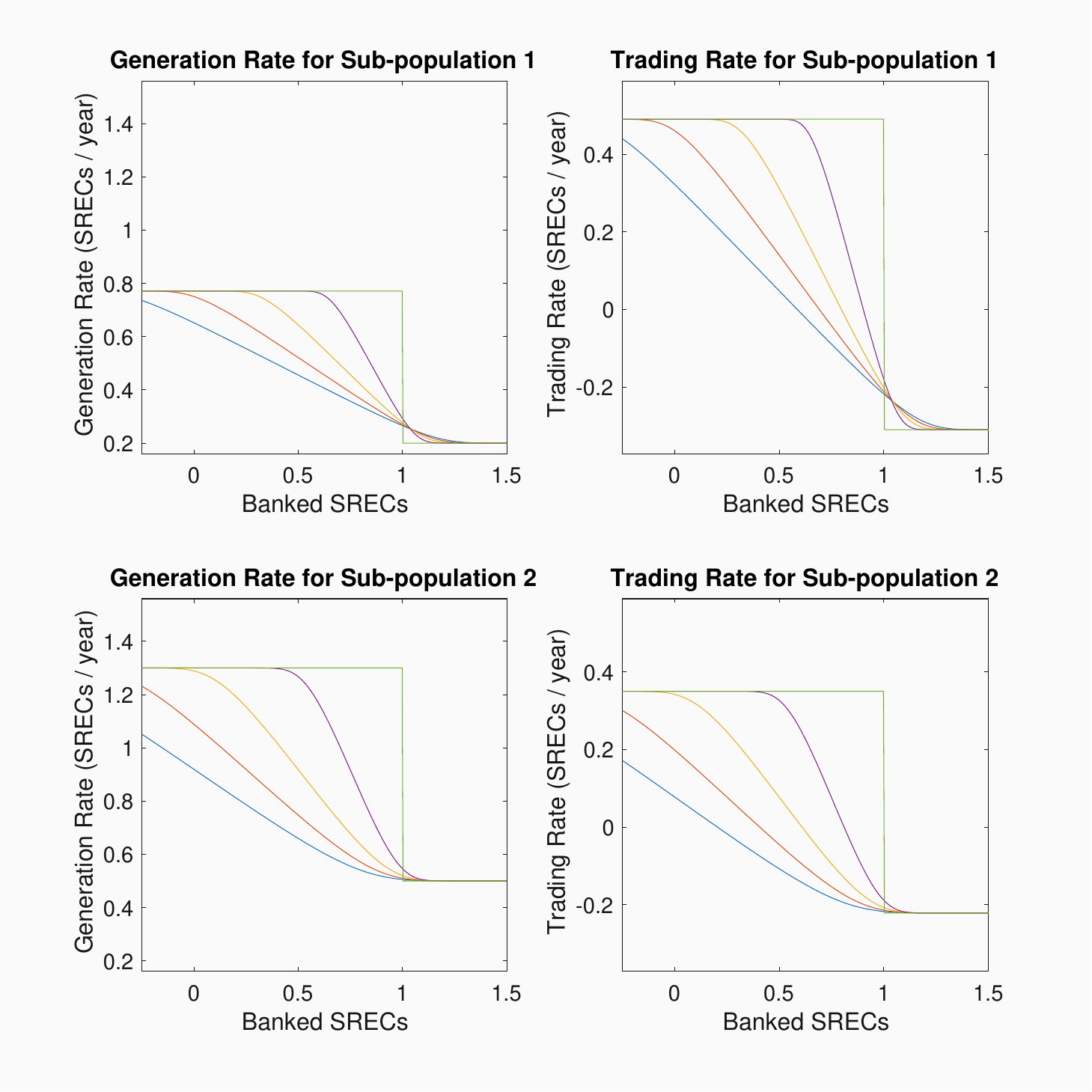}
\includegraphics[align = c, width=0.1\textwidth]{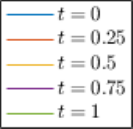}
\caption{Optimal firm behaviour as a function of banked SRECs for various time-steps, for a representative firm from each sub-population. Parameters in Tables \ref{tbl:ComplianceParams} and \ref{tbl:ModelParams}.}
\label{fig:opt_beh}
\end{figure}
A regulated firm's optimal behaviour is one of the key outputs from \eqref{eqn:FBSDE-full}. \Cref{fig:opt_beh} shows the dependence of the optimal trading and generation rate on banked SRECs for a representative agent of each sub-population through time. The equilibrium SREC price $S_t^{\bm{\mu}}$ is implicitly accounted for through the solution of the MV-FBSDE \eqref{eqn:FBSDE-full}.
In this mean-field limit, individual firms have no impact on the price. 

One clear pattern seen in \Cref{fig:opt_beh}  is  the existence of distinct regimes of generation / trading. This  pattern  also arises in the single-agent problem (see \cite{shrivats2019behaving}). At low levels of SREC inventory, and towards the end of the period, firms in both sub-populations generate / purchase until the marginal cost of producing / purchasing an additional SREC exceeds $P$. This occurs because, in this case, the firm almost certainly fails to comply. By generating or purchasing an additional SREC, the firm avoids paying an additional penalty of $P$, and as such, it is in their best interests to do so until the cost of generation / trading exceeds that value. This follows the classic microeconomic principle of conducting an activity until the marginal benefit of that activity equals its marginal cost. 

As a firm obtains more SRECs, they reach a point where the marginal benefit from an additional SREC is less than $P$. Specifically, as the probability of compliance increases from near $0$, additional SRECs no longer provide a marginal benefit of $P$ (as additional SRECs above $R^k$ must either be sold for less than $P$, or expire worthless). As such, it is no longer necessary for firms to generate / purchase in such quantities, leading to a decrease in both quantities. Once again, the firm adjusts its behaviour so that the marginal costs of purchasing and trading are in line with the marginal benefit the firm receives from doing so. This may also lead to the firm selling SRECs as opposed to purchasing them - in this case, the net proceeds from sales exceed the value of an additional SREC to the firm. 

This decrease occurs until the firm reaches a point where it no longer benefits from additional SRECs. Specifically, this means that an additional SREC does not impact a firms compliance probability (implying their compliance probability is $1$), nor can they sell it for profit. Consequently, we observe another plateau where the firm plans to generate at its baseline $h^k$ and plans to sell SRECs at the value for which the marginal revenue from the sale equals the cost of selling. The  previous observations are consistent for each sub-population.

The optimal firm behaviour detailed above, along with their initial inventory (determined by the distribution of $X_0^i, i \in \mfN_k$) also imply a mean-field distribution of SREC inventory levels across agents of each sub-population. We examine this distribution across each agent sub-population and for various times $t \in \mathfrak{T}$ in \Cref{fig:mean_field_distributions}.
\begin{figure}[bt]
\centering
\includegraphics[align = c, width=0.7\textwidth]{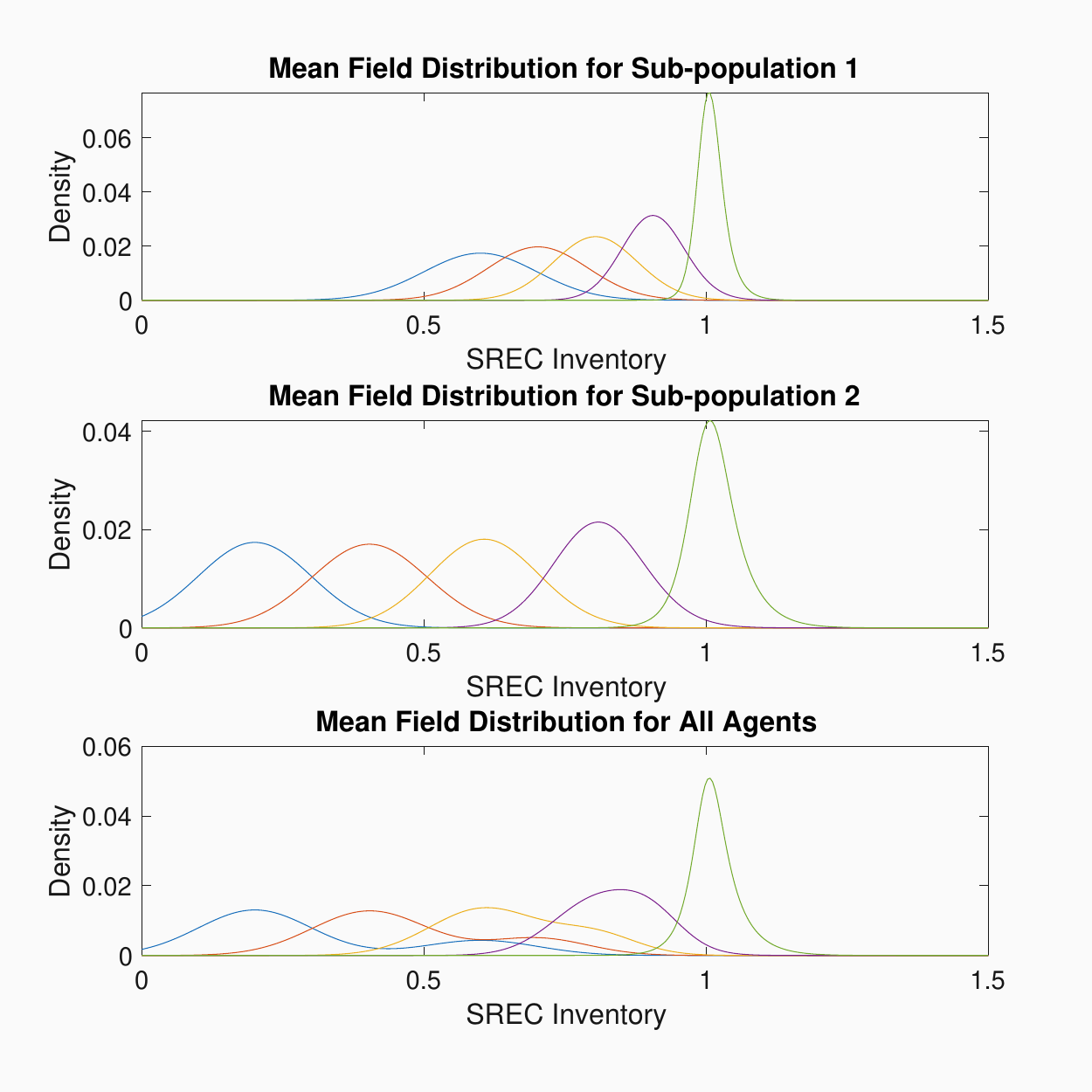}
\includegraphics[align = c, width=0.1\textwidth]{legend_mfg.pdf}
\caption{Distribution of SREC inventory over time for each agent sub-population (sub-population 1 in top row, sub-population 2 in middle row, all agents in bottom row).  Parameters in Tables \ref{tbl:ComplianceParams} and \ref{tbl:ModelParams}.}
\label{fig:mean_field_distributions}
\end{figure}

From \Cref{fig:mean_field_distributions}, we can make some notable observations. For both sub-populations, we see the initial distribution is concentrated around their respective mean $\nu_0^k$. As time progresses, mass shifts upwards, as firms accumulate SRECs, culminating with the mass being concentrated around $R^k$ as $t$ approaches $T$. This is also reflected in the bottom panel, which shows the distribution of SREC inventory across all agent types, incorporating the relative frequency of each sub-population. It shows the bimodal nature of the initial distribution  slowly converge to a unimodal distribution by the time the compliance period ends, with the mode occurring near the requirement.

The patterns detailed above are expected, as they indicate  agents aim to satisfy the compliance requirements. Specifically, about $62\%$ of agents in sub-population $1$ and $61\%$ of agents in sub-population $2$ meet the requirement. It is worth noting that the optimal behaviour for firms does not result in every firm complying exactly.

The primary cause of this phenomenon is that firms do not have absolute control over their SREC production. While the firm plans to generate at a rate of $g^i$, they over or under-generate due to, e.g., variation in sunlight. If a firm severely under-generates, they may not be able to comply. The market frictions imposed by our model, such as trading costs also constrain firms' behaviours, which may also force non-compliance.

Additionally, in a single-period SREC model with no banking, a terminal SREC inventory above the requirement necessarily implies some SRECs are squandered,
as unused SRECs at time $T$ are worthless. As such, the funds used to acquire these spare SRECs (either through generation or trading) are also wasted. This means that it is typically not optimal to accumulate far more than $R^k$ SRECs, as  firms must strike a balance between securing compliance and risking unnecessary investment. The natural incentive for all firms is to just barely satisfy the requirement, as opposed to exactly satisfying it (which would waste no SRECs), so that they are partially protected from a potential random under-generation.



As alluded to earlier, in the mean-field limit, the equilibrium SREC price that the firms transact at is not impacted by any individual agent. Rather, as per \eqref{eq:equilibrium_price}, it is determined by the collective distribution of agents. Consequently, the equilibrium SREC price $S_t^{\bm{\mu}}$ is  an output of our algorithm, as opposed to a state variable or an input. The initial distribution of agent states amongst sub-populations and the optimal behaviour of each regulated agent throughout the period implies a value for $S_t^{\bm{\mu}}$ over time. With the parameter choices made in Tables \Cref{tbl:ComplianceParams} and \Cref{tbl:ModelParams}, the $S_t^{\bm{\mu}}$ implied by the mean-field optimal controls is nearly constant throughout the compliance period, at a value of roughly 0.39. 


Careful examination of the form of \eqref{eq:equilibrium_price} can provide insight as to why this occurs. First, from general results (see \cite{huang2007invariance}), we know that the mean-field measure flow is deterministic. Second, the only term which potentially varies over time in the formula for $S_t^{\bm{\mu}}$ is $\int Y^{(k)}(t, x; \bm{\mu}) \mu_t^{(k)}(dx)$ and represents the expected non-compliance probability of an agent in sub-population $k$. The near-constant nature of $S_t^{\bm{\mu}}$ indicates that this quantity does not change much over time in the infinite-player limit. This suggests that, in the infinite-player limit, the mean-field distribution $\mu_t^{(k)}$ changes over time in such a way that such that expected non-compliance probabilities are essentially constant. That is, the trajectory of the average non-compliance probability across each sub-population of agents at each time $t$ is constant. When we explore the finite-player game in \Cref{finite_player_game}, we show that  the empirical distribution of agents' states is stochastic, in contrast to the mean-field limit. 

\subsection{Finite-Player Simulation} \label{finite_player_game}

While the previous subsections focus on studying the behaviour and properties of the solution to the infinite-player stochastic game, we now turn our attention to the finite-player game. We study the infinite player game because it is more tractable than the finite player stochastic game. However, it is worth numerically studying how the infinite-population solution fares when applied to a more realistic finite-population game, and the associated implications on the SREC price (which is now the market clearing SREC price described by \eqref{eq:finite_player_SREC_price}, as opposed to the equilibrium SREC price described by \eqref{eq:equilibrium_price}). 

We simulate a compliance period for $N$ firms, using the same model and compliance parameters as in \Cref{tbl:ComplianceParams} and \Cref{tbl:ModelParams}, and letting $N = 2000$. This is not a particularly large number of firms - the Sweden-Norway REC market regulated upwards of 9,000 solar firms in 2018 (see \cite{sweden_electricity}). That is, we assume $500$ firms from sub-population 1 and $1500$ firms from sub-population 2. For each firm, we draw their initial SREC inventory from $\mu_0^k$ (using the appropriate sub-population), and simulate forward. At each time-step, each firm chooses their planned generation and trading behaviour, as per \eqref{eq:optG} and \eqref{eq:optGamma}, while being able to observe the SREC price. In doing so, each firm is exposed to their idiosyncratic SREC production noise, which in turn impacts the agents' state in the next time-step, as well as the SREC price. This continues across the entire period. 

We first summarize the overall behaviours of agents of each sub-population by plotting, in \Cref{fig:finite_player_histograms}, histograms of their initial and terminal SRECs, as well as their total generation and trading.
\begin{figure}[h]
\centering
\includegraphics[align = c, width=0.95\textwidth]{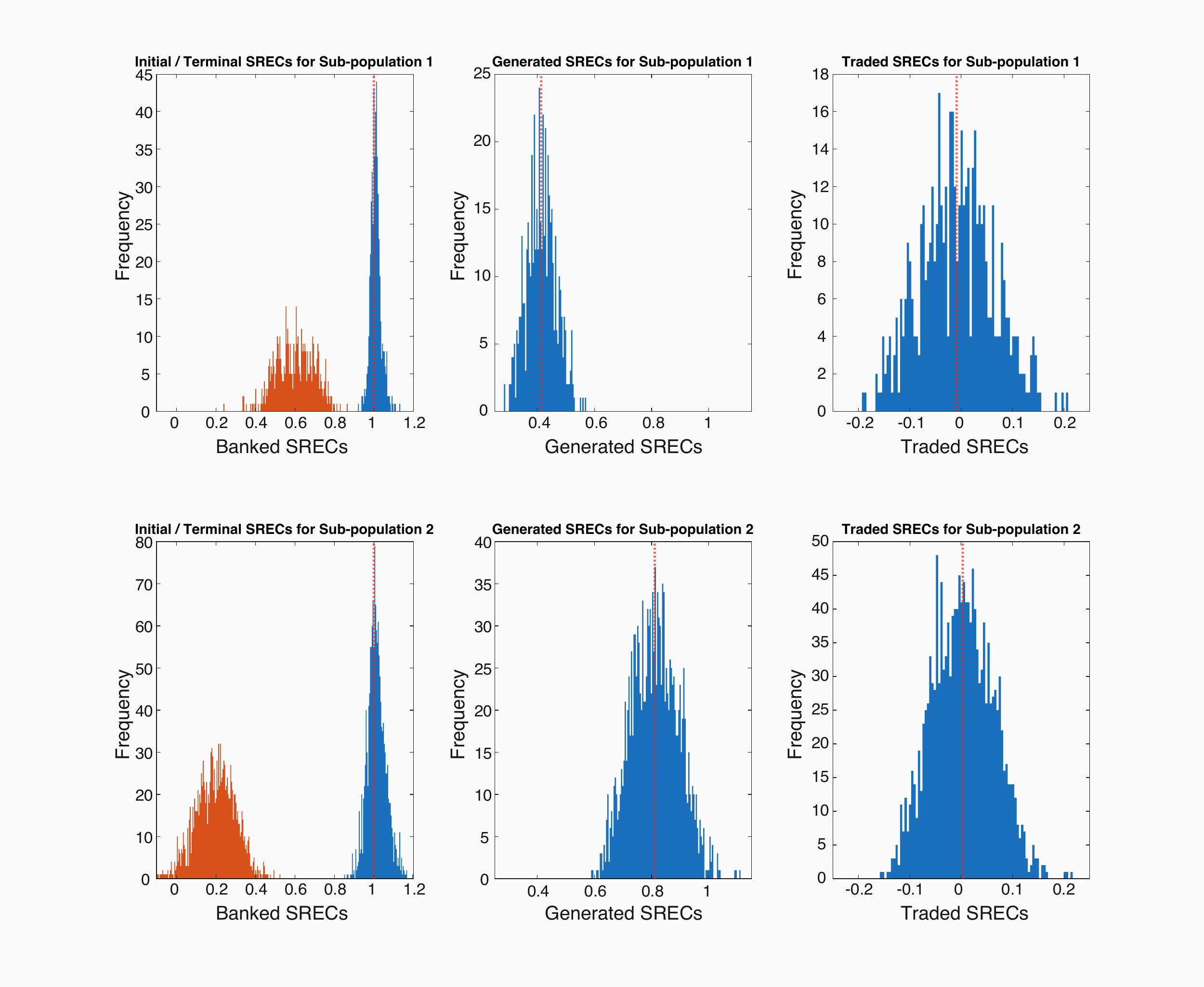}
\caption{Histograms of initial (red) and terminal (blue) SRECs (left panel), generation rate (middle panel), trading rate (right panel). Parameters in Tables \ref{tbl:ComplianceParams} and \ref{tbl:ModelParams}.}
\label{fig:finite_player_histograms}
\end{figure}

As expected, we see that most firms from both sub-populations take actions such that their terminal SREC inventory is near $R^k = 1$. The non-compliance rates for sub-populations $1$ and $2$ are about 0.42 and 0.40, respectively. The means of the total generation and trading for firms from each sub-population are slightly higher than 0.4 and 0.8, respectively. These figures correspond to the average amount of SRECs below the requirement firms from each sub-population begin the compliance period with. We also see that total generation and trading in the finite player game is relatively symmetric.

We next examine  the firms' trading and generation behaviour across the compliance period. To this end, we plot each firm's generation and trading rates over the course of the compliance period, colour coded by the sub-population they belong to, in \Cref{fig:finite_player_controls_sim}.
\begin{figure}[h]
\centering
\includegraphics[align = c, width=0.45\textwidth]{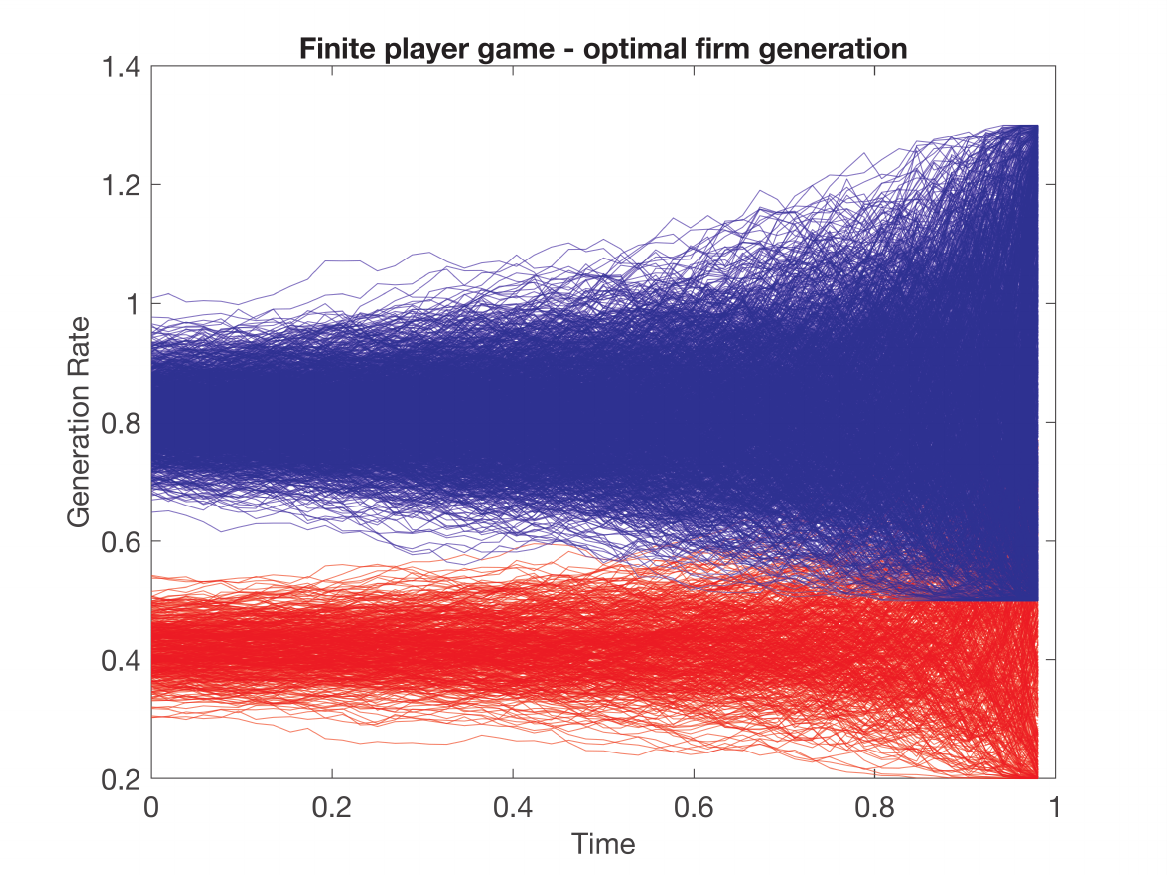}
\includegraphics[align = c, width=0.45\textwidth]{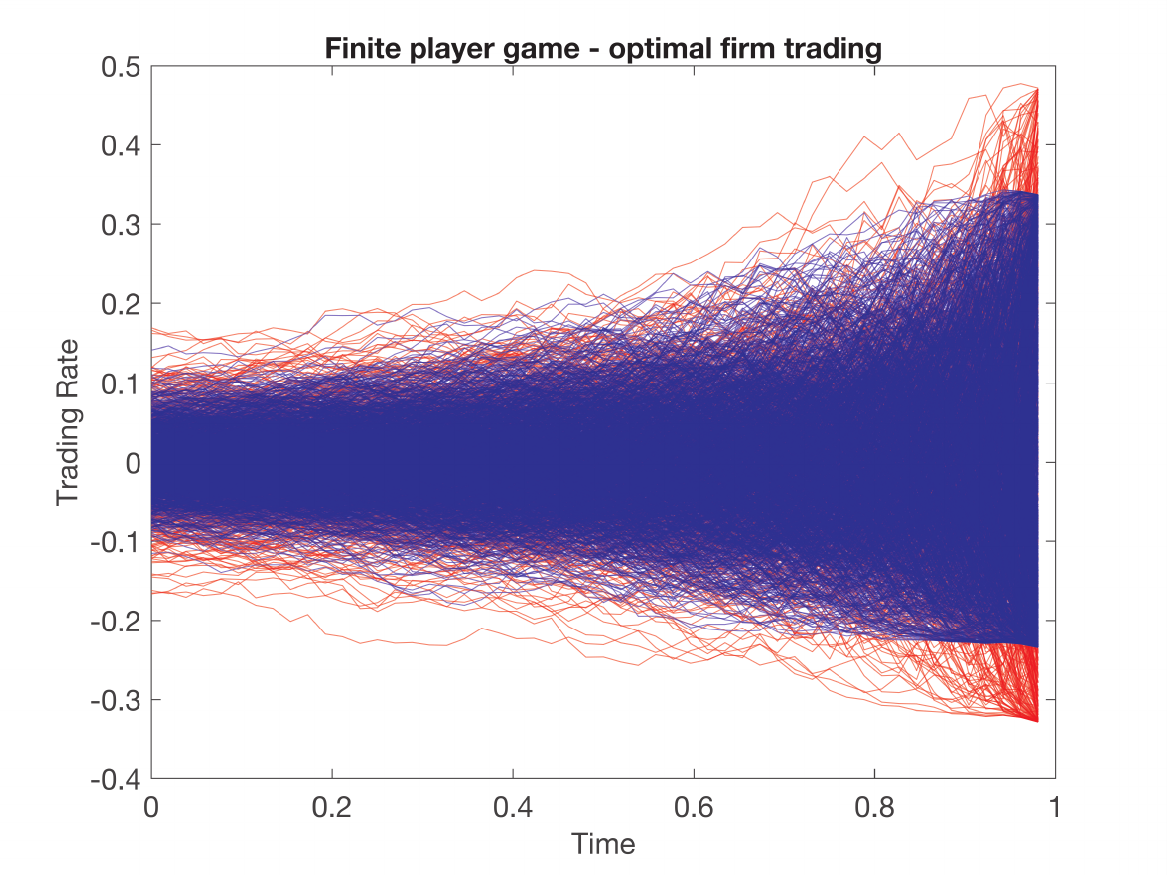}
\caption{Planned generation (left) and trading rates (right) in finite-player simulation with parameters as in \ref{tbl:ComplianceParams} and \ref{tbl:ModelParams}, for sub-population 1 (red) and sub-population 2 (blue).}
\label{fig:finite_player_controls_sim}
\end{figure}

\Cref{fig:finite_player_controls_sim} shows  the differences between the two sub-populations planned generation and trading rates.
As expected, firms from sub-population 1 generate less than those from sub-population 2, due to their lower baseline generation rate and higher cost of generation. The trading rates across firms from different sub-populations largely overlap one another, though the firms from sub-population 1 typically have more extreme behaviour, reflecting their increased capability to participate in the market when compared to firms from sub-population 2.

\begin{figure}[!t]
\centering
\includegraphics[align = c, width=0.7\textwidth]{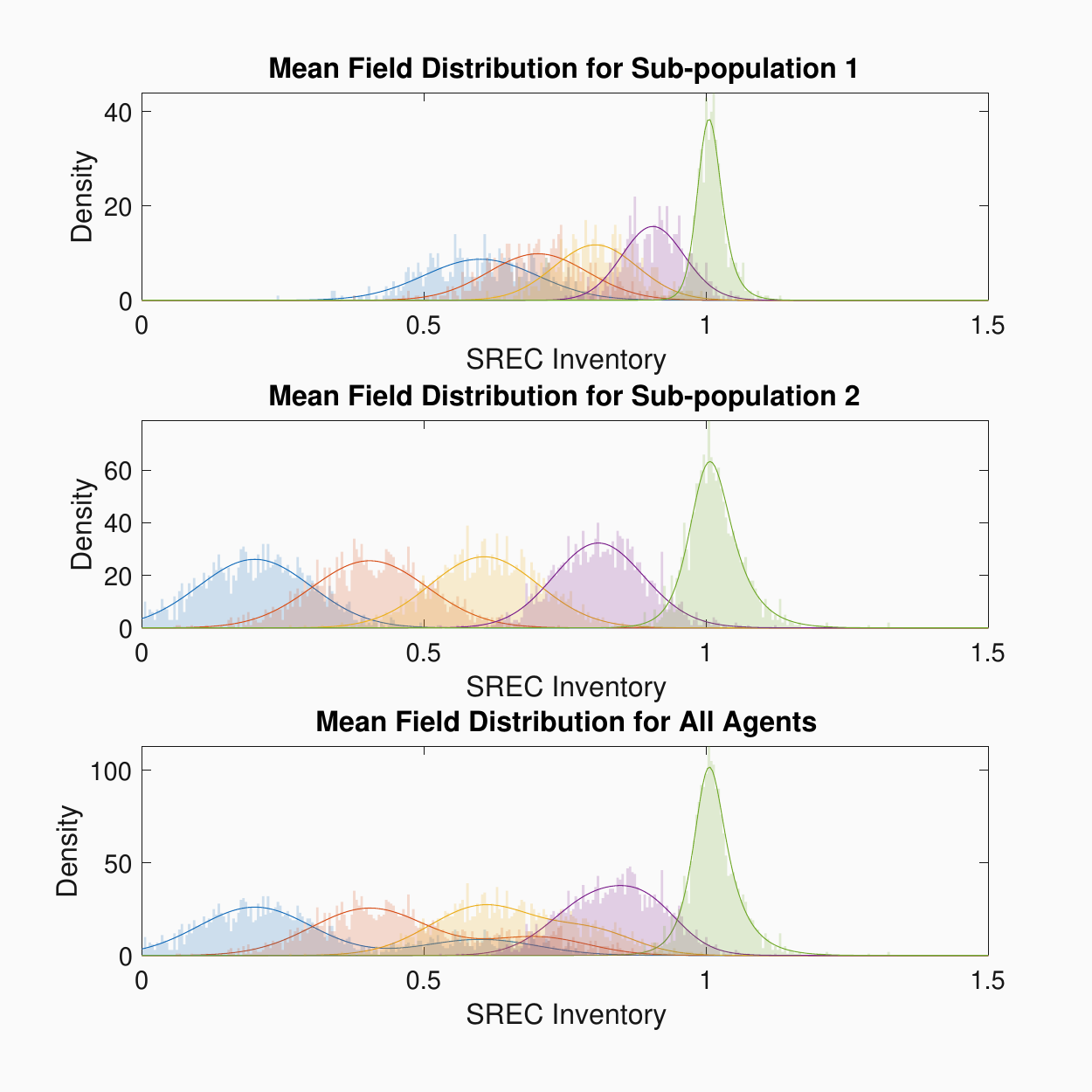}
\includegraphics[align = c, width=0.1\textwidth]{legend_mfg.pdf}
\caption{Empirical and theoretical distribution of SREC inventory over time for each sub-population separately and jointly.  Parameters in Tables \ref{tbl:ComplianceParams} and \ref{tbl:ModelParams}.}
\label{fig:finite_player_distribution}
\end{figure}
Next, we look at the distribution of states across agents to ensure  the distribution of agents in the finite-player sample agrees with those from the mean-field game (see \Cref{fig:mean_field_distributions}). To this end, \Cref{fig:finite_player_distribution} shows histograms of the agents states across the compliance period overlaid on the mean-field distribution (scaled for finite population).

From the figure, we can see that the empirical and theoretical distributions are in alignment with one another, suggesting numerically that the infinite-player solution is a reasonable approximation for the optimal controls of the agents in the finite-player problem, provided the number of agents is sufficiently large. 

\begin{figure}
\centering
\includegraphics[align = c, width=0.48\textwidth]{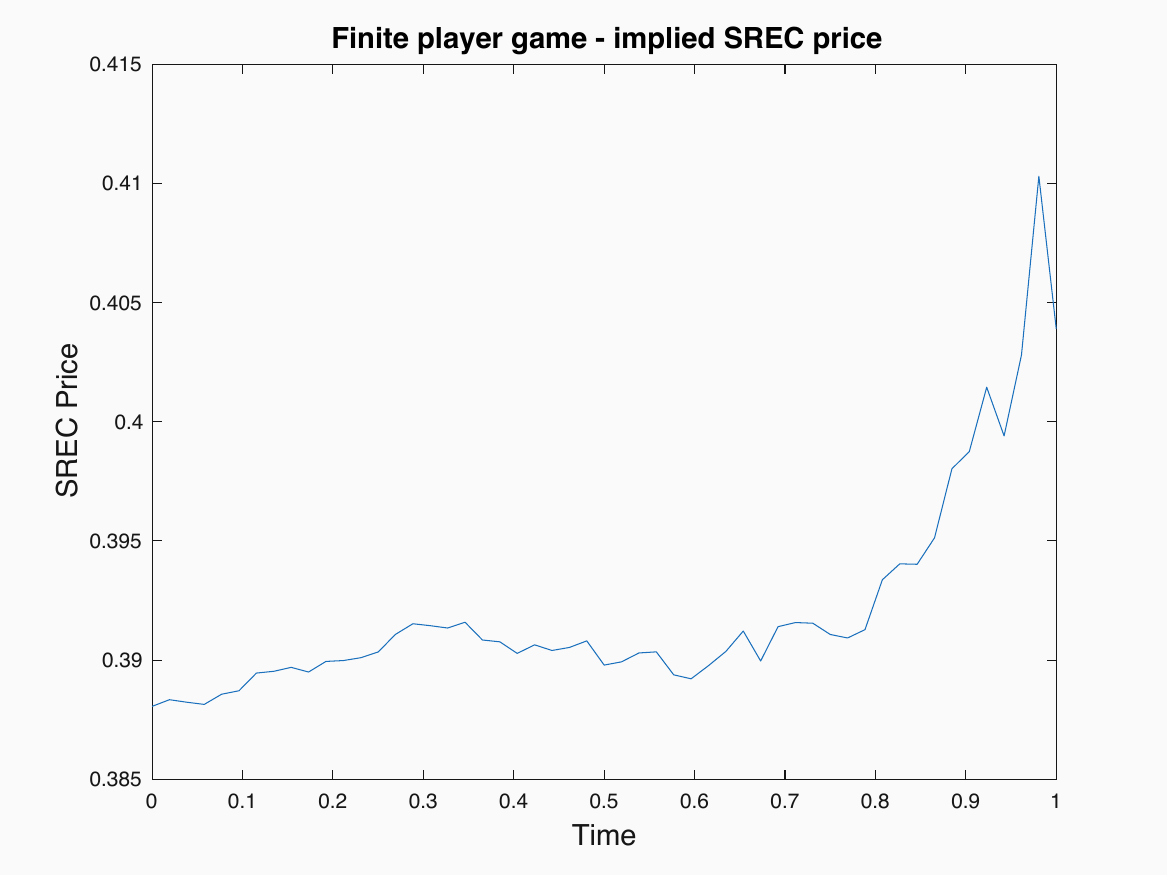}
\caption{Equilibrium SREC price in finite-player setting. Parameters in Tables \ref{tbl:ComplianceParams} and \ref{tbl:ModelParams}.}
\label{fig:finite_player_equilibrium_price}
\end{figure}
Finally, we examine the how the equilibrium SREC price evolves in this finite-player game, when compared to the infinite-player scenario. To do so, we plot the implied equilibrium SREC price, $S_t$, which is endogenously derived at all times $t$ through \eqref{eq:finite_player_SREC_price}, in \Cref{fig:finite_player_equilibrium_price}.

The most immediate difference is that the SREC price now has notable variation. More specifically, this variation is determined by the states of the agents, which themselves depend on the degree of over/under-generation that they experience. 

In this case, we see that the SREC price stays relatively constant for the first three quarters of the compliance period, then rapidly increases over the remainder. This suggests that firms non-compliance probabilities increased over the final parts of the period, resulting in increased demand for SRECs, as firms scramble to ensure they comply. Additionally, we see that the SREC price does not converge to either $0$ or $P$ at time $T$, as is implied in previous work in C\&T as well as SREC markets (see \cite{coulon_khazaei_powell_2015}, \cite{hitzemann2018equilibrium}, \cite{seifert_uhrig-homburg_wagner_2008}, \cite{carmona_fehr_hinz_2009}, and \cite{carmona2010market}). This property is consistent with the comments made in Section \ref{sec:SREC_price} regarding the form of the finite-player market clearing SREC price. 

\section{Conclusion}

We introduce  an $N$-player stochastic game for an SREC market with heterogeneous sub-populations, where each player aims to navigate the market at minimum cost by modulating their planned SREC generation and trading activities. By taking the infinite-player limit of the model, we obtain a MFG for the market, which is a more tractable problem. Using tools from variational analysis, we are able to express the optimal controls of each agent as the solution to a McKean-Vlasov FBSDE. We prove existence and uniqueness of a solution to this FBSDE, and thus we are able to characterize the Nash equilibrium of the MFG. Further, we establish the $\epsilon$-Nash property of the mean field solution when applied to the original finite-player game. Finally, we are able to numerically solve the MV-FBSDE, allowing us to run numerous experiments to better understand the structure of the optimal controls of the agents, and the associated implications on the SREC price and on non-compliance probabilities of each sub-population of agents. 

In particular, we see that the optimal behaviours of agents expressed as a function of their SREC inventory exist in regimes corresponding to the benefit a firm receives by acquiring an additional SREC. This is reinforced when observing the functional forms that the optimal controls take. Additionally, we observe that while the infinite-player SREC price process is deterministic, in the $N$-player stochastic game with agents behaving as per the MFG solution, this is not the case. Instead, we see notable price volatility, particularly towards the end of the compliance period, as firms have less time to react to any unforeseen shocks that impact their compliance probability. 

There are natural areas for improvement, and possible extensions to this work. Our framework and proof methods  can be easily generalized to other MFG with equilibrium pricing settings. In particular, it is possible to generalize the cost functional to be convex but not quadratic. As well, calibration to real-world data remains challenging due to a lack of transparency into the underlying cost functions of regulated LSEs, but is nonetheless a critical next step. Similarly, extending this work to a multi-period SREC market is another area that must be explored further. Nonetheless, in providing the mathematical framework contained in this paper, we have produced a logically consistent and coherent structure under which SREC markets (and indeed, any emissions market) can be studied holistically. We feel such a framework is potentially of great use to regulatory bodies and regulated firms in these systems alike, and aim to continue to refine the work in this area further.


\bibliographystyle{siam}
\bibliography{references}

\begin{thebibliography}{10}

\bibitem{sweden_electricity}
{\em The Swedish-Norwegian Electricity Certificate Market – Annual report},
  2018.

\bibitem{aid2017coordination}
{\sc R.~A{\"\i}d, M.~Basei, and H.~Pham}, {\em A {M}c{K}ean-{V}lasov approach
  to distributed electricity generation development}, arXiv preprint
  arXiv:1705.01302v3,  (2019).

\bibitem{amundsen2006price}
{\sc E.~S. Amundsen, F.~M. Baldursson, and J.~B. Mortensen}, {\em Price
  volatility and banking in green certificate markets}, Environmental and
  Resource Economics, 35 (2006), pp.~259--287.

\bibitem{angiuli2019cemracs}
{\sc A.~Angiuli, C.~V. Graves, H.~Li, J.-F. Chassagneux, F.~Delarue, and
  R.~Carmona}, {\em Cemracs 2017: numerical probabilistic approach to {MFG}},
  ESAIM: Proceedings and Surveys, 65 (2019), pp.~84--113.

\bibitem{aziz2016mean}
{\sc M.~Aziz and P.~E. Caines}, {\em A mean field game computational
  methodology for decentralized cellular network optimization}, IEEE
  transactions on control systems technology, 25 (2016), pp.~563--576.

\bibitem{BensoussanBook2013}
{\sc A.~Bensoussan, J.~Frehse, and P.~Yam}, {\em Mean Field Games and Mean
  Field Type Control Theory}, Springer-Verlag New York, 2013.

\bibitem{brown2017oil}
{\sc I.~Brown, J.~Funk, and R.~Sircar}, {\em Oil prices \& dynamic games under
  stochastic demand}, Available at SSRN 3047390,  (2017).

\bibitem{MasterEq2019}
{\sc P.~Cardaliaguet, F.~Delarue, J.-M. Lasry, and P.-L. Lions}, {\em The
  Master Equation and the Convergence Problem in Mean Field Games: (AMS-201)},
  vol.~2, Princeton University Press, 2019.

\bibitem{Cardaliaguet2018}
{\sc P.~Cardaliaguet and C.-A. Lehalle}, {\em Mean field game of controls and
  an application to trade crowding}, Mathematics and Financial Economics, 12
  (2018), pp.~335--363.

\bibitem{carmona_coulon_schwarz_2012}
{\sc R.~Carmona, M.~Coulon, and D.~Schwarz}, {\em The valuation of clean spread
  options: linking electricity, emissions and fuels}, Quantitative Finance, 12
  (2012), p.~1951–1965.

\bibitem{carmona2013probabilistic}
{\sc R.~Carmona and F.~Delarue}, {\em Probabilistic analysis of mean-field
  games}, SIAM Journal on Control and Optimization, 51 (2013), pp.~2705--2734.

\bibitem{carmona2018probabilistic}
{\sc R.~Carmona, F.~Delarue, et~al.}, {\em Probabilistic Theory of Mean Field
  Games with Applications I-II}, Springer, 2018.

\bibitem{carmona_fehr_hinz_2009}
{\sc R.~Carmona, M.~Fehr, and J.~Hinz}, {\em Optimal stochastic control and
  carbon price formation}, SIAM Journal on Control and Optimization, 48 (2009),
  p.~2168–2190.

\bibitem{carmona2010market}
{\sc R.~Carmona, M.~Fehr, J.~Hinz, and A.~Porchet}, {\em Market design for
  emission trading schemes}, Siam Review, 52 (2010), pp.~403--452.

\bibitem{CarmonaSysRisk2015}
{\sc R.~Carmona, J.-P. Fouque, Mousavi, and L.-H. Sun}, {\em Mean field games
  and systemic risk}, Communications in Mathematical Sciences, 13 (2015),
  pp.~911--933.

\bibitem{CarmonaSysRisk2018}
{\sc R.~Carmona, J.-P. Fouque, S.~M. Mousavi, and L.-H. Sun}, {\em Systemic
  risk and stochastic games with delay}, Journal of Optimization Theory and
  Applications, 179 (2018), pp.~366--399.

\bibitem{CarmonaLacker2015}
{\sc R.~Carmona and D.~Lacker}, {\em A probabilistic weak formulation of mean
  field games and applications}, The Annals of Applied Probability, 25 (2015),
  pp.~1189--1231.

\bibitem{casgrain2018mean}
{\sc P.~Casgrain and S.~Jaimungal}, {\em Mean-field games with differing
  beliefs for algorithmic trading}, (To appear) Mathematical Finance,  (2020).

\bibitem{Sircar2017}
{\sc P.~Chan and R.~Sircar}, {\em Fracking, renewables, and mean field games},
  SIAM Review, 59 (2017), pp.~588--615.

\bibitem{chassagneux2019numerical}
{\sc J.-F. Chassagneux, D.~Crisan, F.~Delarue, et~al.}, {\em Numerical method
  for fbsdes of {M}c{K}ean--{V}lasov type}, The Annals of Applied Probability,
  29 (2019), pp.~1640--1684.

\bibitem{coulon_khazaei_powell_2015}
{\sc M.~Coulon, J.~Khazaei, and W.~B. Powell}, {\em {SMART-SREC}: A stochastic
  model of the {New Jersey} solar renewable energy certificate market}, Journal
  of Environmental Economics and Management, 73 (2015), p.~13–31.

\bibitem{delarue2002existence}
{\sc F.~Delarue}, {\em On the existence and uniqueness of solutions to {FBSDE}s
  in a non-degenerate case}, Stochastic processes and their applications, 99
  (2002), pp.~209--286.

\bibitem{Tembine2017}
{\sc B.~Djehiche, A.~Tcheukam, and H.~Tembine}, {\em Mean-field-type games in
  engineering}, arXiv preprint arXiv:1605.03281,  (2017).

\bibitem{ekeland1999convex}
{\sc I.~Ekeland and R.~Temam}, {\em Convex analysis and variational problems},
  vol.~28, Siam, 1999.

\bibitem{David-Yuri2020}
{\sc D.~Evangelista and Y.~Thamsten}, {\em Finite population games of optimal
  execution}, arXiv,  (2020).

\bibitem{ThesisDena2019}
{\sc D.~Firoozi}, {\em Mean field games and optimal execution problems: hybrid
  and partially observed major minor systems}, PhD Thesis, McGill University,
  2019.

\bibitem{FirooziISDG2017}
{\sc D.~Firoozi and P.~E. Caines}, {\em The execution problem in finance with
  major and minor traders: A mean field game formulation}, in Annals of the
  International Society of Dynamic Games (ISDG): Advances in Dynamic and Mean
  Field Games, vol.~15, Birkh\"auser Basel, 2017, pp.~107--130.

\bibitem{FCJ-Convex2018}
{\sc D.~Firoozi, S.~Jaimungal, and P.~E. Caines}, {\em Convex analysis for
  {LQG} systems with applications to major minor {LQG} mean field game
  systems}, arXiv,  (2018).

\bibitem{FirooziPakniyatCainesCDC2017}
{\sc D.~Firoozi, A.~Pakniyat, and P.~E. Caines}, {\em A mean field game -
  hybrid systems approach to optimal execution problems in finance with
  stopping times}, in Proceedings of the 56th IEEE Conference on Decision and
  Control (CDC), Melbourne, Australia, Dec. 2017, pp.~3144--3151.

\bibitem{Horst2018}
{\sc G.~Fu, P.~Graewe, U.~Horst, and A.~Popier}, {\em A mean field game of
  optimal portfolio liquidation}, arXiv preprint arXiv:1804.04911,  (2018).

\bibitem{fujii2020mean}
{\sc M.~Fujii and A.~Takahashi}, {\em A mean field game approach to equilibrium
  pricing with market clearing condition},  (2020).

\bibitem{gomes2020mean2}
{\sc D.~Gomes, J.~Gutierrez, and R.~Ribeiro}, {\em A mean-field game price
  model with noise}, arXiv preprint arXiv:2003.01945,  (2020).

\bibitem{Gomes2016}
{\sc D.~{Gomes}, L.~{Lafleche}, and L.~{Nurbekyan}}, {\em A mean-field game
  economic growth model}, in 2016 American Control Conference (ACC), 2016,
  pp.~4693--4698.

\bibitem{Gomes2014}
{\sc D.~Gomes, R.~M. Velho, and M.-T. Wolfram}, {\em Socio-economic
  applications of finite state mean field games}, Philosophical Transactions of
  the Royal Society A: Mathematical, Physical and Engineering Sciences, 372
  (2014), pp.~1--17.

\bibitem{gomes2020mean}
{\sc D.~A. Gomes and J.~Sa\'{u}de}, {\em A mean-field game approach to price
  formation}, Dynamic Games and Applications,  (2020), pp.~1--25.

\bibitem{Mouzouni2019}
{\sc P.~J. Graber and C.~Mouzouni}, {\em On mean field games models for
  exhaustible commodities trade}, (To appear) ESAIM: COCV,  (2019).

\bibitem{hitzemann2018equilibrium}
{\sc S.~Hitzemann and M.~Uhrig-Homburg}, {\em Equilibrium price dynamics of
  emission permits}, Journal of Financial and Quantitative Analysis, 53 (2018),
  pp.~1653--1678.

\bibitem{howison_schwarz_2012}
{\sc S.~Howison and D.~Schwarz}, {\em Risk-neutral pricing of financial
  instruments in emission markets: A structural approach}, SIAM Journal on
  Financial Mathematics, 3 (2012), p.~709–739.

\bibitem{huang2007invariance}
{\sc M.~Huang, P.~E. Caines, and R.~P. Malham{\'e}}, {\em An invariance
  principle in large population stochastic dynamic games}, Journal of Systems
  Science and Complexity, 20 (2007), pp.~162--172.

\bibitem{huang2007large}
\leavevmode\vrule height 2pt depth -1.6pt width 23pt, {\em Large-population
  cost-coupled {LQG} problems with nonuniform agents: individual-mass behavior
  and decentralized $\epsilon$-{N}ash equilibria}, IEEE transactions on
  automatic control, 52 (2007), pp.~1560--1571.

\bibitem{huang2006large}
{\sc M.~Huang, R.~P. Malham{\'e}, P.~E. Caines, et~al.}, {\em Large population
  stochastic dynamic games: closed-loop {M}c{K}ean-{V}lasov systems and the
  {N}ash certainty equivalence principle}, Communications in Information \&
  Systems, 6 (2006), pp.~221--252.

\bibitem{JaimungalSysRisk2017}
{\sc X.~Huang and S.~Jaimungal}, {\em Robust stochastic games and systemic
  risk}, SSRN,  (2017).

\bibitem{Mojtaba2015}
{\sc X.~Huang, S.~Jaimungal, and M.~Nourian}, {\em Mean-field game strategies
  for optimal execution}, Applied Mathematical Finance, 26 (2019),
  pp.~153--185.

\bibitem{hustveit2017tradable}
{\sc M.~Hustveit, J.~S. Frogner, and S.-E. Fleten}, {\em Tradable green
  certificates for renewable support: The role of expectations and
  uncertainty}, Energy, 141 (2017), pp.~1717--1727.

\bibitem{khazaei2017adapt}
{\sc J.~Khazaei, M.~Coulon, and W.~B. Powell}, {\em Adapt: A price-stabilizing
  compliance policy for renewable energy certificates: The case of srec
  markets}, Operations Research, 65 (2017), pp.~1429--1445.

\bibitem{KIZILKALE2019}
{\sc A.~C. Kizilkale, R.~Salhab, and R.~P. Malham{\'e}}, {\em An integral
  control formulation of mean field game based large scale coordination of
  loads in smart grids}, Automatica, 100 (2019), pp.~312 -- 322.

\bibitem{Lasry2006a}
{\sc J.~M. Lasry and P.~L. Lions}, {\em Jeux \`a champ moyen. i - le cas
  stationnaire}, Comptes Rendus de l'Acad\'emie des Sciences, 343 (2006),
  pp.~619--625.

\bibitem{Lasry2006b}
\leavevmode\vrule height 2pt depth -1.6pt width 23pt, {\em Jeux \`a champ
  moyen. ii - horizon fini et contr\^ole optimal}, Comptes Rendus de
  l'Acad\'emie des Sciences, 343 (2006), pp.~679--684.

\bibitem{lasry2007mean}
{\sc J.-M. Lasry and P.-L. Lions}, {\em Mean field games}, Japanese journal of
  mathematics, 2 (2007), pp.~229--260.

\bibitem{Lehalle2019}
{\sc C.-A. Lehalle and C.~Mouzouni}, {\em A mean field game of portfolio
  trading and its consequences on perceived correlations}, arXiv preprint
  arXiv:1902.09606,  (2019).

\bibitem{li2019mean}
{\sc Z.~Li, A.~M. Reppen, and R.~Sircar}, {\em A mean field games model for
  cryptocurrency mining}, arXiv preprint arXiv:1912.01952,  (2019).

\bibitem{ludkovski2017mean}
{\sc M.~Ludkovski and X.~Yang}, {\em Mean field game approach to production and
  exploration of exhaustible commodities}, arXiv preprint arXiv:1710.05131,
  (2017).

\bibitem{pham2009continuous}
{\sc H.~Pham}, {\em Continuous-time stochastic control and optimization with
  financial applications}, vol.~61, Springer Science \& Business Media, 2009.

\bibitem{seifert_uhrig-homburg_wagner_2008}
{\sc J.~Seifert, M.~Uhrig-Homburg, and M.~Wagner}, {\em Dynamic behavior of
  $\text{CO}_2$ spot prices}, Journal of Environmental Economics and
  Management, 56 (2008), p.~180–194.

\bibitem{shrivats2019behaving}
{\sc A.~Shrivats and S.~Jaimungal}, {\em Behaving optimally in solar renewable
  energy certificate markets}, Applied Mathematical Financ, Forthcoming.
  arXiv:1904.06337,  (2019).

\bibitem{SRECTrade}
{\sc {SRECTrade}}, {\em New jersey {SREC} market}, 2019.

\end{thebibliography}

\appendix

\section{Proofs}\label{appendix:proofs}
\subsection{Proof of \Cref{convexity}} \label{Proof:prop:convexity}
We must show that $\overline{J^i}$ has the following property: for $\lambda \in (0, 1)$ and two sets of controls $(g^i, \Gamma^i), (w^i, \upsilon^i) \in \mcA^i$, we require that
\begin{equation}
\overline{J^i}(\lambda g^i + (1 - \lambda) w^i , \lambda \Gamma^i + (1 - \lambda) \upsilon^i; \bm{\mu}) < \lambda \overline{J^i} (g^i, \Gamma^i; \bm{\mu}) + (1 - \lambda)\overline{J^i} (w^i, \upsilon^i; \bm{\mu}).
\label{eq:convexity_condition}
\end{equation}
To do so, first  write $\overline{J^i}(g^i, \Gamma^i; \bm{\mu})$ 
$= G^i(g^i, \Gamma^i ; \bm{\mu}) + H^i(g^i, \Gamma^i; \bm{\mu})$, 
$i\in\mfN_k$, 
where
\begin{align}
G^i(g^i, \Gamma^i; \bm{\mu}) &:=  \EE\left[\int_0^T
\left\{\begin{pmatrix}g_t^i & \Gamma_t^i \end{pmatrix} Q^k \begin{pmatrix}
g_t^i \\ \Gamma_t^i \end{pmatrix}+ \begin{pmatrix}g_t^i & \Gamma_t^i \end{pmatrix} \begin{pmatrix}
- \zeta^k h_t^k  \\
S_t^{\bm{\mu}}
\end{pmatrix} + \tfrac{\zeta^k}{2} (h_t^k)^2 \right\}dt\right],  
\\
H^i(g^i, \Gamma^i; \bm{\mu}) &:= \EE\left[P \;F_{\delta}\left(R^k - X_0^i - \int_0^T
(g_t^i + \Gamma_t^i) \,dt - \sigma^k W_T^i\right)\right].
\end{align}
We prove that $G^i$ is strictly convex and $H^i$ is convex, which implies that $\overline{J^i}$ is strictly convex.

First consider $G^i$. As $Q_k$ is positive definite, the first term is strictly convex in the controls. The second term is linear in the controls, and thus is also convex. The third term is independent of the controls. Through the linearity of integrals, and since the sum of a strictly convex and a convex function is strictly convex, $G^i$ a strictly convex functional of the controls.

It remains to show that $H^{i}(g^i, \Gamma_i; \bm{\mu})$ is a convex functional of the controls.
\begin{align}
H^i(\lambda g^i + &(1 - \lambda) w^i ,  \lambda \Gamma^i + (1 - \lambda) \upsilon^i; \bm{\mu}) \nonumber \\
\mbox{\tiny{(Definition)}}&= \EE\left[ PF_\delta\left(\underset{Z^i}{\underbrace{R^k - X_0^i - \sigma^k W_T^i}} - \lambda \int_0^T (g_t^i + \Gamma_t^i) dt - (1 - \lambda) \int_0^T (w^i_t + \upsilon^i_t) dt\right)\right]  \nonumber \\
&= \EE\left[PF_\delta\left(\lambda \Big(Z^i - \int_0^T (g_t^i + \Gamma_t^i) dt\Big) + (1 - \lambda)\Big(Z^i - \int_0^T (w^i_t + \upsilon^i_t) dt\Big)\right)\right] \nonumber \\
\mbox{\tiny{($F_\delta(x)$ convex)}}&\leq \EE\left[\lambda PF_\delta\left(Z^i - \int_0^T (g_t^i + \Gamma_t^i) dt\right) + (1 - \lambda)PF_\delta\left(Z^i - \int_0^T (w^i_t + \upsilon^i_t) dt\right)\right]  \nonumber \\
&= \lambda H^i(g^i, \Gamma^i; \bm{\mu}) + (1 - \lambda) H^i(w^i, \upsilon^i; \bm{\mu}) \nonumber
\end{align}
Therefore, $H^i$ is convex, and $\overline{J^i}$ is strictly convex.

\subsection{Proof of \Cref{gateaux_deriv}}\label{Proof:prop:gateaux_deriv}
The G\^ateaux derivative we seek are given by the limits
\begin{subequations}
\begin{align}
\langle \mcD \overline{J^i}(g^i, \Gamma^i; \bm{\mu}), \omega^g \rangle &:= \lim_{\epsilon \rightarrow 0} \frac{\overline{J^i}(g^i + \epsilon \omega^g, \Gamma^i;\bm{\mu}) - \overline{J^i}(g^i, \Gamma^i; \bm{\mu})}{\epsilon} \label{eq:GD_gen_lim},
\quad \text{and}
\\
\langle \mcD \overline{J^i}(g^i, \Gamma^i; \bm{\mu}), \omega^\Gamma \rangle &:= \lim_{\epsilon \rightarrow 0} \frac{\overline{J^i}(g^i, \Gamma^i + \epsilon \omega^\Gamma;\bm{\mu}) - \overline{J^i}(g^i, \Gamma^i; \bm{\mu})}{\epsilon}. \label{eq:GD_trade_lim}
\end{align}
\end{subequations}
We examine each of these limits in turn, starting with \eqref{eq:GD_gen_lim}. To this end,
\begin{align}
\overline{J^i}(g^i + \epsilon \omega^g, \Gamma^i;\bm{\mu}) - \overline{J^i}(g^i, \Gamma^i;\bm{\mu}) = \EE\biggl[&\int_0^T \left\{\tfrac{\zeta^k}{2}(g_t^i - h_t^k + \epsilon\omega_t^g)^2 - \tfrac{\zeta^k}{2} (g_t^i - h_t^k)^2\right\} dt \nonumber 
\\&
+ P\, F_\delta\left(R^k - X_T^i - \epsilon {\textstyle\int_0^T} \omega_t^g dt\right) - P \, F_\delta(R^k - X_T^i) \biggr] \label{eq:GD_gen_numerator}
\end{align}
which follows  from the definition of the cost functional \eqref{eq:agentCostFnRegularizedMF} and the state dynamics of $X_t^i$ \eqref{eq:state_dynamics}. From the Taylor expansion of $F_\delta$, we have
\begin{align}
F_\delta\left(R^k - X_T^i - \epsilon {\textstyle\int_0^T} \omega_t^g dt\right) = F_\delta(R^k - X_T^i)  - \epsilon \int_0^T \omega_t^g \, F_\delta^\prime (R^k - X_T^i) dt + O(\epsilon^2) \label{eq:taylor_exp_terminal}
\end{align}
This holds almost everywhere by the assumptions imposed on on $F_\delta$ (specifically, that its second derivative exists and is bounded a.e.). Here,  $F_\delta^\prime(\cdot)$ denotes the derivative with respect to its argument. Substituting \eqref{eq:taylor_exp_terminal} into \eqref{eq:GD_gen_numerator} and performing some algebra, we obtain
\begin{equation}
\begin{split}
&\overline{J^i}(g^i + \epsilon \omega^g, \Gamma^i;\bm{\mu}) - \overline{J^i}(g^i, \Gamma^i;\bm{\mu}) 
\\ 
&\qquad= \EE\biggl[ \epsilon \int_0^T \zeta^k (g_t^i - h_t^k)\omega_t^g dt - 
 \epsilon \int_0^T \omega_t^g  P\, F_\delta^\prime (R^k - X_T^i)dt + O(\epsilon^2)\biggr]\,. 
\end{split}
\end{equation}
Dividing by $\epsilon$ and taking limits on both sides as $\epsilon \rightarrow 0$ implies
\begin{align}
\langle \mcD \overline{J^i}, \omega^g \rangle = \EE\biggl[\int_0^T \zeta^k (g_t^i - h_t^k)\omega_t^g dt - \int_0^T \omega_t^g  P \, F_\delta^\prime (R^k - X_T^i)dt\biggr]\,.
\end{align}
Next, applying iterated expectations we obtain
\begin{align}
\langle \mcD \overline{J^i}, \omega^g \rangle &= \EE\biggl[\int_0^T \zeta^k (g_t^i - h_t^k)\omega_t^g dt - \int_0^T \omega_t^g  P \, \EE[ F^\prime (R^k - X_T^i) | \Gg_t^i]dt\biggr] \\
&= \EE\left[ \int_0^T \omega_t^g \left( \zeta^k (g_t^i - h_t^k) - P\, Y_t^i\right)dt\right],
\end{align}
as required.
Through the same techniques applied to \eqref{eq:GD_trade_lim}, we obtain \eqref{eq:GD_trade}.

\subsection{Proof of \Cref{prop:optimality}}\label{Proof:prop:optimality}

\noindent\textit{Sufficiency:} Assume \eqref{eq:gen_nec_cond}-\eqref{eq:trade_nec_cond} holds. From \eqref{gateaux_deriv} we have
\begin{equation}
    \langle \mcD \overline{J^i}(g^i, \Gamma^i; \bm{\mu}), \omega \rangle = 0, \quad \forall\, \omega \in \mcA^i,
\end{equation} as required. 

\noindent \textit{Necessity:} 
Assume $\langle \mcD \overline{J^i}(g^i, \Gamma^i; \bm{\mu}), \omega \rangle = 0, \forall \omega \in \mcA^i$. We next show that \eqref{eq:gen_nec_cond}-\eqref{eq:trade_nec_cond} hold. We prove this via the contrapositive. 

Assume that \eqref{eq:gen_nec_cond}-\eqref{eq:trade_nec_cond} do not hold. Then 
\begin{equation}
    B = \left\{ (w, t) \in \Omega \times \mathfrak{T}:\left(\zeta^k(g_t - h_t^k) - PY_t^i\right)(w) \neq 0 \cap \left(\gamma^k \Gamma_t - PY_t^i + S_t\right)(w) \neq 0\right\}
\end{equation}
has positive measure.

We further define:
\begin{subequations}
\begin{align}
    B_{\text{pos}} &:= \left\{ (w, t) \in \Omega \times \mathfrak{T}:\left(\zeta^k(g_t - h_t^k) - PY_t\right)(w) > 0\right\}, \quad \text{and} 
    \\
    B_{\text{neg}} &:= \left\{ (w, t) \in \Omega \times \mathfrak{T}:\left(\zeta^k(g_t - h_t^k) - PY_t\right)(w) < 0\right\}.
\end{align}
\end{subequations}
At least one of the two sets defined above has positive measure. 
Suppose that $B_{\text{pos}}$ has positive measure. Define the processes 
\begin{align}
    \omega_t^g := \left(\zeta^k(g_t^i - h_t^k) - PY_t^i\right)_+\,, \quad \text{and}\quad
    \omega_t^\Gamma := \left(\gamma^k \Gamma_t^i - PY_t^i + S_t^{\bm{\mu}}\right)\,.
\end{align}
As all processes on the RHS of both expressions are  $\Gg^i$-adapted,  $\omega_t^g, \omega_t^\Gamma$ are as well. This together with the boundedness of $h_t^k, Y_t^i$, and $S_t^{\bm{\mu}}$, and non-negativity of $\omega_t^g$, imply that $(\omega_t^g, \omega_t^\Gamma)\in\mcA^i$.

Therefore, from \eqref{eq:GD_gen}, we obtain
\begin{align*}
    \langle \mcD \overline{J^i}, \omega^g \rangle &= \EE\left[ \int_0^T \omega_t^g \left( \zeta^k (g_t^i - h_t^k) - P Y_t^i\right)dt\right]
    \\
    &=  \EE\left[ \int_0^T \left(\zeta^k(g_t^i - h_t^k) - PY_t^i\right)^2 \II\{\zeta^k (g_t^i - h_t^k) - P Y_t^i > 0\} dt\right] > 0. 
\end{align*}

If, on the other hand, $B_\text{pos}$ does not have positive measure, then $B_\text{neg}$ must, as $B$ has positive measure. In this case, set $\omega_t^g = \left(\zeta^k (g_t^i - h_t^k) - P Y_t^i\right)_-$.
Once again, the pair $(\omega^g,\omega^\Gamma)\in\mcA^i$. In a manner similar to the above, we obtain that $\langle \mcD \overline{J^i}, \omega^g \rangle < 0$.  Hence, we have found an admissible $\omega^g$ such that $\langle \mcD \overline{J^i}, \omega^g \rangle \ne 0$.

Next, from \eqref{eq:GD_trade}, we obtain
\begin{align*}
    \langle \mcD \overline{J^i}, \omega^\Gamma \rangle &= \EE\left[ \int_0^T \omega_t^\Gamma \left( \gamma^k \Gamma_t^i + S_t^{\bm{\mu}} - P Y_t^i\right)dt\right]  
    =  \EE\left[ \int_0^T \left( \gamma^k \Gamma_t^i + S_t^{\bm{\mu}} - P Y_t^i\right)^2 dt\right]  > 0
\end{align*}
Thus $\langle \mcD \overline{J^i}(g^i, \Gamma^i), \omega \rangle \neq 0$, and the proof of necessity is complete.

\subsection{Proof of \Cref{prop:existence}}\label{Proof:prop:existence}
To establish existence, we introduce an `auxiliary' problem that falls under the framework in \cite{carmona2013probabilistic} (our current problem does not due to the endogenous formulation of the price through the clearing condition \eqref{eq:infinite_player_clearing_condition}). This auxiliary problem results in an FBSDE equivalent to \eqref{eqn:FBSDE-full}, and therefore, establishing the existence of a fixed point to this auxiliary problem suffices to establish the existence of a consistent mean field distribution for our original problem.


Consider, from the perspective of a regulated firm $i$ in sub-population $k$, the optimal control problem 
\begin{subequations}
\label{eqn:equiv-MFG}
\begin{equation}
    \inf_{\tilde{g}^i, \tilde{\Gamma}^i \in \mcA^i} \EE\left[\int_0^T \left(\tfrac{\zeta^k}{2}(\tilde{g}_t^i - h_t^k)^2 dt + \tfrac{\gamma^k}{2}(\tilde{\Gamma}^{i}_t)^2 + \tilde{\Gamma}_t^i\,S(t, \bm{\tilde{\mu}}) \right)dt 
    + P\, F^\prime_\delta(R^k - \tilde{X}_T^i)\right],
\end{equation}
subject to 
\begin{align}
    d\tilde{X}_t^i &= (\tilde{g}_t^i + \tilde{\Gamma}_t^i)dt + \sigma\,dW_t^i, \\
    \color{black}
    \tilde{S}(t, \tilde{\bm{\mu}}) &= \color{black} P \sum_{k\in\mcK} \eta_k \int Y^{(k)}(t, x; \tilde{\bm{\mu}})\; \tilde{\mu}_t^{(k)}(dx)\,,
    \label{eqn:S-in-equiv-MFG}
\end{align}
\end{subequations}


where 
\[
\eta_k=
\frac{\frac{\pi_{k}}{\gamma^{k}}}
{\sum_{k' \in\mcK} \frac{\pi_{k'}}{\gamma^{k'}}}\in(0,1) ,\qquad \forall\,k\in\mcK,
\]

The above formulation is equivalent to the original problem (given by \eqref{eq:infinite_player_clearing_condition}-\eqref{eq:problem_statementMF}) with a pre-specified SREC price that coincides with the market clearing SREC price \eqref{eq:equilibrium_price} where firms trade as per \eqref{eq:optGamma}.

As a result, \eqref{eqn:equiv-MFG} falls within the framework in \cite{carmona2013probabilistic}, with the minor change of allowing for sub-populations of agents.


\textcolor{black}{
It should be noted that this auxiliary problem is equivalent to the original problem  but with $S_t^{\bm{\mu}}$ specified as in \eqref{eqn:S-in-equiv-MFG}. This slight reformulation allows us to more directly apply previous work in this field.}

From here, we may follow the same variational analysis approach as in \Cref{sec:variationalAnalysis} to solve this MFG, which results in the optimal controls \eqref{eq:optG} and \eqref{eq:optGamma} with \eqref{eqn:S-in-equiv-MFG} substituted for the price. Alternatively, we could follow the probabilistic method of solving MFGs (outlined in \cite{carmona2013probabilistic}). In this case, the probabilistic method involves expressing the Hamiltonian, optimizing it with respect to the controls, and substituting those controls back into the forward state equation. We then define a backwards SDE through the derivative of the terminal reward and the Hamiltonian (with respect to the forward state). Ultimately, with a rescaling of the backwards SDE and the co-adjoint process, as well as the substitution of \eqref{eqn:S-in-equiv-MFG}, this results in the FBSDE \eqref{eqn:FBSDE-full}.


The solution to \eqref{eqn:equiv-MFG} is given by a mean field distribution $\bm{\tilde{\mu}}$ and a progressively measurable triple $(\tilde{X}^i, \tilde{Y}^i, \tilde{Z}^i) = (\tilde{X}^i_t, \tilde{Y}^i_t, \tilde{Z}^i_t)_{t \in \mfT}$ that satisfies \eqref{eqn:FBSDE-full} 
such that $\tilde{\mu}_t^{(k)}$ coincides with $\mathcal{L}(\tilde{X}^i_t)$ for all $i \in \mfN_k$, $k \in \mcK$.



By the same argument as in \Cref{rmk:markov_lipschitz_adjoint}, $\tilde{Y}_t$ is Markov and Lipschitz
in $(t, \tilde{X_t})$. Therefore, from Theorem 3.2 in \cite{carmona2013probabilistic}, we obtain the existence of a solution to the auxiliary problem. The Markov and Lipschitz properties of the adjoint process $\tilde{Y_t}$ is required to satisfy the assumptions of Theorem 3.2  in \cite{carmona2013probabilistic} (in particular, A.5).  The optimal controls take their form as in \eqref{eq:optG} and \eqref{eq:optGamma}, with $\tilde{Y}_t^i$ replacing $Y_t^i$ as appropriate. The clearing condition still holds, by design, through our choice of $\tilde{S}_t$.

This immediately implies the existence of a solution to our original problem, as the MV-FBSDEs characterizing its solution is also \eqref{eqn:FBSDE-full}.
Therefore, we have established the existence of a mean-field distribution $\bm{\mu}$ and a progressively measurable triple $(X^i, Y^i, Z^i) = (X_t^i, Y_t^i, Z_t^i)_{t \in \mathfrak{T}}$ that satisfies \eqref{eqn:FBSDE-full}, such that $\mu_t^{(k)}$ coincides with $\mathcal{L}(X_t^i)$ for all $i \in \mfN_k$, for all $k \in \mcK$. In particular, this solution is precisely the mean field distribution $\bm{\tilde{\mu}}$ and a progressively measurable triple $(\tilde{X}^i, \tilde{Y}^i, \tilde{Z}^i) = (\tilde{X}^i_t, \tilde{Y}^i_t, \tilde{Z}^i_t)_{t \in \mfT}$ guaranteed by Theorem 3.2 of \cite{carmona2013probabilistic}.

\subsection{Proof of \Cref{prop:uniqueness}}\label{Proof:prop:uniqueness}
Once again, we consider the auxiliary problem \eqref{eqn:equiv-MFG}. Due to the equivalence between the MV-FBSDE associated with this problem and our original problem \eqref{eq:infinite_player_clearing_condition}-\eqref{eq:problem_statementMF} (specifically, that they are both of the form \eqref{eqn:FBSDE-full}), it is clear that demonstrating uniqueness for the auxiliary problem suffices for our purposes. For notational convenience, we omit tildes from the processes in the remainder of the proof, even though these processes all refer to the auxiliary problem.

\textcolor{black}{Before we begin, it is worth nothing that Proposition 3.7 in \cite{carmona2013probabilistic}, which establishes uniqueness for the class of problems considered in that work (which the auxiliary problem falls into) requires a further assumption that the running costs of the agents can be separated into the sum of a function of the control and a function of the mean field distribution. Our problem does not satisfy the additional assumption needed to apply this uniqueness theorem, hence we must develop our own proof, albeit one that is inspired by Proposition 3.7 in \cite{carmona2013probabilistic}}

We prove this via contradiction. Suppose two different measure flows $\bm{\mu}$ and $\hat{\bm{\mu}}$ (along with corresponding progressively measurable triples) both satisfy the auxiliary problem (specifically, this means they both satisfy the fixed point problem associated with \eqref{eqn:FBSDE-full}).   

Consider a generic firm $i$ belonging to sub-population $k$.  Define $\alpha_t^{i, \star} := (g^{i, \star}_t, \Gamma^{i, \star}_t)$
where $g^{i, \star}_t, \Gamma^{i, \star}_t$ are given in \eqref{eq:optG} and \eqref{eq:optGamma}, corresponding to the optimal generation and trading for an agent in sub-population $k$
given the mean field distribution $\bm{\mu}$. Analogously, define $\hat{\alpha}_t^{i, \star} := (\hat{g}^{i, \star}_t, \hat{\Gamma}^{i, \star}_t)$ as the optimal behaviours for the same agent given the mean field distribution $\bm{\hat{\mu}}$. The optimal controls exist for any choice of mean field distribution due to \Cref{prop:optimal_controls}. Their associated state trajectories are denoted $X_t^i$ and $\hat{X}_t^i$ respectively.

Proposition 2.5 in \cite{carmona2013probabilistic}
implies that there exists constants $\lambda^k>0$ s.t.,
\begin{align*}
    &\bar{J}^i(\alpha_t^{i, \star}; \bm{\mu}) + \lambda^k\, \EE\left[\int_0^T \norm{\alpha_t^{i, \star} - \hat{\alpha}_t^{i, \star}}^2 dt \right] 
    \\
    & \qquad \leq \bar{J}^i([\hat{\alpha}_t^{i, \star}, \bm{\hat{\mu}}]; \bm{\mu}) 
    = \EE\left[\int_0^T \left(\tfrac{\zeta^k}{2}(\hat{g}_t^{i, \star} - h_t^k)^2 + \tfrac{\gamma^k}{2} (\hat{\Gamma}_t^{i, \star})^2 + S_t^{\bm{\mu}}\hat{\Gamma}_t^{i, \star} \right) dt + P F^\prime_\delta(R^k - \hat{X}_T) \right]
\end{align*}%
where, borrowing from \cite{carmona2013probabilistic}, the notation $[\hat{\alpha}_t^{i, \star}, \bm{\hat{\mu}}]$ in the cost functional $\bar{J}^i([\hat{\alpha}_t^{i, \star}, \bm{\hat{\mu}}]; \bm{\mu})$ indicates that the measure flow in the drift of $\hat{X}^{i, \star}$ is $\bm{\hat{\mu}}$, whereas the measure flow in the cost functional is $\bm{\mu}$. Subtracting $\bar{J}^i(\hat{\alpha}_t^{i, \star}; \bm{\hat{\mu}}$) from both sides implies
\begin{align*}
   \bar{J}^i(&\alpha_t^{i, \star}; \bm{\mu}) - \bar{J}^i(\hat{\alpha}_t^{i, \star}; \bm{\hat{\mu}}) + 
   \lambda^k\, \EE\left[\int_0^T \norm{\alpha_t^{i, \star} - \hat{\alpha}_t^{i, \star}}^2 dt \right]   
   \\
   &\leq 
   \EE\left[\int_0^T \left(\tfrac{\zeta^k}{2}(\hat{g}_t^{i, \star} - h_t^k)^2 + \tfrac{\gamma^k}{2} (\hat{\Gamma}_t^{i, \star})^2 + S_t^{\bm{\mu}} \hat{\Gamma}_t^{i, \star} \right)dt + P F^\prime_\delta(R^k - \hat{X}_T^i) \right] 
   \\
   &\quad - \EE\left[\int_0^T \left(\tfrac{\zeta^k}{2}(\hat{g}_t^{i, \star} - h_t^k)^2 + \tfrac{\gamma^k}{2} (\hat{\Gamma}_t^{i, \star})^2 + S_t^{\bm{\hat{\mu}}}\hat{\Gamma}_t^{i, \star}  \right)dt + P F^\prime_\delta(R^k - \hat{X}_T^i) \right] 
   \\
   &= \EE\left[\int_0^T \hat{\Gamma}_t^{i, \star} (S_t^{\bm{\mu}} - S_t^{\bm{\hat{\mu}}}) dt \right] \,.
\end{align*}
By exchanging the roles of $\bm{\mu}$ and $\bm{\hat{\mu}}$ and  summing the resulting inequality with the one above, we obtain
\begin{align}
    2 \,\lambda^k \,\EE\left[\int_0^T \norm{\alpha_t^{i,\star} - \hat{\alpha}_t^{i,\star} }^2 dt \right] 
    &\leq  \EE\left[\int_0^T  (S_t^{\bm{\mu}} - S_t^{\bm{\hat{\mu}}}) (\hat{\Gamma}_t^{i, \star} - \Gamma_t^{i, \star}) \, dt \right] 
\end{align}
as the above holds for a generic agent $i$ within sub-population $k$, we may write
\begin{align}
    2 \,\lambda^k \,\EE\left[\int_0^T \norm{\alpha_t^{i,\star} - \hat{\alpha}_t^{i,\star} }^2 dt \right] 
    &\leq  \int_0^T  (S_t^{\bm{\mu}} - S_t^{\bm{\hat{\mu}}}) \left(\int \hat{\Gamma}_t^{\cdot, \star}\hat\mu_t^{(k)}(dx) - \int\Gamma_t^{\cdot, \star}\,\mu_t^{(k)}(dx) \right) \, dt.
\end{align}
Therefore, we now take a weighted sum across all sub-populations, weighted by $\pi_k$, to obtain
\begin{multline}
    \sum_{k\in\mcK}\pi_k\,2 \lambda^k \EE\left[\int_0^T \norm{\alpha^{\cdot,k,\star} - \hat{\alpha}^{\cdot,k,\star}}^2 dt \right]
    \\
    \leq  
    \int_0^T \sum_{k\in\mcK}\pi_k\,\left(\int_{\R} \hat{\Gamma}_t^{\cdot,k,\star} \hat{\mu}_t^{{(k)}}(dx) - \int_{\R} \Gamma_t^{\cdot,k,\star} \mu_t^{ {(k)}}(dx)\right) (S_t^{\bm{\mu}} - S_t^{\bm{\hat{\mu}}}) \, dt \\
    =\int_0^T \left(\sum_{k\in\mcK}\pi_k\,\int_{\R} \hat{\Gamma}_t^{\cdot,k,\star} \hat{\mu}_t^{{(k)}}(dx) - \sum_{k\in\mcK}\pi_k\,\int_{\R} \Gamma_t^{\cdot,k,\star} \mu_t^{ {(k)}}(dx)\right) (S_t^{\bm{\mu}} - S_t^{\bm{\hat{\mu}}}) \, dt =0. \label{eq:contradiction_uniqueness_alternative}
\end{multline}
The notation $\alpha^{\cdot,k}$ portrays the control of a representative agent in sub-population $k$. The final equality holds due to the definition of the optimal generation and the price. The inequality \eqref{eq:contradiction_uniqueness_alternative} forms a contradiction as it implies that $\alpha_t^{\cdot,k,\star} = \hat{\alpha}_t^{\cdot,k, \star}$ for an arbitrary firm in sub-population $k$, for all $k \in \mcK$, which contradicts that there is more than one solution to the MFG. Therefore, uniqueness follows for the auxiliary problem, and hence, the original problem as well.

\subsection{Proof of \Cref{prop:nash_epsilon_rate}}\label{Proof:prop:nash_epsilon_rate}
We begin with the LHS of \eqref{eqn:first-ep-Nash-bound}:
\begin{align}
 \left\lvert J^{i} (g, \Gamma, \bm{g}^{-i,\star} , \bm{\Gamma}^{ -i, \star}) - \overline{J^i}(g, \Gamma; \bm{\mu})
    \right\rvert 
    &=  \left\lvert \EE \left[ \int_0^T \Gamma_t \,(S_t^{\mu^{[N]},-i}- S_t^{\bm{\mu}} )\,dt \right] \right\rvert \nonumber 
    \\
    &\quad \leq \left(\EE \left[ \int_0^T \left\lvert\Gamma_t \right\rvert^2  dt \right]\right)^{\frac{1}{2}} 
    \left(\EE \left[ \int_0^T \left\lvert S_t^{\bm{\mu}} - S_t^{\mu^{[N]},-i}  \right\rvert^2 dt \right]\right)^{\frac{1}{2}}
    \nonumber
    \\
    &\quad \leq M_1\, \left(\EE \left[ \int_0^T \left\lvert S_t^{\bm{\mu}} - S_t^{\mu^{[N]},-i}  \right\rvert^2 dt \right]\right)^{\frac{1}{2}},
    \label{eq:cost_fun_diff}
\end{align}
for some constant $M_1$. The last inequality follows as $\Gamma$ is admissible, and hence square integrable. 

Next, we focus on bounding the remaining expectation in \eqref{eq:cost_fun_diff}.
By definition of $S_t^{\bm{\mu}}$ in \eqref{eq:equilibrium_price_mkv},  $S_t^{\bm{\mu}}$ is insensitive to agent-$i$ deviating from their optimal control, and depends only on the mean field distribution of states across agents (implicitly assuming that all other agents are acting optimally). This observation coupled with the form of $S_t^{\mu^{[N]},-i}$ in \eqref{eqn:finite-S-without-agent-i} implies
\begin{equation}
\EE \left[ \int_0^T 
    \left\lvert S_t^{\bm{\mu}} - S_t^{\mu^{[N]},-i} \right\rvert^2 dt \right] = \EE \left[ \int_0^T \left\lvert 
    S_t^{\bm{\mu}}
    - B_t - C_t \right\rvert^2 dt \right],
\end{equation}
where,
\begin{align}
    B_t&:=P\frac{\displaystyle\sum_{k\in\mcK} \frac{ |\mfN_k^{-i}|}{N \gamma^k} \sum_{j\in\mfN_k^{-i}} \frac{Y^{(k)}(t, X_t^{j,-i}; \bm{\mu})}{|\mfN_k^{-i}|}}
    {\displaystyle\sum_{k\in\mcK} \frac{|\mfN_k^{-i}|}{N\gamma^k}}\,,
    \quad \text{and} \quad
    &C_t:=\frac{\displaystyle\frac{\Gamma_t}{N}}
     {\displaystyle\sum_{k\in\mcK} \frac{|\mfN_k^{-i}|}{N\gamma^k}}\,.
\end{align}
Hence,
\begin{align}
\EE \left[ \int_0^T 
    \left\lvert S_t^{\bm{\mu}} - S_t^{\mu^{[N]},-i} \right\rvert^2 dt \right] &\quad\leq 2\,\EE\left[\int_0^T (S_t^{\bm{\mu}} - B_t)^2 dt \right] + 2\,\EE\left[\int_0^T C_t^2 dt \right] \nonumber \\
    &\quad= 2\,\EE\left[\int_0^T (S_t^{\bm{\mu}} - B_t)^2 dt \right] + o\left(\tfrac{1}{N^2}\right)\,.
\end{align}

Next, we find the rate at which  the remaining expectation above tends to  zero, and in particular, show it is $o(\delta_N^2) + o\left(\frac{1}{N}\right)$. To proceed, first, define 
\begin{align}
    \theta_k &:= \frac{P}{\gamma^k N}, 
    &\rho_k &:=\frac{P\pi_k}{\gamma^k}, \\ 
    \psi &:= \sum_{k \in \mcK} \tfrac{|\mfN_k^{-i}|}{N \gamma^k},\quad \text{and}
    &\beta &:= \sum_{k \in \mcK} \tfrac{\pi_k}{\gamma^k}\,.
\end{align}
 We then have
\begin{equation}\label{eq:cost_difference}
    \EE\left[\int_0^T (S_t^{\bm{\mu}}- B_t)^2 dt \right] 
     = \EE\left[\int_0^T\left( \sum_{k\in\mcK}\left\{ \frac{\theta_k}{\psi} \sum_{j\in \mfN_k^{-i}} Y^{(k)}(t, X_t^{j,-i}; \bm{\mu}) -  \frac{\rho_k}{\beta} \int Y^{(k)}(t, x; \bm{\mu}) \,\mu_t^{(k)}(dx)\right\}\right)^2 dt \right].
\end{equation}

 Next, proceeding similarly to \cite{huang2006large}, we have
\begin{align}
\EE\left[\int_0^T (S_t^{\bm{\mu}}- B_t)^2 dt \right] 
    &= 
    \EE\biggl[\int_0^T\biggl(\sum_{k\in\mcK} \tfrac{\theta_k}{\psi} \sum_{j\in\mfN_k^{-i}} Y^{(k)}(t, X_t^{j,-i}; \bm{\mu}) - \sum_{k\in\mcK} \tfrac{\theta_k}{\psi} \sum_{j\in\mfN_k^{-i}}  Y^{(k)}(t, X_t^{j}; \bm{\mu}) 
    \nonumber \\
    &\hspace{5em}
    +\sum_{k\in\mcK} \tfrac{\theta_k}{\psi} \sum_{j\in\mfN_k^{-i}}  Y^{(k)}(t, X_t^{j}; \bm{\mu}) - \sum_{k\in\mcK} \tfrac{\rho_k}{\beta} \int Y^{(k)}(t, x; \bm{\mu}) \mu_t^{(k)}(dx)\biggr)^2 dt \biggr] 
    \nonumber \\
     &\quad \leq 
     2\,\EE\left[\int_0^T \left(D_t^{(1)}\right)^2 dt\right] 
    +2\,\EE\left[\int_0^T\left(
    D_t^{(2)}\right)^2 dt \right] \,, \label{eq:epsilon_nash_intermediate}
\end{align}
where,
\begin{subequations}
\begin{align}
    D_t^{(1)}&:=\sum_{k\in\mcK} \tfrac{\theta_k}{\psi} \sum_{j\in\mfN_k^{-i}} \left(Y^{(k)}(t, X_t^{j, -i}; \bm{\mu}) -  Y^{(k)}(t, X_t^{j}; \bm{\mu})\right)\,,
    \quad \text{and}
    \\
    D_t^{(2)}&:=\sum_{k\in\mcK} \left[\tfrac{\theta_k}{\psi} \sum_{j\in \mfN_k^{-i}}  Y^{(k)}(t, X_t^{j}; \bm{\mu}) - \tfrac{\rho_k}{\beta} \int Y^{(k)}(t, x; \bm{\mu}) \mu_t^{(k)}(dx)\right]\,.
\end{align}
\end{subequations}
The process $Y_t^j = Y^{(k)}(t, X_t^{j}; \bm{\mu})$ is the adjoint process evaluated  along the state of  agent-$j$ in the \textit{infinite}-player game, i.e., their SREC inventory is obtained using the (infinite-player limit) deterministic price path of $\eqref{eq:equilibrium_price}$, as agent-$i$'s deviation from optimality has no impact to the mean field.

We next  bound each of the terms in \eqref{eq:epsilon_nash_intermediate} separately.
First, 
\begin{align*}
\EE\left[\int_0^T (D_t^{(1)})^2 dt\right] 
&\leq K\,\EE \left[ \int_0^T \sum_{k \in\mcK} \left( \frac{\theta_k}{\psi}\left( \sum_{j\in\mfN_k^{-i}}\left(Y^{(k)}(t, X_t^{j,-i}; \bm{\mu}) - Y^{(k)}(t, X_t^{j}; \bm{\mu})\right) \right) \right)^2 dt \right]
\nonumber 
\\
&\quad \leq K \sum_{k\in\mcK} \frac{\theta_k^2}{\psi^2}\,|\mfN_k^{-i}|\, \EE \left[ \int_0^T  \,\sum_{j\in\mfN_k^{-i}} \left(Y^{(k)}(t, X_t^{j,-i}; \bm{\mu}) - Y^{(k)}(t, X_t^{j}; \bm{\mu})  \right)^2 dt \right]
\nonumber
\\ 
&\quad \leq K \sum_{k\in\mcK} \frac{\theta_k^2}{\psi^2}|\mfN_k^{-i}| \EE \left[ \int_0^T  \sum_{j\in\mfN_k^{-i}}  (L^{(k)})^2 \left(X_t^{j,-i}- X_t^{j}\right)^2  dt \right]
&
\begin{minipage}{0.12\textwidth}
\text{\scriptsize(Lipschitz cont.}\\
\text{\scriptsize\Cref{rmk:markov_lipschitz_adjoint})}
\end{minipage}
&
\nonumber
\\
&\quad= o(\tfrac{1}{N}).
\nonumber
\end{align*}

The last  equality follows from the definition $\theta_k=P/\gamma^k N$, from $|\mfN_k^{-i}|=N_k-\Id_{i\in\mfN_k}$, and from $X_t^{j, -i}\xrightarrow{\;N \rightarrow \infty\;} X_t^j$, for all $j \neq i$. The rate of this convergence is unimportant, as $\EE\left[\int_0^T (D_t^{(1)})^2 dt\right]$ converges to zero at least as fast as $\frac{1}{N}$. In the sequel, we  show that the rate of convergence for $\EE\left[\int_0^T (D_t^{(2)})^2 dt\right]$ is $\frac{1}{N}$, thus, the rate of convergence we find above is sufficiently fast for our purposes.

Next, we have
\begin{equation}
\EE \left[\int_0^T  \left(D_t^{(2)}\right)^2 dt\right] 
=\EE\left[ \int_0^T \left(G^{(1)}_t + G_t^{(2)} \right)^2 dt\right],
\label{eqn:D2-decomp-to-G1-G2}
\end{equation}
by adding and subtracting the term $ \sum_{k\in\mcK} \frac{\theta_k}{\psi}|\mfN_k^{-i}| \int Y^{(k)}(t, x; \bm{\mu}) \mu_t^{(k)}(dx)$, where,
\begin{subequations}
\begin{align}
G_t^{(1)}&:=
\sum_{k\in\mcK} \tfrac{\theta_k}{\psi} \sum_{j\in\mfN_k^{-i}} \left\{ Y^{(k)}(t, X_t^{j}; \bm{\mu}) -  \int Y^{(k)}(t, x; \bm{\mu}) \mu_t^{(k)}(dx)\right\}, \quad\text{and}
\\
G_t^{(2)}&:=
\sum_{k\in\mcK}  \left( \tfrac{\theta_k}{\psi}\, |\mfN_k^{-i}|
- \tfrac{\rho_k}{\beta} \right)
\int Y^{(k)}(t, x; \bm{\mu}) \mu_t^{(k)}(dx). \label{eq:nash_eps_g1_g2}
%
\end{align}%
\end{subequations}%
To bound $G_t^{(2)}$, we have
\begin{align}
    G_t^{(2)} &= \sum_{k\in\mcK}
    \left( \left(\frac{\theta_k}{\psi}\, |\mfN_k^{-i}|
    - \frac{\theta_k}{\beta} |\mfN_k^{-i}|\right) + \left(\frac{\theta_k}{\beta} |\mfN_k^{-i}| - \frac{\rho_k}{\beta} \right)\right)
    \int Y^{(k)}(t, x; \bm{\mu}) \mu_t^{(k)}(dx)
\\
\begin{split}
&= \sum_{k\in\mcK}
    \left( \frac{\theta_k}{\psi\,\beta}\,|\mfN_{k}^{-i}|\, \sum_{k'\in\mcK}\tfrac{1}{\gamma_{k'}}\left(\pi_{k'}-\tfrac{|\mfN_{k'}^{-i}|}{N}\right)
    \right. 
    \\
&\hspace{4em}    \left.
    + \frac{P/\gamma^k}{\sum_{k''\in\mcK}\frac{\pi_{k''}}{\gamma^{k''}}}\left(\tfrac{|\mfN_k^{-i}|}{N} - \pi_k\right)\right)
    \int Y^{(k)}(t, x; \bm{\mu}) \mu_t^{(k)}(dx).
\end{split}
\end{align}
As  $Y^{(k)}$ is bounded by its very definition,  each integral is finite.  Therefore, 
\[
|G_t^{(2)}|\le C\,\left|\tfrac{|\mfN_k^{-i}|}{N} - \pi_k\right|
= o(\delta_N). 
\]
for some $C>0$. If $o(\delta_N)$ is faster than $o(N^{-1})$, then we could replace this with $o(N^{-1})$. However, for the sake of easier notation, as well as conservatism in our rate, we ignore this.

%
%
 Referring back to \eqref{eqn:D2-decomp-to-G1-G2}, we obtain
\begin{equation}
\label{eqn:D2-bound-with-G1-and-delta}
    \EE \biggl[\int_0^T (D_t^{(2)})^2 dt\biggr] 
    \leq 2\, \EE\biggl[\int_0^T \bigl( G_t^{(1)} \bigr)^2 dt \biggr] + o(\delta_N^2)\,.
\end{equation}
All that remains is to find the rate of convergence for the first term on the rhs. To this end,
\begin{align}
    &\EE\biggl[\int_0^T \bigl( G_t^{(1)} \bigr)^2 dt \biggr] \nonumber \allowdisplaybreaks 
    \\
    &\quad
    =\EE\biggl[\int_0^T \biggl(\sum_{k\in\mcK} \tfrac{\theta_k}{\psi} \sum_{j \in \mfN_k^{-i}}\biggl\{Y^{(k)}(t, X_t^{j}; \bm{\mu}) -  \int Y^{(k)}(t, x; \bm{\mu}) \mu_t^{(k)}(dx)\biggr\} \biggr)^2 dt \biggr] \;\qquad\mbox{\scriptsize{(Defn.)}} \nonumber  \allowdisplaybreaks 
    \\ 
    &\quad \leq \frac{K}{\psi^2}\,
    \EE\biggl[\int_0^T \sum_{k\in\mcK}  \theta_k^2  \biggl(\sum_{j \in \mfN_k^{-i}} \biggl\{Y^{(k)}(t, X_t^{j}; \bm{\mu}) -  \int Y^{(k)}(t, x; \bm{\mu}) \mu_t^{(k)}(dx)\biggr\}\biggr)^2 dt \biggr] \nonumber  \allowdisplaybreaks \;\qquad\mbox{\scriptsize{(Cauchy-Schwarz)}}
    \\ 
    &\quad=\frac{K}{\psi^2}\EE\biggl[\int_0^T \sum_{k\in\mcK}  \theta_k^2  \sum_{j \in \mfN_k^{-i}} \biggl( Y^{(k)}(t, X_t^{j}; \bm{\mu}) -  \int Y^{(k)}(t, x; \bm{\mu}) \mu_t^{(k)}(dx)\biggr)^2 dt \biggr] \nonumber  \allowdisplaybreaks \;
\end{align}
The final equality above follows by (i) $Y^{(k)}(t, X_t^j; \bm{\mu})$ and $Y^{(k)}(t, X_t^l; \bm{\mu})$ are independent for $j \neq l$, as they are the adjoint processes evaluated along the states of different agents (with their own idiosyncratic noise) in the \textit{infinite}-player game; (ii) the expectation is over agent-$i$'s time zero $\sigma$-algebra, which excludes other agents' state, thus, $\EE[Y^{(k)}(t, X_t^j; \bm{\mu})] = \int Y^{(k)}(t,x;\bm{\mu})\mu_t^{(k)}(dx)$ for $j \neq i$. As the summation over $j\in\mfN_k^{-i}$ explicitly excludes $i$, all  cross-terms have zero expectations. Continuing on we have,
\begin{align}  
\EE\biggl[\int_0^T \bigl( G_t^{(1)} \bigr)^2 dt \biggr]
    &=\frac{K}{\psi^{2}}\EE\biggl[\int_0^T \sum_{k\in\mcK}  \theta_k^2  \sum_{j \in \mfN_k^{-i}} \left( \int \bigl[Y^{(k)}(t, X_t^{j}; \bm{\mu}) - Y^{(k)}(t, x; \bm{\mu})\bigr] \mu_t^{(k)}(dx)\right)^2  dt \biggr] \nonumber  \allowdisplaybreaks 
    \\ 
    &\quad\leq\frac{K}{\psi^{2}}\EE\biggl[\int_0^T \sum_{k\in\mcK}  \theta_k^2  \sum_{j \in \mfN_k^{-i}} \left( \int \bigl[Y^{(k)}(t, X_t^{j}; \bm{\mu}) - Y^{(k)}(t, x; \bm{\mu})\bigr]^2\, \mu_t^{(k)}(dx)\right)  dt \biggr] \nonumber  \allowdisplaybreaks &  
    \\ 
    &\quad\leq\frac{K}{\psi^{2}}
    \EE\biggl[\int_0^T \sum_{k\in\mcK}  \theta_k^2  \sum_{j \in \mfN_k^{-i}}  \int \bigl[L^{(k)} (x - X_t^j)\bigr]^2 \mu_t^{(k)}(dx)  dt \biggr] \nonumber  \allowdisplaybreaks &\begin{minipage}{0.12\textwidth}
\text{\scriptsize(Lipschitz cont. }\\
\text{\scriptsize\Cref{rmk:markov_lipschitz_adjoint})}
\end{minipage}
    \\ 
    &\quad=2 \frac{K}{\psi^{2}} \sum_{k\in\mcK}  (\theta_k\,L^{(k)})^2  \sum_{j \in \mfN_k^{-i}}  \text{Var}\left[X_t^j\right] 
    \nonumber
    \\
    &\quad=o\left(\tfrac{1}{N}\right)
    \nonumber
\end{align}

Putting this result together with \eqref{eqn:D2-bound-with-G1-and-delta}, $\EE [\int_0^T (D_t^{(2)})^2 dt]$ converges to zero at the rate $\frac{1}{N} + \delta_N^2$.

Following the chain of expressions back up to \eqref{eq:cost_fun_diff}, shows that the rate specified in the statement of the proposition holds.

\section{Numerical solution algorithm for McKean-Vlasov FBSDE} \label{appendix:algorithm}

In this appendix, we detail the steps of our numerical algorithm to solve \eqref{eqn:FBSDE-full}. We first comment on the general structure of our solution to provide an intuition for the methodology used, and then describe the algorithm in great detail.

Recall that we aim to find a mean field distribution $\bm{\mu}$ and a progressively measurable triple $(X_t^i, Y_t^i, Z_t^i)$ that solves \eqref{eqn:FBSDE-full} such that the optimal controls given the choice of $\bm{\mu}$ imply a controlled state process $X_t^i$ over $i$ with the distribution $\bm{\mu}$. Mathematically, this means our solution should have $\mu_t^{(k)}$ coinciding with $\mathcal{L}(X_t^i)$ for all $i \in \mcK$, for all $k \in \mcK$. The $\bm{\mu}$ that does so is a fixed point. At the fixed point, we replace $\mu_t^{(k)}$ with $\mathcal{L}(X_t^i)$, and turn \eqref{eqn:FBSDE-full} into a McKean-Vlasov FBSDE.

Our implementation to solve this fixed point problem is a simple iterative scheme. We begin with an initial guess for $\bm{\mu}$ (denoted by $\bm{\mu}^{(0)}$) and solve \eqref{eqn:FBSDE-full}. From this, we fully characterize the firm's optimal controls through \eqref{eq:optG} and \eqref{eq:optGamma} as well as the price process $S_t^{\bm{\mu}}$. These quantities imply a distribution of $X_t^i$ when all firms are behaving according to their mean field strategy, which we denote as $\bm{\mu}^{(1)}$. If $\bm{\mu}^{(0)}$ and $\bm{\mu}^{(1)}$ are sufficiently close, then we have obtained the solution we seek. If not, we repeat the process, solving the FBSDE given $\bm{\mu}^{(1)}$. This process continues until we have convergence between the mean field distributions, at which point we output the results.

This process is summarized in \Cref{fig:scheme_diagram}.

\tikzstyle{block1} = [rectangle, draw, thick,fill=blue!10, text width = 2.5cm, text centered, rounded corners, minimum width = 2.5cm]
\tikzstyle{block2} = [rectangle, draw, thick,fill=blue!10, text width = 5.5cm, text centered, rounded corners, minimum width = 5.5cm]
\tikzstyle{block3} = [rectangle, draw, thick,fill=blue!10, text width = 6.6cm, text centered, rounded corners, minimum width = 6.6cm]
\tikzstyle{line} = [draw, -latex']
\begin{figure}[h]
    \centering
    \begin{tikzpicture}
        \node [block2] at (-4, 0) (init) {Initialize mean field distributions\\ $\bm{\mu^{(q)}} \coloneqq \{(\mu_t^{(q), k})_{t\in\mfT}\}_{k \in \mcK}$};
        \node [block2] at (4, 0) (solve) {Solve \eqref{eqn:FBSDE-full} given $\bm{\mu^{(q)}}$, for all $k \in \mcK$. \\
        Output $S_t^{\bm{\mu}}, X_t^i, Y_t^i, g_t^{i, \star}, \Gamma_t^{i, \star}$};
        \node [block3] at (4, -4) (law) {Set $\mu_t^{(q+1), k} = \mathcal{L}(X_t^i)$ for all $k \in \mcK$ \\
        Set $\bm{\mu^{(q+1)}} = \{(\mu_t^{(q, k})_{t\in\mfT}\}_{k \in \mcK}$};
        \node [block1] at (-5, -4) (solution) {Output controls, mean field distribution, and price path};
        \draw[line] (init.east)   -- (solve) node[at start, above right] {};
        \draw[line] ([xshift = +1cm]solve.south) -- ([xshift = +1cm]law.north) {};
        \draw[dashed,-latex] (law.west) -- (solution) node[above, xshift = 3.5cm] {\footnotesize if $\bm{\mu^{(q+1)}}$,  $\bm{\mu^{(q)}}$  close};
        \draw[dashed,-latex] ([xshift = -1cm]law.north) -- node[above = 5pt, left=3pt, align=center]{\footnotesize if $\bm{\mu^{(q+1)}}$,  $\bm{\mu^{(q)}}$ not close \\ \footnotesize set $q = q+1$} ([xshift = -1cm]solve.south) {};
    \end{tikzpicture}
    \caption{Numerical Scheme Diagram} \label{fig:scheme_diagram}
\end{figure}
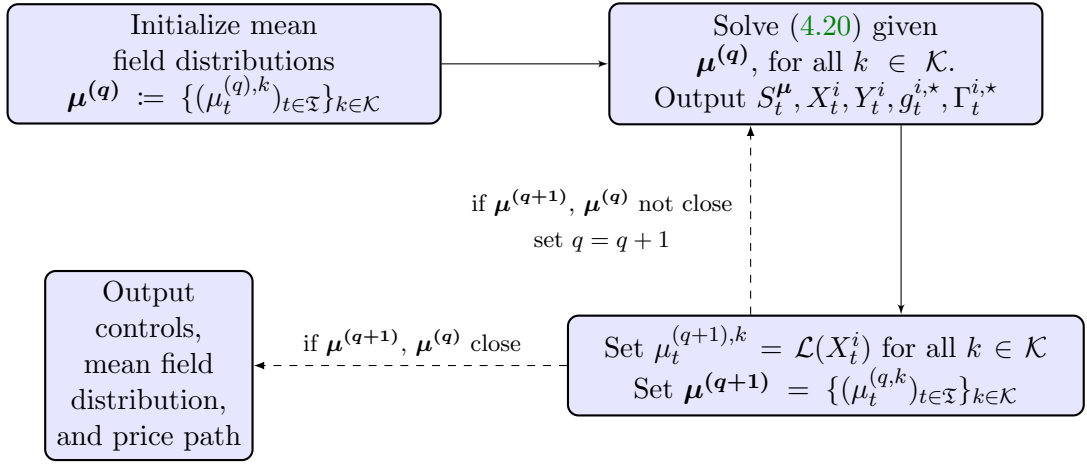

We expand upon this basic explanation below. 
\begin{enumerate}
\item \textbf{Initialization and Setup}

\begin{enumerate}[label=\emph{\Alph*.}]
    \item Set $q = 0$. Define $\epsilon > 0$ as a tolerance limit.
    \item Choose a grid of $X$ (denoted by $\mathfrak{X} :=\{x_1, x_2, ..., x_d\}$), and a grid of time-steps $\mathfrak{T^\prime} := \{0 = t_1, t_2, ..., t_m = T\}$, where $t_j = j \Delta t$.  
    \item For all $k \in \mcK$, choose an initial distribution for the state of agents at $t = 0$ (denoted by $\xi^k$). 
    
    \textbf{Note:}The algorithm proceeds in parallel for all sub-populations, so we omit `for all $k \in \mcK$' henceforth. 
    \item Initialize $\bm{\mu}$. We assume that all agents generate at their baseline $h_t^k$ and do not take part in the SREC market (trading is 0). This assumption, along with the initial distribution of SRECs implied by $\xi^k$, implies a distribution of SRECs across agents at each time-step (for each sub-population) through \eqref{eq:state_dynamics} which is denoted by $\bm{\mu}^{(0)}$.
    \item Initialize $Y_{t_j}$ for all $t_j \in \mfT^\prime$ and all $x \in \mathfrak{X}$ under the assumption that all agents generate at their baseline $h_t^k$ and do not take part in the SREC market. 
\end{enumerate}

\item \textbf{Iteratively update} \label{step:iterate}
    \begin{enumerate}[label=\emph{\Alph*.}]
    \item Let $q = q + 1$
    \item Update $Y_t$:

    For $t_j \in \mfT^\prime$, from $T$ to $0$:

\begin{itemize}
    \item Compute the mean and variance of $X_{t_{j+1}} | X_{t_j} = x$ through \eqref{eq:HetFBSDE_fwd} using the previous update of $Y_t$, $\bm{\mu}^{(q-1)}$.
    \item For all $x_u \in \mathfrak{X}$, assume that $Y_{t_{j+1}} = Y^{(k)}(t_{j+1}, x_u; \bm{\mu}^{(q-1)})$ is locally linear in $x_u$. Use this (in conjunction with \eqref{eq:HetFBSDE_bwd} and \eqref{eq:state_dynamics}) to analytically calculate $Y_{t_j}^k$ for each point $x \in \mathfrak{X}$.
\end{itemize}

\item Update the mean field distribution to find $\bm{\mu}^{(q)}$:

For $t_j \in \mfT^\prime$, from $0$ to $T$:
For $k \in \mcK$
\begin{itemize}
    \item Use the updated $Y_{t_j}$ along with \eqref{eq:HetFBSDE_fwd} to characterize the mean and variance $\mu_{t_j}^{(q), k}$
\end{itemize}

Denote $\bm{\mu}^{(q)} = \{(\mu_{t_j}^{(q), k})_{t\in \mfT^\prime}\}_{k \in \mcK}$

\item Calculate $D \coloneqq d(\bm{\mu}^{(q)}, \bm{\mu}^{(q-1)})$ for some notion of distance between measures, $d$. 

If $D > \epsilon$, return to \Cref{step:iterate}.

Else, proceed to \Cref{step:output}. 
    \end{enumerate}
\item \textbf{Output} \label{step:output}

Use the calculated values to compute \eqref{eq:gen_nec_cond}, \eqref{eq:trade_nec_cond}, \eqref{eq:equilibrium_price}.
\end{enumerate}

\end{document}